\pgfplotsset{compat=1.7}
\newcommand{\bbE}{{\ensuremath{\mathbb E}} }
\newcommand{\bbF}{{\ensuremath{\mathbb F}} }
\newcommand{\bbQ}{{\ensuremath{\mathbb Q}} }
\newcommand{\cB}{{\ensuremath{\mathcal B}} }
\newcommand{\cC}{{\ensuremath{\mathcal C}} }
\newcommand{\cF}{{\ensuremath{\mathcal F}} }
\newcommand{\cL}{{\ensuremath{\mathcal L}} }
\newcommand{\cN}{{\ensuremath{\mathcal N}} }
\newcommand{\cR}{{\ensuremath{\mathcal R}} }
\newcommand{\cS}{{\ensuremath{\mathcal S}} }
\newcommand{\cT}{{\ensuremath{\mathcal T}} }
\newcommand{\dd}{{\ensuremath{\mathrm d}} }
\newcommand{\de}{{\ensuremath{\mathrm e}} }
\newcommand{\dC}{{\ensuremath{\mathrm C}} }
\newcommand{\R}{\mathbb{R}}
\newcommand{\N}{\mathbb{N}}
\newcommand{\ind}{\ensuremath{\mathbf{1}}}
\DeclarePairedDelimiter{\abs}{\lvert}{\rvert}
\DeclarePairedDelimiterX{\inprod}[2]{\langle}{\rangle}{#1, #2}
\newlength\myindent
\numberwithin{equation}{section}
\newtheorem{theorem}{Theorem}[section]
\newtheorem{corollary}[theorem]{Corollary}
\newtheorem{lemma}[theorem]{Lemma}
\newtheorem{proposition}[theorem]{Proposition} 
\newtheorem{remark}[theorem]{Remark}
\newtheorem{assumption}[theorem]{Assumption}
\begin{document} 

	\title{\textbf{Short-rate models with stochastic discontinuities:\\
 a PDE approach}}

\author[1]{Alessandro Calvia \thanks{alessandro.calvia@polimi.it}}

\author[2]{Marzia De Donno \thanks{marzia.dedonno@unicatt.it}}

\author[3]{Chiara Guardasoni \thanks{chiara.guardasoni@unipr.it}}

\author[4]{Simona Sanfelici \thanks{simona.sanfelici@unipr.it}}

\affil[1]{\footnotesize Department of Mathematics, Politecnico di Milano. Member of the INdAM-GNAMPA group.}

\affil[2]{\footnotesize  School of Banking, Finance and Insurance, Universit\`a Cattolica del Sacro Cuore, Milano.}

\affil[3]{\footnotesize Department of Mathematical Physical and Computer Sciences, Universit\`a di Parma. Member of INdAM-GNCS group.} 

\affil[4]{\footnotesize Department of Economics and Management, Universit\`a di Parma. Member of INdAM-GNCS group.} 
\maketitle

\textwidth=160 mm \textheight=225mm \parindent=8mm \frenchspacing
\vspace{3 mm}

\begin{abstract}
With the reform of interest rate benchmarks, interbank offered rates (IBORs) like LIBOR have been replaced by risk-free rates (RFRs), such as the Secured Overnight Financing Rate (SOFR) in the U.S. and the Euro Short-Term Rate (\euro STR) in Europe. These rates exhibit characteristics like jumps and spikes which correspond to specific market events, driven by regulatory and liquidity constraints. To capture these characteristics, this paper considers a general short-rate model that incorporates discontinuities at fixed times with random sizes. Within this framework, we introduce a PDE-based approach for pricing interest rate derivatives and  establish, under suitable assumptions, a Feynman-Kač representation for the solution. For affine models, we derive (quasi) closed-form solutions, while for the general case, we develop numerical methods to solve the resulting PDEs. 

\medskip 

\noindent\textbf{Keywords:} Overnight interest rate; stochastic discontinuities; interest rate derivatives; PDE approach; affine models; Green's function; finite-difference method.

\medskip 

\noindent\textbf{2010 Mathematics Subject Classification:} 35Q91, 60H30, 91G20, 91G30, 91G60.

\end{abstract}

\section{Introduction}

In recent years, the global financial system has undergone a significant transformation in the way interest rates are benchmarked. 
The discontinuation of the London Interbank Offered Rate (LIBOR){, culminating in the \textit{LIBOR funeral} -- as termed in~\cite{klingler2021life} -- with the publication of the last LIBOR settings\footnote{See the \textit{FCA announcement on future cessation and loss of representativeness of the LIBOR
benchmarks}, available at \url{https://www.fca.org.uk/publication/documents/future-cessation-loss-representativeness-libor-benchmarks.pdf}, and the \textit{Decisions on US dollar LIBOR}, available at \url{https://www.fca.org.uk/publication/feedback/fs23-2.pdf}.} at the end of June~2023,} has led to the widespread adoption of alternative overnight rates, known as Risk-Free Rates (RFRs). {Examples are the \textit{Secured Overnight Financing Rate} (SOFR) in the United States of America, the \textit{Euro Short-Term Rate} (€STR) in Europe; the \textit{Sterling Overnight Index Average} (SONIA) in the United Kingdom, the \textit{Tokyo Overnight Average Rate} (TONA) in Japan, and the \textit{Swiss Average Rate Overnight} (SARON) in Switzerland. For a description and an analysis from the economic point of view of the first three risk-free rates, see, e.g., \cite{klingler2021life}.}

{The transition from LIBOR and other IBORs to risk-free rates (RFRs) was driven by structural weaknesses and credibility issues that became evident after the 2008 global financial crisis, when repeated LIBOR manipulations were discovered. The central issue was that the publication of LIBOR rates, which were intended to measure the unsecured short-term interbank borrowing cost, was based on banks’ self-reported estimates rather than actual market transactions. This survey-based fixing mechanism made the rate vulnerable to manipulation, which occurred both for profit-driven purposes related to derivative positions and, during the 2008 financial crisis, to conceal banks’ liquidity and credit risk conditions.
The LIBOR scandal revealed not only misconduct by individual institutions, but also a fundamental flaw in the design of the benchmark itself, which undermined its reliability as a reference rate for financial markets. In response, regulatory authorities initiated a comprehensive reform\footnote{See, e.g., the document \textit{Reforming Major Interest Rate Benchmarks}, Financial Stability Board, July 2014, available at \url{https://www.fsb.org/2014/07/r_140722/}, and the subsequent \textit{Progress report}, December 2019, available at \url{https://www.fsb.org/2019/12/reforming-major-interest-rate-benchmarks-progress-report-2/}.} of benchmark interest rates, aiming to restore market integrity and financial stability.
Risk-free rates were introduced as a more robust alternative because they are based on observable, high-volume overnight transactions and entail virtually zero credit risk. Their transaction-based nature makes them more transparent and far less susceptible to manipulation.}

The shift toward overnight RFRs as the primary observable in financial markets has significant implications for the modeling of interest rate dynamics. Traditional term structure models, which relied on longer-maturity instruments, must now be recalibrated to incorporate the properties of these \mbox{short-maturity} rates. As a result, there has been a renewed interest in short-rate models, adapted to the new framework dictated by the post-LIBOR era. 

{ A first approach to modeling term rates based on RFRs was introduced in \cite{lyashenko2019looking}, who propose the generalized Forward Market Model (FMM) as an extension of the classical Libor Market Model, allowing for the joint modeling of forward-looking and backward-looking rates within a single, unified framework. The FMM has subsequently been developed in several contributions, including \cite{lyashenko2020, lopezsalas24}. Other authors have instead developed extensions of classical short-rate models to the RFR setting; see, among others, \cite{rutkowski2021pricing, mercurio2018multi, russo2023transition, skov2021dynamic}. Additional references can be found in \cite{fontanaetal2024:termstruct, liu2024risk}.}

A key feature in the dynamics of RFRs is the presence of jumps or spikes, particularly around central bank announcements and regulatory constraints. In the context of SONIA and SOFR,  \cite{BackwellHayes2022}  develop jump-diffusion models that differentiate between expected and unexpected jumps, while \cite{Andersen02082024} model spikes occurring both at predictable and inaccessible times, with specific applications to SOFR-linked derivatives.
However, the analysis by \cite{anbiletal2020} shows that such discontinuities often occur at deterministic dates, especially those aligned with central bank policy meetings or structural market features. Jumps driven by monetary policy decisions were first incorporated into interest rate modeling by \cite{piazzesi2005bond}, and later by \cite{kim2014jumps}, who coined the term \emph{stochastic discontinuities} to describe jumps of random magnitude occurring at predetermined times.
In the context of RFR modeling, several contributions have proposed tractable frameworks that incorporate such features. Notably,  \cite{Brace2024, DeGenaro2018, gellert2021short, heitfield2019inferring, liu2024risk, skov2021dynamic} focus primarily on the overnight benchmarks SOFR and SONIA, and introduce a variety of modeling approaches, ranging from reduced-form models to affine term structure models and Markov chain approximations.  
More recently, \cite{Angelini2025Modeling}  extend this line of research  to the euro area by proposing an affine short-rate model for the joint evolution of EURIBOR and \euro STR. Moreover, \cite{fusai2024} present a novel tree approach to pricing derivatives linked to new RFR benchmarks in a discrete-time setting, inspired by the analogy with the pricing of Asian options.
A more general continuous-time HJM-based framework is developed in \cite{fontana2020term, fontanaetal2024:termstruct}, where stochastic discontinuities  are modeled through affine semimartingales. As an  example, the authors extend the classical Hull-White model by introducing jumps at deterministic dates with independent random sizes. They derive the characteristic function of the resulting process and use martingale methods to price interest rate derivatives, including zero-coupon bonds  and caplets (see also \cite{Fontana2023}).  { Finally,   \cite{fontanaPS} propose an extension of the CIR model with  stochastic discontinuities, which, under suitable assumptions, allows for both upward and downward jumps, while preserving the affine structure. }
 
 In this paper, we focus on this class of models, that is, short-rate models that incorporate stochastic discontinuities. Our first contribution is to propose an alternative pricing methodology for derivatives, widely used in short-rate frameworks: the partial differential equation (PDE) approach. One key advantage of this method is its independence from the specific functional form of the short-rate dynamics. This allows us to consider a general class of models, encompassing not only affine processes but also more general  structures.
As in \cite{fontanaetal2024:termstruct}, we assume that the num\'eraire evolves continuously between roll-over dates and exhibits jumps at roll-over times. Consequently, our framework features two sources of discontinuity: one stemming from the short-rate process itself and another from the num\'eraire.
We show that, under suitable assumptions, the price of a European-style derivative can be obtained by solving a PDE between jump times and piecing the solutions at jump times using boundary conditions dictated by no-arbitrage arguments. Furthermore, we provide a rigorous analysis of existence, uniqueness, and regularity of the solution to the associated Cauchy problem. Classical PDE results are not directly applicable here due to the unboundedness of the potential term and the exponential growth of the terminal condition. Nevertheless, we succeed in proving a Feynman–Kač representation formula for the solution.
We then apply this \mbox{PDE-based} framework to the affine setting. Under general assumptions, we derive a closed-form solution for the price of a zero-coupon bond (ZCB). Additionally, we obtain a closed-form expression for the price of a European call option on a ZCB in a Vasicek model with stochastic discontinuities, assuming normally distributed jumps.

Our analysis shows that even within the affine class, closed-form pricing formulas may not always be attainable and, when available, may involve cumbersome computations. As a further contribution,  we therefore propose a numerical approach for solving  the PDE. Two different numerical methods are investigated and compared against the available analytical formulas. The first, a semi-analytic method, yields highly accurate results but entails greater computational cost and limited applicability, as it requires the explicit knowledge of the Green’s function. The second, a finite-difference scheme, while slightly less accurate than the semi-analytic approach, still provides a very high level of precision. Moreover, it stands out for its remarkable flexibility and computational efficiency, making it well-suited for practical applications.
In summary, this work highlights the potential of the PDE approach in handling more general interest rate models and derivative contracts. Its flexibility and robustness make it a valuable tool, particularly when closed-form solutions are unavailable or difficult to derive.

The paper is organized as follows. 
In Section 2, we present the short-rate model incorporating stochastic discontinuities. Section 3 derives the PDE formulation for pricing general interest rate derivatives. Section 4 investigates the well-posedness of the PDE and the regularity of its solution, and proves that, under suitable assumptions, it admits a Feynman-Kač representation. In Section 5, we apply the framework to affine models and obtain closed-form solutions in specific cases. Section 6 presents the numerical methods for pricing interest rate derivatives also when analytical solutions are not available. Finally, Section 7 concludes the paper. 

\section{The model}\label{sec:model}
Let $(\Omega, \cF, \bbF, \mathbb{Q})$ be a complete probability space, with filtration $\bbF \coloneqq (\cF_t)_{t\ge 0}$ satisfying the usual assumptions. 
We assume that the probability space is rich enough to support a real standard Brownian motion $W = (W_t)_{t \geq 0}$ and a \mbox{pure-jump} process $J = (J_t)_{t \geq 0}$ given by
\begin{equation}\label{jumpprocess}
J_t \coloneqq \sum_{i=1}^{M} \xi_i \ind_{[s_i, +\infty)} (t) \, , \qquad t \geq 0 \, ,
\end{equation}
where, for a fixed $M \in \N$, $0 < s_1 < \cdots < s_M < +\infty$ are given deterministic jump times and $\xi_1, \dots, \xi_M$ are real random variables (see below for the financial meaning of these quantities), which are assumed to be independent of $W$ and $\cF_{s_i}$-measurable, for each $i = 1, \dots, M$.

{For any given $t \geq 0$ and $x \in \R$, we consider the stochastic differential equation (SDE)
\begin{equation}\label{RFR}
\begin{dcases}
  \dd \rho_s = \mu(s, \rho_s) \, \dd s + \sigma(s, \rho_s) \, \dd W_s + \dd J_s \, , \quad s \geq t \, , \\
  \rho_t = x \, .
  \end{dcases}
\end{equation}
}%
The drift and volatility coefficients $\mu \colon [0, +\infty) \times \mathbb{R} \to \mathbb{R}$ and $\sigma \colon [0, +\infty) \times \mathbb{R} \to \mathbb{R}^+$, are such that a path-wise unique strong solution to~\eqref{RFR} exists, for any given initial condition.
Whenever needed, we will use the notation $\rho^{t,x}$ to denote the solution to~\eqref{RFR} starting at time $t \geq 0$ from $x \in \R$, i.e., $\rho^{t,x}_t = x$.
{We observe that if} random variables $\xi_1, \dots, \xi_M$ are independent, then it can be shown that $\rho$ is a Markov process. This is because the solution to SDE~\eqref{RFR} can be obtained by piecing together the solutions of the corresponding SDE without jumps, with the appropriate initial conditions, between jump times of process $J$.

\begin{remark}\label{rem:noSDEassumpt}
    We do not assume \textit{a priori} any specific conditions on the coefficients of SDE~\eqref{RFR} because, on the one hand, this will not have an influence on the general results that we will prove in Section~\ref{PDEformulation}, and, on the other hand, because from Section~\ref{App:ExUniq} onward we will discuss specific models, arising from particular choices of these coefficients.
\end{remark}

{Process $\rho$ models the evolution of the Risk Free Rate (RFR for short). We associate to it the money market account $S^0$, acting as numéraire, which} is obtained by investing in the overnight rate. For its modeling, we adopt the approach of~\cite{fontanaetal2024:termstruct}, where the authors propose a highly general framework that bridges two perspectives: on one side, the classical interest rate model, where the num\'eraire represents the continuous-time limit of a roll-over strategy; and on the other side, a more realistic approach, where the num\'eraire is piecewise constant with jumps at roll-over dates (see \cite{fontanaetal2024:termstruct} for details).  To this end, given the set of \emph{roll-over dates} $\cT \coloneqq \{t_1, \ldots, t_N \}$, with $0 < t_1 < \cdots < t_N < +\infty$, the num\'eraire is defined as: 
\begin{equation} \label{numeraire}
 S^0_t \coloneqq \exp \left(\int_0^t \rho_u \, \eta(\dd u)\right) = \exp\left\{\int_0^t\rho_u \, \dd u  + \sum_{n=1}^N \rho_{t_n} \ind_{[t_n, +\infty)} (t)\right\} \, , \quad t \geq 0 \, ,
\end{equation}
where measure $\eta$ is defined\footnote{As usual, $\delta_t$ denotes the Dirac probability measure concentrated
at $t \in \R^+$.} as $\eta (A) \coloneqq \int_A \dd u + \sum_{n=1}^N \delta_{t_n}(A)$, for $ A \in \cB(\mathbb{R}^+)$.

{Throughout this paper we work under the following standing assumption on the market model.
\begin{assumption}
    \mbox{}
    \begin{enumerate}[label = \arabic*.]
        \item There exists a market for zero coupon bonds (ZCB for short) with maturity $T$, for any $T > 0$.
        \item\label{hp:risk-neutral} Probability measure $\bbQ$ is a risk-neutral measure, that is, for any fixed maturity $T > 0$ the ZCB price process $(P(t,T))_{t \in [0,T]}$ satisfies the condition $P(T,T) = 1$, and is such that the discounted price process $\left(\frac{P(t,T)}{S^0_t}\right)_{t \in [0,T]}$ is a $\bbQ$-martingale.
    \end{enumerate}
\end{assumption}
Point~\ref{hp:risk-neutral} of the above assumption implies that we work under a martingale modeling approach, which is customary in short-rate models. That is, we directly specify the dynamics of the short rate under a risk-neutral measure. Assuming the existence of such a measure entails that our model is arbitrage-free, thus all ZCB and interest-rate derivatives must be consistently priced within our setting.
}

{We conclude this section with some comments on the model for the short-rate dynamics that we introduced in~\eqref{RFR}.}

In this model, which extends the framework introduced in Section 4.3 of \cite{fontanaetal2024:termstruct}, the RFR may jump at fixed times $s_1, \dots, s_M$, which we collect in the set $\cS \coloneqq \{s_1, \ldots, s_M \}$ of \emph{expected jump dates.}
At these times, the RFR has jumps of magnitude $\xi_i$, $i = 1, \dots, M$. 
{%
Note that, although in our model we consider a finite number of dates where jumps in the RFR or the num\'eraire  may occur, our results can be extended to the case in which both $\cS$ and  $\cT$ are countable sets. 
The choice to focus on a finite number of jump and roll-over dates is justified by two reasons. First, since we aim to price derivatives with a finite payoff horizon, only a finite number of dates are relevant for the pricing procedure. Second, from a calibration perspective, relevant dates are typically known at most 12 months in advance, making it unnecessary to consider a model with a long-term horizon.

This model is sufficiently general to encompass, as special cases, affine models such as the Vasicek and Cox-Ingersoll-Ross (CIR) models, as well as log-normal models like the Black-Karasinski model and the class of CEV models (\cite{Chanetal92:CEV, marshrosenfeld83:CEV}).
Moreover, this model effectively captures both the spikes and jumps typically observed in the dynamics of the RFR. Specifically, low volatility on jump dates results in distinct jumps, while strong mean reversion may give rise to sharp spikes. 

In general, the observed trajectories of the  { RFR} (see, e.g., Figure~1 in \cite{anbiletal2020}) suggest that spikes in the rate can be reasonably modeled using a continuous distribution for their size $\xi_i$. In contrast, jumps typically take values that are integer multiples of a fixed quantity (such as a basis point), which makes a discrete random variable  a more suitable choice to model their size. Indeed, in Section~\ref{sec:simulation}, where we simulate short-rate trajectories under the assumption that the RFR follows a Hull–White process with jumps, we model the jump component using either a Gaussian or a two-point discrete distribution.}

\section{Pricing of derivatives: a PDE formulation}\label{PDEformulation}
In this section we show that, under appropriate assumptions, the price of a European derivative contract written on the spot interest rate $\rho$, with maturity $T > 0$ and terminal payoff $H(\rho_T)$ can be expressed as a function of time and of the RFR, which solves an appropriate PDE.
Conversely, in Section~\ref{App:ExUniq} we also prove a Feynman-Kač representation formula under suitable assumptions.

Throughout the rest of the paper we assume the following. 
\begin{assumption}\label{hp:PDE}
    \mbox{}
    \begin{enumerate}[label=\arabic*.]
        \item The random variables $\xi_1, \dots, \xi_M$, describing the jump sizes of the RFR, are independent, with distribution $Q_1, \dots, Q_M$, respectively. In addition, they are independent of the Brownian motion $W$.
        \item For any $i = 1, \dots, M$, $\xi_i$ is $\cF_{s_i}$-measurable.
        \item\label{hp:PDE:existuniq} SDE~\eqref{RFR} admits a unique strong solution $\rho^{t,x}$, for any given initial condition $(t,x) \in [0,T] \times \R$, and $\rho_u^{t,x}$ has support $\R$, for all $u \in (t,T]$.
        \item\label{hp:PDE:H} The contract function $H \colon \R \to \R$ is measurable and such that
        \begin{equation*}
            \bbE\biggl[\de^{-\int_0^T\rho_s \, \eta(\dd s)} \abs{H(\rho_{T})}\biggr] < +\infty \, .
        \end{equation*}
    \end{enumerate}
\end{assumption}
\noindent As we observed in Section~\ref{sec:model}, Assumption~\ref{hp:PDE} entails that the RFR process $\rho$ is Markovian.

\begin{remark}
\label{rem:weaker_assumpt}
{
    The results of this Section remain valid under the following weaker version of point~\ref{hp:PDE:existuniq} in Assumption~\ref{hp:PDE} (cf. \citep[Chapter~5]{filipovic2009term}). 
\begin{enumerate}[label=\arabic*a.]\setcounter{enumi}{2}
    \item\label{hp:PDE:existuniqweak} 
    SDE~\eqref{RFR} admits a unique strong solution $\rho^{t,x}$, for any given initial condition $(t,x) \in [0,T] \times A$, and $\rho_u^{t,x}$ has support $A$, for all $u \in (t,T]$, where $A \subseteq \R$ is a closed interval with non-empty interior.
\end{enumerate}
    We choose to focus on the case in which the state space of process $\rho$ is the whole real line, i.e., $A=\R$, for simplicity and also for another main reason. In Section~\ref{App:ExUniq} we study the well-posedness of a family of partial differential equations (PDEs), related to the pricing function of the derivative, on the domain $[\tau_0, \tau] \times \R$, where $[\tau_0, \tau) \subset [0,T)$ is a given interval (see Equation~\ref{eq:genericPDE}). As we will discuss later in that Section, to establish existence and uniqueness of a classical solution to the PDE we cannot rely on standard results, because of the presence of a potential coefficient, which is unbounded in $\R$. Indeed, classical results require that this coefficient is bounded from below, an assumption that in our setting would be verified if the above interval $A$ were also bounded from below. This would be the case, e.g., of $A = [0,+\infty)$ (implying a non-negative RFR), which could be analyzed with standard results, instead.
    For the sake of completeness, we will briefly comment whenever needed what changes if we substitute point~\ref{hp:PDE:existuniq} with point~\ref{hp:PDE:existuniqweak} in Assumption~\ref{hp:PDE}. 
}
\end{remark}
Since we are working in an arbitrage-free setting, the price process $V = (V_t)_{t \in [0,T]}$  
of the derivative contract is given by the usual risk-neutral valuation formula
\begin{equation*}
    V_t = \bbE\biggl[\de^{-\int_t^T\rho_s \, \eta(\dd s)} H(\rho_{T}) \biggm| \cF_t\biggr] \, , \quad t \in [0,T] \, .
\end{equation*}
Exploiting the Markovianity of process $\rho$, we can write
\begin{equation*}
    V_t = f(t, \rho_t) \coloneqq \bbE\biggl[\de^{-\int_t^T\rho_s \, \eta(\dd s)} H(\rho_{T}) \biggm| \rho_t\biggr] \, , \quad t \in [0,T] \, ,
\end{equation*}
where the \emph{pricing function} $f \colon [0,T] \times \R \to \R$ is defined as
\begin{equation}\label{eq:pricefunction}
    f(t,x) \coloneqq \bbE\biggl[\de^{-\int_t^T\rho^{t,x}_s \, \eta(\dd s)} H(\rho^{t,x}_{T}) \biggr] \, , \quad t \in [0,T], \, x \in \R \, .
\end{equation}
 
Let us introduce the set $\cR \coloneqq (\cS \cup \cT) \cap [0,T]$ of \textit{relevant dates}, i.e., jump times $r_1 < r_2 < \cdots < r_K$ before maturity $T$, where $K$ is the cardinality of set $\cR$. More precisely, setting $r_0 \coloneqq 0$, we have that
\begin{equation*}
    r_k \coloneqq \min\{t \in (r_{k-1}, T] \colon t \in \cS \cup \cT\}, \quad k \in \{1, \dots, K\} \, .
\end{equation*}
Note that $K \leq M+N$, where $M$ is the number of expected jump dates of the RFR $\rho$ and $N$ is the number of roll-over dates of the numéraire $S^0$. Strict inequality may hold even if all these dates are before maturity $T$, since there may be common jump times between processes $\rho$ and $S^0$.
Moreover, it may also happen that $r_K = T$.

\begin{remark}\label{rem:no_left_cont}
    Note that we do not expect the pricing function $f(t,x)$ to be left-continuous with respect to $t$. This is precisely due to the stochastic discontinuities at the relevant dates. However, it has left-hand limits (still with respect to $t$), as the price process $V$ is càdlàg. This is also clearly seen in the explicit formula for bond prices given in~\citep[Proposition~4.6]{fontanaetal2024:termstruct}, in the case of a Gaussian distribution of jumps at expected jump dates.
\end{remark}

In the following we use the notation $f(t^-,x) \coloneqq \lim_{s \uparrow t} f(s,x)$, for any $(t,x) \in [0,T] \times \R$. 
Let us introduce the operator $\cL$, defined for any $\varphi \in \dC^{1,2}((0,T) \times \R)$ as
\begin{equation}\label{L}
    \cL \varphi(t,x) \coloneqq \partial_t \varphi(t, x) + \mu(t,x) \partial_x \varphi(t, x) + \dfrac 12 \sigma^2(t,x) \partial_{xx}^2 \varphi(t, x), \quad (t,x) \in (0,T) \times \R \, .
\end{equation}
We also introduce the integer-value random measure $m^\cS$ given by
\begin{equation}\label{eq:mS}
m^\cS(\dd s \, \dd z) \coloneqq \sum_{m=1}^M \delta_{(s_m, \xi_m)}(\dd s \, \dd z) \, ,
\end{equation} 
with compensator
\begin{equation}\label{eq:muS}
    \mu^\cS(\dd s \, \dd z) \coloneqq \sum_{m=1}^M \delta_{s_m}(\dd s) \, Q_m(\dd z) \, .
\end{equation}

\begin{theorem}
    Suppose that $f \in \cC([0,T]\setminus \cR \times \R) \cap \cC^{1,2}((0,T) \setminus \cR \times \R)$ and that, for any $x \in \R$, the map $t \mapsto f(t,x)$ is right-continuous. Assume, moreover, that $f$ is such that the process $M^f \coloneqq (M^f_t)_{t \in [0,T]}$ given by
    \begin{multline*}
        M^f_t \coloneqq \int_0^t \sigma(s, \rho_s) \partial_x f(s, \rho_s) \, \dd W_s + \int_0^t \ind_{s \in \cS \setminus \cT} \int_\R \left[f(s, \rho_{s^-} + z) - f(s^-, \rho_{s^-})\right] [m^\cS - \mu^\cS](\dd s \dd z)
        \\ + \int_0^t \ind_{s \in \cS \cap \cT} \int_\R \left[\de^{-(\rho_{s^-} + z)} f(s, \rho_{s^-} + z) - f(s^-, \rho_{s^-})\right] [m^\cS - \mu^\cS](\dd s \dd z)
    \end{multline*}
    is a local martingale. 
    Then, under Assumption~\ref{hp:PDE}, $f$ satisfies the backward  { PDE}s
    \begin{equation}\label{eq:lastPDE}
    \begin{dcases}
        \cL f(u, x) - x f(u, x) = 0 \, , & (u,x) \in [r_K, T) \times \R \, , \\
        f(T,x) = H(x) \, , & x \in \R \, ,
    \end{dcases}
\end{equation}
and, for all $k \in \{1, \dots, K\}$,
\begin{equation} \label{finalSystem}
    \begin{dcases}
        \cL f(u, x) - x f(u, x) = 0 \, , & (u,x) \in [r_{k-1}, r_k) \times \R \, , \\
        f(r_k^-, x) = \de^{-x} f(r_k, x) \, , & x \in \R, \, r_k \in \cT \setminus \cS\, , \\
        f(r_k^-,x) = \int_\R f(r_k, x+z) \, Q_{m(k)}(\dd z) \, , & x \in \R, \, r_k \in \cS \setminus \cT \, , \\
        f(r_k^-, x) = \int_\R \de^{-(x+z)} f(r_k, x+z) \, Q_{m(k)}(\dd z) \, , & x \in \R, \, r_k \in \cS \cap \cT \, ,
    \end{dcases}
\end{equation}
where, for each $k \in \{1, \dots, K\}$ such that $r_k \in \cS$, $m(k)$ is the unique index in $\{1, \dots, M\}$ for which $r_k = s_{m(k)}$.
\end{theorem}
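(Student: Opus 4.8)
The plan is to recognise the discounted price as a martingale, expand it by Itô's formula between the relevant dates, and extract both the PDE and the matching conditions from the fact that a martingale carries neither an absolutely continuous drift nor predictable jumps. Concretely, set $A_t \coloneqq \int_0^t \rho_s\,\eta(\dd s)$ and consider the discounted price $\tilde D_t \coloneqq \de^{-A_t} f(t,\rho_t)$. Since $\int_t^T \rho_s\,\eta(\dd s) = A_T - A_t$, the definition~\eqref{eq:pricefunction} of $f$ together with the Markov and flow properties of $\rho$ gives $\tilde D_t = \bbE[\de^{-A_T} H(\rho_T)\mid\cF_t]$; the integrability condition in Assumption~\ref{hp:PDE} then makes $\tilde D$ a genuine $\bbQ$-martingale, so that everything reduces to computing its canonical semimartingale decomposition.

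On each open interval $(r_{k-1},r_k)$ the process $\rho$ is a continuous diffusion and $\eta(\dd s)=\dd s$, so the classical Itô formula yields
\[
    \dd \tilde D_t = \de^{-A_t}\bigl(\cL f(t,\rho_t) - \rho_t f(t,\rho_t)\bigr)\,\dd t + \de^{-A_t}\sigma(t,\rho_t)\,\partial_x f(t,\rho_t)\,\dd W_t .
\]
At a relevant date $r_k$ I would instead compute the jump $\Delta\tilde D_{r_k} = \de^{-A_{r_k}} f(r_k,\rho_{r_k}) - \de^{-A_{r_k^-}} f(r_k^-,\rho_{r_k^-})$, using that $A$ jumps by $\rho_{r_k}$ exactly when $r_k\in\cT$ and that $\rho$ jumps by $\xi_{m(k)}$ exactly when $r_k\in\cS$; distinguishing the three cases produces precisely the expressions appearing on the right-hand sides of~\eqref{finalSystem}. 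Rewriting the jumps located on $\cS$ as integrals against the random measure $m^\cS$ and compensating them with $\mu^\cS$ (the identity of $\mu^\cS$ as the compensator of $m^\cS$ being where the independence and $\cF_{s_i}$-measurability of the $\xi_i$ enter), I would assemble the decomposition
\[
    \tilde D_t = \tilde D_0 + R_t + \int_0^t \de^{-A_{s^-}}\,\dd M^f_s ,
\]
where $R$ gathers the continuous drift $\int_0^t \de^{-A_s}(\cL f - \rho_s f)\,\dd s$, the fixed-time jumps at $\cT\setminus\cS$, and the $\mu^\cS$-compensator terms, and where the stochastic integral against $M^f$ is a local martingale precisely by the standing hypothesis.

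The conclusion then follows from a uniqueness-of-decomposition argument: $R = \tilde D - \tilde D_0 - \int_0^\cdot \de^{-A_{s^-}}\,\dd M^f$ is a local martingale, while by construction it is predictable, of finite variation, and null at $0$; hence $R\equiv 0$. Matching the pieces of $R$ to zero separately delivers everything. The absolutely continuous part forces $\cL f(s,\rho_s)-\rho_s f(s,\rho_s)=0$ for a.e.\ $s$, and since $\rho_s$ has full support and $\cL f - x f$ is continuous off $\cR$, this upgrades to $\cL f(u,x)-x f(u,x)=0$ on every strip; the vanishing of the jump of $R$ at each $r_k$, after using the full support of $\rho_{r_k^-}$, becomes exactly the relevant one among the three matching conditions of~\eqref{finalSystem}, while the terminal condition $f(T,x)=H(x)$ is read off directly from~\eqref{eq:pricefunction} at $t=T$. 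The main obstacle I anticipate is the bookkeeping of the Itô expansion across the time-discontinuities of $f$: one must verify that $R$ is genuinely predictable and of finite variation, and that compensating the $\cS$-jumps by $\mu^\cS$ splits them cleanly into the local-martingale increment $\dd M^f$ and a predictable drift, so that the displayed decomposition is indeed the canonical one.
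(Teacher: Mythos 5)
Your proposal is correct and follows essentially the same route as the paper: Itô's product formula applied to the discounted price $\de^{-R_u} f(u,\rho_u)$, rewriting the $\cS$-jumps as integrals against $m^\cS$ and compensating by $\mu^\cS$, the martingale property of the discounted price (from Assumption~\ref{hp:PDE}) together with the local-martingale hypothesis on $M^f$ forcing the predictable finite-variation part to vanish, and the full-support assumption plus arbitrariness of the initial condition upgrading the almost-sure identities to the pointwise PDEs and matching conditions in~\eqref{eq:lastPDE}--\eqref{finalSystem}. The one step you flag as a potential obstacle --- the Itô bookkeeping across the time-discontinuities of $f$ --- is resolved in the paper exactly along the lines you sketch: the increment of $f(u,\rho_u)$ is telescoped over the relevant dates and, on each strip, one applies the classical formula to auxiliary functions $g_k$ that extend $f$ continuously up to the right endpoint by its left limit $f(r_{k+1}^-,\cdot)$.
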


\begin{proof}
Fix $(t,x) \in [0,T) \times \R$ and consider the RFR starting at $(t,x)$, i.e., $\rho^{t,x} = (\rho^{t,x}_u)_{u \in [t,T]}$.
Let us define the process
\begin{equation}\label{eq:R}
    R^{t,x}_u \coloneqq \int_t^u \rho^{t,x}_s \, \eta(\dd s) = \int_t^u \rho^{t,x}_s \, \dd s + \sum_{i=1}^N \rho^{t,x}_{t_i} \ind_{\{t < t_i \leq u\}} \, , \quad u \in [t,T] \, .
\end{equation}
We observe that, for any $u \in [t,T]$,
\begin{equation*}
    R^{t,x}_{u^-} = \int_t^u \rho^{t,x}_s \, \dd s + \sum_{i=1}^N \rho^{t,x}_{t_i} \ind_{\{t < t_i < u\}} \, ,
\end{equation*}
whence, for any $n = 1, \dots, N$ such that $t_n \in [t,T]$,
\begin{align*}
    R^{t,x}_{t_n} = \int_t^{t_n} \rho^{t,x}_s \, \dd s + \sum_{i=1}^{n-1} \rho^{t,x}_{t_i} + \rho^{t,x}_{t_n} = R^{t,x}_{t_n^-} + \rho^{t,x}_{t_n} \, .
\end{align*}

In the following, to unburden the notation, we will avoid specifying the dependence on the initial condition of processes $\rho$ and $R$, unless necessary.

Applying the It\^o's product formula (see, e.g.,~\citep[Theorem 14.1.1]{cohen:stochcalculus}) to the process $\de^{-\int_t^u\rho_s \, \eta(\dd s)} f(u,\rho_u) = \de^{-R_u} f(u,\rho_u)$, $u \in [t,T]$, we have that

\begin{equation}\label{eq:product_formula}
    \de^{-R_u} f(u, \rho_u) = f(t, x) + \int_t^u \de^{-R_{s^-}} \, \dd f(s, \rho_s) + \int_t^u f(s^-, \rho_{s^-}) \, \dd\bigl(\de^{-R_s}\bigr) + \int_t^u \dd [\de^{-R_\cdot}, f(\cdot, \rho_\cdot)]_s \, .
\end{equation}

The first integral can be computed observing that if $(t,u]$ does not contain any relevant date, then
\begin{equation*}
    f(u, \rho_u) = f(t,x) + \int_t^u \cL f(s, \rho_s) \, \dd s + \int_t^u \sigma(s, \rho_s) \partial_x f(s, \rho_s) \, \dd W_s \, ;
\end{equation*}
if instead $(t,u] \cap \cR \neq \emptyset$, then, denoting by $\ell$ and $L$, respectively, the first and the last indices in $\{1, \dots, K\}$ such that $r_k \in (t,u]$, we have that
\begin{align}\label{eq:f_decomp}
    f(u, \rho_u) 
    &= f(t,x) + [f(r_{\ell}, \rho_{r_{\ell}}) - f(t, x)] + \sum_{k = \ell + 1}^{L} [f(r_k, \rho_{r_k}) - f(r_{k-1}, \rho_{r_{k-1}})] + [f(u, \rho_u) - f(r_{L}, \rho_{r_{L}})] \notag
    \\
    &= f(t,x) + [f(r_{\ell}, \rho_{r_{\ell}}) - f(r_{\ell}^-, \rho_{r_{\ell}^-})] + [f(r_{\ell}^-, \rho_{r_{\ell}^-}) - f(t, x)] \notag
    \\
    &\quad + \sum_{k = \ell + 1}^{L} \left\{[f(r_k, \rho_{r_k}) - f(r_k^-, \rho_{r_k^-})] + [f(r_k^-, \rho_{r_k^-}) - f(r_{k-1}, \rho_{r_{k-1}})]\right\} + [f(u, \rho_u) - f(r_{L}, \rho_{r_{L}})] \notag
    \\
    &= f(t,x) + \sum_{k = \ell}^{L} [f(r_k, \rho_{r_k}) - f(r_k^-, \rho_{r_k}^-)] \notag
    \\
    &\quad + [f(r_{\ell}^-, \rho_{r_{\ell}^-}) - f(t, x)] + \sum_{k = \ell + 1}^{L} [f(r_k^-, \rho_{r_k}^-) - f(r_{k-1}, \rho_{r_{k-1}})] + [f(u, \rho_u) - f(r_{L}, \rho_{r_{L}})] \, .
\end{align}
Note that if $u = r_{L}$, then the last summand disappears. Since in the last line the left-continuous version (with respect to time) of $f$ is used, the usual Itô's formula can be applied to each term. Indeed, we can directly apply it to the last summand, and (to deal with the preceding two) we can define, for any $k = \ell, \dots, L-1$, the auxiliary functions
\begin{equation*}
    g_\ell(\tau, y) \coloneqq
    \begin{dcases}
        f(\tau, y), & (\tau,y) \in [t, r_\ell) \times \R \, , \\
        f(r_\ell^-,y), & \tau = r_\ell, \, y \in \R \, ,
    \end{dcases}
    \qquad
    g_k(\tau, y) \coloneqq
    \begin{dcases}
        f(\tau, y), & (\tau,y) \in [r_k, r_{k+1}) \times \R \, , \\
        f(r_{k+1}^-,y), & \tau = r_{k+1}, \, y \in \R \, ,
    \end{dcases}
\end{equation*}
and the family of stochastic processes $X^{\tau_0,y}$, each solving, for any given $(\tau_0,y) \in [t,T) \times \R$, the SDE
\begin{equation*}
    \begin{dcases}
        \dd X_s = b(s, X_s) \, \dd s + \sigma(s, X_s) \, \dd W_s, & s \in (\tau_0, T] \, , \\
        X_{\tau_0} = y \, .
    \end{dcases}
\end{equation*}
Note that $\rho_s = X_s^{t,x}$, for all $s \in [t, r_\ell)$, and (by the Markov property) $\rho_s = \rho_s^{r_k, \rho_{r_k}^{t,x}} = X_s^{r_k, \rho_{r_k}^{t,x}}$, for all $s \in [r_k, r_{k+1})$, $k = \ell, \dots, L-1$. Moreover,
\begin{equation*}
    f(r_{\ell}^-, \rho_{r_{\ell}^-}) - f(t, x) = g_\ell(r_\ell, X_{r_\ell}^{t,x}) - g_\ell(t,x) \, ,
    \quad
    f(r_k^-, \rho_{r_k}^-) - f(r_{k-1}, \rho_{r_{k-1}}) = g_k(r_k, X_{r_k}^{r_{k-1}, \rho_{r_{k-1}}^{t,x}}) - g_k(r_{k-1}, \rho_{r_{k-1}}^{t,x})\, ,
\end{equation*}
for any $k = \ell, \dots, L-1$. In addition, each of the functions $g_\ell, \dots, g_{L-1}$ is continuous in its domain and of class $\cC^{1,2}$ in the interior. Therefore, by Itô's formula,
\begin{align*}
     &\mathop{\phantom{=}} [f(r_{\ell}^-, \rho_{r_{\ell}^-}) - f(t, x)] + \sum_{k = \ell + 1}^{L} [f(r_k^-, \rho_{r_k}^-) - f(r_{k-1}, \rho_{r_{k-1}})] + [f(u, \rho_u) - f(r_{L}, \rho_{r_L})]
     \\
     &= [g_\ell(r_{\ell}, X_{r_\ell}^{t,x}) - g_\ell(t, x)] + \sum_{k = \ell + 1}^{L} [g_k(r_k,X_{r_k}^{r_{k-1}, \rho_{r_{k-1}}^{t,x}}) - g_k(r_{k-1}, \rho_{r_{k-1}}^{t,x})] + [f(u, \rho_u) - f(r_{L}, \rho_{r_L}^{t,x})]
     \\
     &= \int_t^{r_\ell} \cL g_\ell(s, X_s^{t,x}) \, \dd s + \int_t^{r_\ell} \sigma(s, X_s^{t,x}) \partial_x g_\ell(s, X_s^{t,x}) \, \dd W_s 
     \\
     &\qquad + \sum_{k = \ell + 1}^{L} \left\{ \int_{r_k}^{r_{k+1}} \cL g_k(s, X_s^{r_{k-1}, \rho_{r_{k-1}}^{t,x}}) \, \dd s + \int_{r_k}^{r_{k+1}} \sigma(s, X_s^{r_{k-1}, \rho_{r_{k-1}}^{t,x}}) \partial_x f(s, X_s^{r_{k-1}, \rho_{r_{k-1}}^{t,x}}) \, \dd W_s \right\}
     \\
     &\qquad + \int_{r_L}^u \cL f(s, \rho_s) \, \dd s + \int_{r_L}^u \sigma(s, \rho_s) \partial_x f(s, \rho_s) \, \dd W_s
     \\
     &= \int_t^u \cL f(s, \rho_s) \, \dd s + \int_t^u \sigma(s, \rho_s) \partial_x f(s, \rho_s) \, \dd W_s \, .
\end{align*}
Hence, substituting this result in~\eqref{eq:f_decomp}, we get
\begin{align}\label{eq:f_semimart}
    f(u, \rho_u) 
    &= f(t,x) + \int_t^u \cL f(s, \rho_s) \, \dd s + \int_t^u \sigma(s, \rho_s) \partial_x f(s, \rho_s) \, \dd W_s + \sum_{k = \ell}^{L} [f(r_k, \rho_{r_k}) - f(r_k^-, \rho_{r_k}^-)] \notag
    \\
    &= f(t,x) + \int_t^u \cL f(s, \rho_s) \, \dd s + \int_t^u \sigma(s, \rho_s) \partial_x f(s, \rho_s) \, \dd W_s \notag
    \\
    &\qquad + \sum_{t < s \leq u} \ind_{s \in \cT \setminus \cS} [f(s, \rho_s) - f(s^-, \rho_s)] + \sum_{t < s \leq u} \ind_{s \in \cS} [f(s, \rho_{s^-} + \Delta \rho_s) - f(s^-, \rho_{s^-})]\, ,
\end{align}
where we used the fact that $\rho_{s^-} = \rho_s$, for any $s \in \cT \setminus \cS$.

Therefore,
\begin{align}\label{eq:first_int}
    &\mathop{\phantom{=}} \int_t^u \de^{-R_{s^-}} \, \dd f(s, \rho_s)
    = \int_t^u \de^{-R_{s^-}} \cL f(s, \rho_s) \, \dd s + \int_t^u \de^{-R_{s^-}} \sigma(s, \rho_s) \partial_x f(s, \rho_s) \, \dd W_s \notag
    \\
    &\qquad + \sum_{t < s \leq u} \ind_{s \in \cT \setminus \cS} \, \de^{-R_{s^-}} [f(s, \rho_s) - f(s^-, \rho_s)] + \sum_{t < s \leq u} \ind_{s \in \cS} \, \de^{-R_{s^-}} [f(s, \rho_{s^-} + \Delta \rho_s) - f(s^-, \rho_{s^-})] \, .
\end{align}

To compute the second integral in~\eqref{eq:product_formula}, we observe that
\begin{equation}\label{eq:exp_semimart}
    \de^{-R_u} = 1 - \int_t^u \de^{-R_s} \rho_s \, \dd s + \sum_{t < s \leq u} \ind_{s \in \cT} \left[\de^{-R_s} - \de^{-R_{s^-}}\right] \, ,
\end{equation}
and hence
\begin{equation}\label{eq:second_int}
    \int_t^u f(s^-, \rho_{s^-}) \, \dd\bigl(\de^{-R_s}\bigr) = -\int_t^u \de^{-R_s} \rho_s f(s, \rho_{s}) \, \dd s + \sum_{t < s \leq u} \ind_{s \in \cT} f(s^-, \rho_{s^-}) \left[\de^{-R_s} - \de^{-R_{s^-}}\right] \, .
\end{equation}

To compute the third integral in~\eqref{eq:product_formula}, we proceed as follows.
Denote $X_u = \de^{-R_u}$ and $Y_u = f(u, \rho_u)$, $u \in [t,T]$. Then, using~\citep[Theorem~4.52]{jacod2013:limit}, we deduce from~\eqref{eq:f_semimart} and~\eqref{eq:exp_semimart} that
\begin{equation*}
    \int_t^u \dd [\de^{-R_\cdot}, f(\cdot, \rho_\cdot)]_s = \int_t^u \dd [X,Y]_s = \int_t^u \dd \langle X^c, Y^c \rangle_s + \sum_{t < s \leq u} \Delta X_s \Delta Y_s = \sum_{t < s \leq u} \Delta X_s \Delta Y_s \, ,
\end{equation*}
since the continuous martingale part\footnote{Cf.~\citep[Proposition~4.27]{jacod2013:limit}.} $X^c$ of $X$ is zero. From~\eqref{eq:exp_semimart} we have that, for any $s \in (t,u]$,
\begin{equation*}
    \Delta X_s = \ind_{s \in \cT} \left[\de^{-R_s} - \de^{-R_{s^-}}\right],
\end{equation*}
while, from~\eqref{eq:f_semimart}, we obtain
\begin{equation*}
    \Delta Y_s = \ind_{s \in \cT \setminus \cS} [f(s, \rho_s) - f(s^-, \rho_s)] + \ind_{s \in \cS} [f(s, \rho_{s^-} + \Delta \rho_s) - f(s^-, \rho_{s^-})] \, .
\end{equation*}
Therefore we get
\begin{align}\label{eq:third_int}
    &\mathop{\phantom{=}}
    \int_t^u \dd [\de^{-R_\cdot}, f(\cdot, \rho_\cdot)]_s = \sum_{t < s \leq u} \Delta X_s \Delta Y_s \notag
    \\ 
    &= \sum_{t < s \leq u} \ind_{s \in \cT \setminus \cS} \left[\de^{-R_s} - \de^{-R_{s^-}}\right] [f(s, \rho_s) - f(s^-, \rho_s)] \notag
    \\ 
    &\quad
    + \sum_{t < s \leq u} \ind_{s \in \cS \cap \cT} \left[\de^{-R_s} - \de^{-R_{s^-}}\right] \left[f(s, \rho_{s^-} + \Delta \rho_s) - f(s^-, \rho_{s^-})\right] \, .
\end{align}

Summing~\eqref{eq:first_int}, \eqref{eq:second_int}, and~\eqref{eq:third_int} we get from~\eqref{eq:product_formula} that
\begin{align}\label{eq:semimart}
    &\mathop{\phantom{=}} \de^{-R_u} f(u, \rho_u) 
    = f(t, x) + \int_t^u \de^{-R_{s^-}} [\cL f(s, \rho_s) - \rho_s f(s, \rho_s)] \, \dd s + \int_t^u \de^{-R_{s^-}} \sigma(s, \rho_s) \partial_x f(s, \rho_s) \, \dd W_s \notag
    \\
    &+ \sum_{t < s \leq u} \ind_{s \in \cS} \de^{-R_{s^-}} \left[f(s, \rho_{s^-} + \Delta \rho_s) - f(s^-, \rho_{s^-})\right] + \sum_{t < s \leq u} \ind_{s \in \cT \setminus \cS} \de^{-R_{s^-}} [f(s, \rho_s) - f(s^-, \rho_s)] \notag
    \\
    &+ \sum_{t < s \leq u} \ind_{s \in \cT} f(s^-, \rho_{s^-}) \left[\de^{-R_s} - \de^{-R_{s^-}}\right] \notag
    + \sum_{t < s \leq u} \ind_{s \in \cT \setminus \cS} \left[\de^{-R_s} - \de^{-R_{s^-}}\right] [f(s, \rho_s) - f(s^-, \rho_s)] \notag
    \\
    &+ \sum_{t < s \leq u} \ind_{s \in \cS \cap \cT} \left[\de^{-R_s} - \de^{-R_{s^-}}\right] \left[f(s, \rho_{s^-} + \Delta \rho_s) - f(s^-, \rho_{s^-})\right]
    \, . 
\end{align}

Writing the set $\cS$ as the disjoint union of the sets $\cS \setminus \cT$ and $\cS \cap \cT$, and the set $\cT$ as the disjoint union of the sets $\cT \setminus \cS$ and $\cS \cap \cT$, after some simplifications we get that the terms in the last three lines of~\eqref{eq:semimart} sum up to
\begin{align*}
    &\mathop{\phantom{+}}\sum_{t < s \leq u} \ind_{s \in \cT \setminus \cS} \de^{-R_{s^-}} \left[\de^{-\rho_s} f(s, \rho_s) - f(s^-, \rho_s)\right]
    \\
    &+ \int_t^u \ind_{s \in \cS \setminus \cT} \int_\R \de^{-R_{s^-}} \left[f(s, \rho_{s^-} + z) - f(s^-, \rho_{s^-})\right] m^\cS(\dd s \, \dd z)
    \\
    &+ \int_t^u \ind_{s \in \cS \cap \cT} \de^{-R_{s^-}} \left[\de^{-(\rho_{s^-} + z)} f(s, \rho_{s^-} + z) - f(s^-, \rho_{s^-})\right] m^\cS(\dd s \, \dd z) \, ,
\end{align*}
where $m^\cS$ is the random measure introduced in~\eqref{eq:mS}. Compensating the last two terms above with the random measure $\mu^\cS$ given in~\eqref{eq:muS}, and substituting the result in equation~\eqref{eq:semimart}, we get
\begin{multline}\label{eq:semimart2}
    \de^{-R_u} f(u, \rho_u) 
    = f(t, x) + \int_t^u \de^{-R_{s^-}} \dd M^f_s 
    \\
    + \int_t^u \de^{-R_{s^-}} [\cL f(s, \rho_s) - \rho_s f(s, \rho_s)] \, \dd s + \sum_{t < s \leq u} \ind_{s \in \cT \setminus \cS} \, \de^{-R_{s^-}} \left[\de^{-\rho_s} f(s, \rho_s) - f(s^-, \rho_s)\right]
    \\
    + \int_t^u \ind_{s \in \cS \setminus \cT} \int_\R \de^{-R_{s^-}} \left[f(s, \rho_{s^-} + z) - f(s^-, \rho_{s^-})\right] \mu^\cS(\dd s \, \dd z)
    \\
    + \int_t^u \ind_{s \in \cS \cap \cT} \int_\R \de^{-R_{s^-}} \left[\de^{-(\rho_{s^-} + z)} f(s, \rho_{s^-} + z) - f(s^-, \rho_{s^-})\right] \mu^\cS(\dd s \, \dd z).
\end{multline}
By~\citep[4.34(b)]{jacod2013:limit}, the stochastic integral with respect to $M^f$ is a local martingale; also the discounted price process $\de^{-R_u}f(u,\rho_u)$, $u \in [t,T]$, is a martingale (hence also a local martingale). Therefore, the finite variation terms appearing in the last three lines of~\eqref{eq:semimart2} need to vanish. This, coupled with the assumption that $\rho_s$ has full support for any $s \in [t,T]$, and by the arbitrariness of $(t,x) \in [0,T)$, gives us PDEs~\eqref{eq:lastPDE} and~\eqref{finalSystem}, together with the terminal condition $f(T,x) = H(x)$, which is due to the definition of $f$, and the left-limit conditions, which arise from the fact that the last three terms in~\eqref{eq:semimart2} must be equal to zero.
\end{proof}

\begin{remark}
    If $r_K=T$, then PDE~\eqref{eq:lastPDE} is not to be considered. Indeed, in that case we have $f(r_K, x) = f(T,x) = H(x)$, and we proceed to solve the recursive system of PDEs~\eqref{finalSystem} beginning with the one on the domain $[r_{K-1}, r_K) \times \R$, with left limit condition
    \begin{equation*}
        f(r_K^-,x) = 
        \begin{dcases}
            \de^{-x} H(x) \, , & x \in \R, \, r_K \in \cT \setminus \cS\, , \\
            \int_\R H(x+z) \, Q_{m(K)}(\dd z) \, , & x \in \R, \, r_K \in \cS \setminus \cT \, , \\
            \int_\R \de^{-(x+z)} H(x+z) \, Q_{m(K)}(\dd z) \, , & x \in \R, \, r_K \in \cS \cap \cT \, .
        \end{dcases}
    \end{equation*}
\end{remark}

{
We conclude this section pointing out that if~\ref{hp:PDE:existuniqweak} in Remark~\ref{rem:weaker_assumpt} is assumed, instead of point~\ref{hp:PDE:existuniq} of Assumption~\ref{hp:PDE}, then the pricing function $f$ in~\eqref{eq:pricefunction} is defined in $[0,T]\times A$, and the PDEs in~\eqref{eq:lastPDE} and \eqref{finalSystem} are to be solved, respectively, in the domains $[r_K, T] \times A$, and $[r_{k-1}, r_k] \times A$, $k \in \{1, \dots, K\}$, where $A \subseteq \R$ is the state space of the RFR $\rho$ and is a closed interval with non-empty interior.
}

\section{Existence, uniqueness and regularity results}\label{App:ExUniq}
In the previous section, we showed that the price of a European-style derivative can be computed by solving a family of PDEs of the form
    \begin{equation}\label{eq:genericPDE}
    \begin{dcases}
        \cL f(t, x) - x f(t, x) = 0 \, , & (t,x) \in [\tau_0, \tau) \times \R \, , \\
        f(\tau,x) = \phi(x) \, , & x \in \R \, ,
    \end{dcases}
\end{equation}
where $\cL$ is the parabolic operator introduced in~\eqref{L}, $[\tau_0, \tau) \subset [0,T)$ is a given interval, and $\phi \colon \R \to \R$ is a given measurable function.
In this section, we rigorously analyze the well-posedness of this Cauchy problem and establish a Feynman-Kač representation for its solution. 
{Note that, for the reasons anticipated in Remark~\ref{rem:weaker_assumpt}, which we will also further detail below, we will not study the above PDE in domains of the type $[\tau_0,\tau] \times A$, where $A$ is as in~\ref{hp:PDE:existuniqweak} in Remark~\ref{rem:weaker_assumpt}.}

The existence, uniqueness and regularity results for the solution to the Cauchy problem for linear parabolic differential equations in unbounded domains are well understood when the coefficients are bounded, see for instance \cite{Friedman}, \cite{Lunardi}, where the problem is studied with analytical methods, and \cite{Freidlin} for a probabilistic approach. 
In financial problems, the coefficients and the final condition, and hence the solution, {typically} satisfy polynomial growth conditions, such as \( f(t,x) = O(1 + |x|^\eta)\) for some exponent \( \eta\geq 0 \).
In this case, the asymptotic behavior of the solution is determined by the final condition and by the behavior at infinity of the coefficients of the differential operator. 

In the case of a uniformly parabolic operator of the form (\ref{L}), the solution behavior will typically depend on the size of the PDE's coefficients. If $x \to +\infty$, then the solution may decay faster, as the potential term $x f(t,x)$ in (\ref{eq:genericPDE}) "pushes" the solution toward zero. Conversely, a negative potential (i.e. as $x \to -\infty$) might allow for growth or even cause blow-up in some cases. 
Usual assumptions are (locally) Lipschitz continuous coefficients, {polynomial} growth for the final condition and potential coefficient bounded from below, see \cite{Baldi}, \cite{Lunardi1998}, \cite{Rubio2010} and the references therein. 

{The cited literature shows that, even if the terminal payoff $H$ at 
$T$ in \eqref{eq:lastPDE} has polynomial growth, the solution $f(t,x)$
on $[r_K, T)$ may nevertheless exhibit exponential growth. Consequently, the terminal condition at $r_K$ in \eqref{finalSystem}, associated with the PDE in the preceding interval $[r_{K-1},r_K)$ and depending on the solution on $[r_K, T)$, may fail to have polynomial growth, and this phenomenon may propagate backward.
Moreover, the coefficient of the potential term is unbounded, so the cited results do not, in principle, apply.}

{Lemma \ref{LemmaExistence} below guarantees that, on each interval between two jumps, the solution of \eqref{eq:genericPDE} exhibits exponential growth even when the terminal condition itself has exponential growth and the potential term is $x f(t,x)$. To the best of our knowledge, this result has not been established elsewhere.}

We make the following assumptions:
\begin{assumption}\label{hp:FeynmanKac}
    \mbox{}
    \begin{enumerate}[label=\arabic*.]
        \item $\mu(t,x) = \alpha(t) + \beta(t) x$, where $\alpha,\beta \colon [\tau_0,\tau] \to \R$, are Lipschitz continuous functions on $[\tau_0,\tau]$; 
    \item\label{hp:FeynmanKac:sigma} $\sigma(t,x)$ is strictly positive, bounded and Lipschitz continuous on $[\tau_0,\tau]\times \mathbb{R}$, with $ \sigma^2(t,x)\geq \lambda_0 > 0 $;  
    \item\label{hp:FeynmanKac:final} the final condition $\phi$ is a continuous function and satisfies the exponential growth condition $|\phi(x)| \leq \bar C \exp(\bar c|x|)$, for some positive constants $\bar C$, $\bar c$.
    \end{enumerate}
\end{assumption}

\noindent The following lemma holds true
\begin{lemma}
    Under Assumption \ref{hp:FeynmanKac}, the PDE (\ref{eq:genericPDE}) admits a unique solution $f\in \mathcal{C}^{1,2}([\tau_0,\tau)\times \mathbb{R})$ in the class of functions such that $f(t,x)\leq \bar{K}\exp(k|x|)$, for some constants $\bar{K}, k>0$. 
\label{LemmaExistence}\end{lemma}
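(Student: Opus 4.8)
The plan is to prove the lemma through a probabilistic representation, using the affine form of $\mu$ and the nondegeneracy of $\sigma$ to overcome simultaneously the unboundedness of the potential and the exponential growth of the datum. Let $X^{t,x} = (X^{t,x}_s)_{s \in [t,\tau]}$ be the unique strong solution of
\begin{equation*}
    \dd X_s = \mu(s, X_s)\,\dd s + \sigma(s,X_s)\,\dd W_s, \qquad X_t = x,
\end{equation*}
which exists and is pathwise unique under Assumption~\ref{hp:FeynmanKac} (Lipschitz coefficients of at most linear growth). The natural candidate is the Feynman--Kač functional
\begin{equation*}
    f(t,x) \coloneqq \bbE\left[\exp\left(-\int_t^\tau X_s^{t,x}\,\dd s\right)\phi(X_\tau^{t,x})\right], \qquad (t,x) \in [\tau_0,\tau]\times\R,
\end{equation*}
which corresponds to continuous discounting at rate $X_s$, since $\eta$ restricts to Lebesgue measure on an interval free of relevant dates. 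First I would establish the a priori exponential moment estimates that make this object well defined and of the right order. Since $\beta$ is bounded and $\sigma$ is bounded, a Gronwall argument on $\sup_{t\le s\le\tau}\abs{X^{t,x}_s}$ combined with the Gaussian tail of the time-changed martingale part $\int_t^\cdot \sigma\,\dd W$ yields, for every $\lambda > 0$,
\begin{equation*}
    \bbE\left[\exp\left(\lambda \sup_{t\le s\le\tau}\abs{X^{t,x}_s}\right)\right] \le C_\lambda \, \exp\left(k_\lambda \abs{x}\right),
\end{equation*}
with $C_\lambda, k_\lambda$ independent of $x$. Bounding $-\int_t^\tau X_s\,\dd s \le (\tau-t)\sup_s\abs{X_s}$ and $\abs{\phi(X_\tau)}\le \bar C\exp(\bar c\sup_s\abs{X_s})$, the integrand is dominated by $\bar C \exp((\tau-\tau_0+\bar c)\sup_s\abs{X_s})$; taking $\lambda = \tau-\tau_0+\bar c$ shows that $f$ is finite and satisfies $\abs{f(t,x)} \le \bar K \exp(k\abs{x})$ for suitable $\bar K, k > 0$. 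Because Gaussian moments of every order are finite, no smallness restriction on $\bar c$ or on the horizon $\tau-\tau_0$ is needed.

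Next I would show that $f \in \cC^{1,2}([\tau_0,\tau)\times\R)$ and solves~\eqref{eq:genericPDE}. Here the affine form of $\mu$ is exploited through the exponential change of unknown $f(t,x) = \exp(A(t) + B(t)x)\,u(t,x)$, where $B$ solves the terminal-value ODE $B'(t) + \beta(t)B(t) - 1 = 0$, $B(\tau)=0$, and $A$ solves $A'(t) + \alpha(t)B(t) = 0$, $A(\tau)=0$. A direct computation shows that all the terms linear in $x$ in the zeroth-order coefficient cancel, so that $u$ solves a parabolic equation with the \emph{same} nondegenerate, bounded second-order coefficient $\tfrac12\sigma^2$, an affine first-order coefficient $\mu + \sigma^2 B$, and a zeroth-order coefficient $\tfrac12\sigma^2(t,x)B(t)^2$ that is now \emph{bounded} on $[\tau_0,\tau]\times\R$, with terminal datum $u(\tau,\cdot)=\phi$. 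For this reduced problem I would invoke interior parabolic (Schauder) estimates: the Lipschitz coefficients are locally Hölder, so truncating $\phi$ to bounded continuous data $\phi_n$ and solving the Cauchy--Dirichlet problems on bounded spatial domains $(-n,n)$ by classical theory yields smooth solutions $u_n$; the probabilistic representation of each $u_n$ together with the exponential moment bounds provides uniform local estimates, so that $u_n \to u$ in $\cC^{1,2}_{\mathrm{loc}}$ and $u$ is a classical solution attaining $\phi$ continuously at $\tau$. Undoing the transformation gives $f \in \cC^{1,2}([\tau_0,\tau)\times\R)$ solving~\eqref{eq:genericPDE}.

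Finally, uniqueness within the exponential class follows by a verification argument. Given any solution $g \in \cC^{1,2}([\tau_0,\tau)\times\R)$ with $\abs{g(t,x)}\le \bar K'\exp(k'\abs{x})$, I would apply Itô's formula to $s \mapsto \exp(-\int_t^s X_r^{t,x}\,\dd r)\,g(s,X_s^{t,x})$; since $g$ solves the PDE the finite-variation part vanishes and what remains is a local martingale. Localizing with stopping times $\theta_m$ and letting $s\uparrow\tau$, the exponential moment estimate dominates both the discount weight and $g$ (exactly as in the bound above), so dominated convergence yields $g(t,x) = \bbE[\exp(-\int_t^\tau X_r^{t,x}\,\dd r)\phi(X_\tau^{t,x})] = f(t,x)$. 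I expect the main obstacle to be this passage to the limit (and, symmetrically, the limit $u_n\to u$ in the regularity step): one must guarantee uniform integrability of the family $\{\exp(-\int_t^{s\wedge\theta_m} X_r\,\dd r)\, g(s\wedge\theta_m, X_{s\wedge\theta_m})\}$, which is precisely where the nondegeneracy and boundedness of $\sigma$—ensuring finite exponential moments of all orders on the finite horizon—are indispensable, and where the unboundedness of the potential is finally controlled.
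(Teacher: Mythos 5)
Your proposal is correct in substance but takes a genuinely different route from the paper's. Interestingly, your exponential change of unknown is the \emph{same} transformation the paper uses: your $B$, solving $B'+\beta B-1=0$ with $B(\tau)=0$, is exactly $-c$ for the paper's $c$ solving \eqref{eq:c} with $c_\tau=0$ (you additionally absorb the $t$-only term $\alpha B$ into $A$, which the paper instead keeps inside the potential of the transformed equation \eqref{TransfPDE}); both reductions serve the identical purpose of replacing the unbounded potential $x$ by the bounded one $\tfrac12\sigma^2(t,x)B^2(t)$, and your cancellation computation checks out. The divergence is in what happens next. The paper stops there and invokes Theorem~10.6 of \cite{Baldi} as a black box for existence, uniqueness and the exponential-growth bound of the transformed problem, deriving the stochastic representation only afterwards, in Theorem~\ref{th:FK}, via a Girsanov change of measure that removes the extra drift $-\sigma^2 c$ created by the transformation. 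You instead construct the solution directly as the Feynman--Ka\v{c} functional for the \emph{original} dynamics, prove the exponential-moment estimate (correct: bounded $\sigma$ plus Gronwall and Dambis--Dubins--Schwarz give $\bbE\bigl[\exp\bigl(\lambda\sup_{t\le s\le\tau}\abs{X_s^{t,x}}\bigr)\bigr]\le C_\lambda e^{k_\lambda\abs{x}}$ for every $\lambda>0$), and obtain uniqueness by verification; this buys you Lemma~\ref{LemmaExistence} and the representation of Theorem~\ref{th:FK} in one stroke, with no measure change. The price sits in your regularity step: the truncation/Schauder/compactness scheme you sketch is essentially the content of the theorem the paper cites, and a full write-up must also control the boundary influence of the Cauchy--Dirichlet approximations on $(-n,n)$ (the exit-before-$\tau$ probability from $(-n,n)$ must be shown negligible on compact sets, which your moment bounds do supply) and identify the $\cC^{1,2}_{\mathrm{loc}}$ limit with the probabilistic candidate. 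In short, your argument is self-contained where the paper's is citation-based, while the paper's is complete where yours remains a standard but unfilled sketch.
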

\begin{proof}
Defining the auxiliary function $v(t,x)=f(t,x)e^{c(t)x}$, where $c(t)$ is a $\mathcal{C}^{1}([\tau_0,\tau])$ function, and plugging it into the PDE, we get the transformed equation
$$\partial_t v(t,x) + \dfrac{\sigma^2(t,x)}{2} \partial_{x x}^2 v(t,x) + \left(\alpha(t) + \beta(t) x - c(t) \sigma^2(t,x)\right) \partial_{x} v(t,x) + $$
$$+\left( c^2(t)\dfrac{\sigma^2(t,x)}{2} -(c'(t)+c(t)\beta(t)+1)x -c(t)\alpha(t)\right)v(t,x)=0.$$
We can choose $c(t)$ such that
\begin{equation}\label{eq:c}
    c'(t)+c(t)\beta(t)+1=0, \quad t \in [\tau_0, \tau] \, .
\end{equation}
Note that the solution to the ODE~\eqref{eq:c} is bounded and Lipschitz continuous, for any given terminal condition $c(\tau) = c_\tau$\footnote{The solution to (\ref{eq:c}) is
\begin{equation*}
    c(t)=\de^{\int_{t}^{\tau}\beta(u) \dd u}\left(c_\tau+\int_{t}^{\tau} \de^{-\int_{s}^{\tau}\beta(u)\dd u} \dd s\right) \, , \quad t \in [\tau_0, \tau] \, .
\end{equation*}}. In particular, we set $c_\tau = 0$, so that $v(\tau,\cdot)=f(\tau,\cdot)$.
The transformed PDE reads as
\begin{multline}
    \partial_t v(t,x) + \dfrac{\sigma^2(t,x)}{2} \partial_{x x}^2 v(t,x) + \left(\alpha(t) + \beta(t) x - c(t) \sigma^2(t,x)\right) \partial_{x} v(t,x) 
    \\
    - \left( c(t)\alpha(t) - c^2(t)\dfrac{\sigma^2(t,x)}{2} \right)v(t,x)=0,
\label{TransfPDE}
\end{multline}
with final condition $v(\tau,x)=\phi(x)$, where the potential term coefficient is now bounded from below and also locally Lipschitz continuous, as it is a combination of Lipschitz functions.
Therefore, we can apply Theorem 10.6 in \cite{Baldi} which ensures the existence and uniqueness of the solution $v\in \mathcal{C}^{1,2}([\tau_0,\tau)\times \mathbb{R})$,
satisfying the growth condition $|v(t,x)|\leq \bar K e^{\bar k|x|}$, for some positive constants $\bar K$, $\bar k$. Substituting into $f(t,x)=v(t,x)e^{-c(t)x}$, we get the desired result.
\end{proof}

{The next theorem provides us with a Feynman-Kač representation formula for the solution $f$ of PDE~\eqref{eq:genericPDE}.
To the best of our knowledge, this formula is not available in the literature, where PDEs of the following form are considered
\begin{equation*}
    \begin{dcases}
        \cL f(t, x) - c(x) f(t, x) = 0 \, , & (t,x) \in [0,T) \times \R \, , \\
        f(T,x) = \phi(x) \, , & x \in \R \, ,
    \end{dcases}
\end{equation*}
with the potential coefficient $c$ bounded from below (see, e.g.,~\citep{Baldi,karatzas:brown,palczewski2025partial}). In our case, $c(x) = x$, thus this coefficient is unbounded in $\R$, which prevents us from applying standard results. However, under the setting established by Assumption~\ref{hp:FeynmanKac}, we are able to perform a probability measure change, which allows us to recover the Feynman-Kač representation formula.
}

\begin{theorem}\label{th:FK}
Under the Assumption \ref{hp:FeynmanKac}, the solution to equation (\ref{eq:genericPDE}) is given by  
$$f(t,x) = \bbE\left[\phi(X_\tau^{t,x})\de^{-\int_t^\tau X_s^{t,x} \dd s}\right]   \qquad \qquad \qquad \mbox{for } (t,x) \in [\tau_0, \tau] \times \R,$$
\label{FKformula}
where $X^{t,x}$ satisfies the equation \begin{equation*}
    \begin{dcases}
    \dd X_s=\left(\alpha(s) + \beta(s) X_s\right) \dd s + \sigma(s,X_s) \dd W_s,  & \mbox{for } s \in (t,\tau] \\
    X_t = x. \, 
    \end{dcases}
\end{equation*}
\end{theorem}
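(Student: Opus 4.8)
The plan is to confirm the representation by applying Itô's formula to the discounted solution evaluated along the diffusion $X^{t,x}$ and recognizing the result as a martingale whose terminal value is exactly the claimed integrand. By Lemma~\ref{LemmaExistence} we already dispose of a unique classical solution $f \in \cC^{1,2}([\tau_0,\tau)\times\R)$ of~\eqref{eq:genericPDE} satisfying $|f(t,x)| \le \bar K \de^{k|x|}$ for some $\bar K, k > 0$; hence the only task is to show that this $f$ admits the stated stochastic representation. Fix $(t,x) \in [\tau_0,\tau)\times\R$ and let $X = X^{t,x}$ solve the displayed SDE, which has a pathwise unique strong solution since its coefficients are Lipschitz (Assumption~\ref{hp:FeynmanKac}).

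Set $D_s := \int_t^s X_u \, \dd u$ and consider $M_s := \de^{-D_s} f(s, X_s)$ for $s \in [t,\tau)$. Since $s \mapsto \de^{-D_s}$ is continuous and of finite variation, with $\dd(\de^{-D_s}) = -X_s \de^{-D_s}\,\dd s$, the Itô product rule carries no quadratic-covariation term, while Itô's formula gives $\dd f(s,X_s) = \cL f(s,X_s)\,\dd s + \sigma(s,X_s)\partial_x f(s,X_s)\,\dd W_s$. Combining these and using that $f$ solves $\cL f(s,X_s) - X_s f(s,X_s) = 0$ on $[\tau_0,\tau)$, all drift terms cancel, leaving
\begin{equation*}
    \dd M_s = \de^{-D_s}\,\sigma(s,X_s)\,\partial_x f(s,X_s)\,\dd W_s, \qquad s \in [t,\tau).
\end{equation*}
Thus $M$ is a continuous local martingale on $[t,\tau)$. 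Localizing along $\theta_n := \inf\{s \ge t : |X_s| \ge n\}\wedge \tau$ yields $\bbE[M_{\theta_n}] = M_t = f(t,x)$ for every $n$.

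The crux is to upgrade this to the desired identity by passing to the limit $n \to \infty$, which requires dominating $\sup_{s\in[t,\tau]}|M_s|$ by an integrable random variable. Here the affine-drift, bounded-volatility structure is decisive. Factoring the variation-of-constants representation as
\begin{equation*}
    X_s = \de^{\int_t^s \beta(u)\,\dd u}\left(x + \int_t^s \de^{-\int_t^r \beta(u)\,\dd u}\alpha(r)\,\dd r + N_s\right), \qquad N_s := \int_t^s \de^{-\int_t^r \beta(u)\,\dd u}\sigma(r,X_r)\,\dd W_r,
\end{equation*}
one sees that $N$ is a continuous local martingale whose quadratic variation is bounded by a deterministic constant (as $\beta$ and $\sigma$ are bounded on the compact interval $[\tau_0,\tau]$). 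By the Dambis–Dubins–Schwarz theorem (or an exponential martingale inequality), $\sup_{s}|N_s|$ has Gaussian tails, and since the deterministic prefactors are bounded, so does $\sup_{s\in[t,\tau]}|X_s|$; consequently $\bbE[\de^{\lambda \sup_{s}|X_s|}] < \infty$ for every $\lambda > 0$. Using $|D_s| \le (\tau - t)\sup_u |X_u|$ together with the exponential growth bound on $f$ gives
\begin{equation*}
    \sup_{s\in[t,\tau]}|M_s| \le \bar K \, \de^{(\tau - t + k)\,\sup_{u\in[t,\tau]}|X_u|} \in L^1(\bbP).
\end{equation*}

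With this domination in hand the conclusion is immediate: since $\theta_n \uparrow \tau$ almost surely (the affine SDE does not explode) and $f$ is continuous up to time $\tau$ with $f(\tau,\cdot) = \phi$, we have $M_{\theta_n} \to \de^{-D_\tau}\phi(X_\tau)$ almost surely, dominated by the integrable bound above. Dominated convergence then yields $f(t,x) = \bbE[M_{\theta_n}] \to \bbE\big[\phi(X_\tau^{t,x})\,\de^{-\int_t^\tau X_s^{t,x}\,\dd s}\big]$, which is the asserted formula. As anticipated, the main obstacle is this integrability step: the discount factor $\de^{-D_s}$ can be large precisely when $X$ becomes very negative, and $\phi$ (hence $f$) may grow exponentially, so Assumption~\ref{hp:FeynmanKac} — affine drift, bounded and uniformly elliptic diffusion, exponentially-bounded terminal datum — is exactly what guarantees the finite exponential moments needed to close the argument.
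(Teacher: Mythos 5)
Your proof is correct, but it follows a genuinely different route from the paper's. The paper never touches the process $\de^{-\int_t^s X_u \dd u} f(s,X_s)$ directly: it works through the transformed function $v(t,x) = f(t,x)\de^{c(t)x}$, with $c$ solving $c' + c\beta + 1 = 0$ and $c(\tau) = 0$, precisely so that the transformed PDE~\eqref{TransfPDE} has a potential bounded from below; it then invokes the classical Feynman--Ka\v{c} theorem for that equation (whose hypotheses the original problem violates, since the potential $x$ is unbounded below), and undoes the transformation probabilistically --- a Girsanov change of measure to remove the extra drift $-\sigma^2 c$ from the auxiliary diffusion $\tilde X$, followed by the It\^o product-rule identity $\alpha(s)c(s)\,\dd s + \sigma(s,X_s)c(s)\,\dd W_s = \dd(X_s c(s)) + X_s\,\dd s$, which, using $c(\tau)=0$, converts the Girsanov exponent into the discount factor $\de^{-\int_t^\tau X_s \dd s}$. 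You instead attack the original equation head-on: It\^o on $M_s = \de^{-D_s}f(s,X_s)$, cancellation of the drift by the PDE, and then a hands-on uniform-integrability argument via variation of constants, Dambis--Dubins--Schwarz, and Gaussian tails for $\sup_s\abs{X_s}$. What your route buys is a self-contained probabilistic argument that makes explicit \emph{why} the unbounded potential is harmless --- the exponential moments of $\sup_s\abs{X_s}$ simultaneously dominate the discount factor (large when $X$ is very negative) and the exponentially growing $f$; what the paper's route buys is economy, delegating all integrability issues to the cited theorem of Baldi at the cost of the transformation and measure-change machinery. Both arguments lean equally on Lemma~\ref{LemmaExistence} for existence, uniqueness, and the growth bound on $f$.

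One technical gloss you should tighten: since Lemma~\ref{LemmaExistence} gives $f \in \cC^{1,2}$ only on $[\tau_0,\tau) \times \R$, and $\partial_x f$ may degenerate as $s \uparrow \tau$ when $\phi$ is merely continuous, It\^o's formula and optional stopping are justified only strictly before $\tau$. So you cannot directly assert $\bbE[M_{\theta_n}] = f(t,x)$ with $\theta_n$ capped at $\tau$; you should first obtain $\bbE[M_{\theta_n \wedge s}] = f(t,x)$ for $s < \tau$, and then let $s \uparrow \tau$ and $n \to \infty$ using continuity of $f$ up to the terminal time together with the very domination $\sup_{s}\abs{M_s} \le \bar K\,\de^{(\tau - t + k)\sup_u \abs{X_u}} \in L^1$ you already established. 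Since this repair uses only tools present in your argument, it is a presentational gap rather than a substantive one.
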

\begin{proof}
By the Feynman-Kač representation of the solution to the transformed PDE (\ref{TransfPDE}) satisfied by $v$ we have  

$$v(t,x)= \mathbb{E}\left[\phi(\tilde X_\tau^{t,x})\de^{-\int_t^{\tau} \Gamma(s, \tilde X_s^{t,x}) ds}\right] \ ,$$
where process $\tilde X$ solves the SDE 
\begin{equation}
    \left\{\begin{array}{ll}
    \dd \tilde X_s=\left(\alpha(s) + \beta(s) \tilde X_s - \sigma^2(s,\tilde X_s)  c(s)\right) \dd s + \sigma(s,\tilde X_s) \dd W_s,  & \mbox{for } s \in (t,\tau], \\
   \tilde X_t = x,  & 
\end{array}\right.
\label{SDE_G}\end{equation}
and $\Gamma(t,x) \coloneqq \alpha(t) c(t) - \frac 12 \sigma^2(t,x)c^2(t)$. 
Note that $B_s = W_s - \int_t^s \sigma(u, \tilde X_u^{t,x})  c(u) \, \dd u$ is a standard Brownian motion under $\bbQ^*$, whose density with respect to $\bbQ$ is  
\begin{equation*}
    \dfrac{\dd \bbQ^*}{\dd \bbQ} = \exp\left\{\int_t^\tau \sigma(u, \tilde X_u^{t, x})  c(u) \, \dd W_u - \frac 12 \int_t^\tau  \sigma^2(u, \tilde X_u^{t, x})  c^2(u) \, \dd u \,\right\} \, .
\end{equation*}
{Indeed, since $\sigma$ is bounded by Assumption~\ref{hp:FeynmanKac}.\ref{hp:FeynmanKac:sigma}, and $c$ is bounded by construction (see the proof of Lemma \ref{LemmaExistence}), Novikov's condition is satisfied. Therefore the density $ {\dd \bbQ^*}/{\dd \bbQ}$ is a true martingale and Girsanov's theorem applies.} The $\bbQ^*$-dynamics of $\tilde{X}$ are
\begin{equation*}
    \begin{dcases}
    \dd \tilde X_s=\left(\alpha(s) + \beta(s) \tilde X_s\right) \dd s + \sigma(s,\tilde X_s) \dd B_s,  & \mbox{for } s \in (t,\tau], \\
    \tilde X_t = x \, ,
    \end{dcases}
\end{equation*}
i.e., the law of $\tilde X$ under $\bbQ^*$ coincides with the law of $X$ under $\bbQ$. Then, we get
\begin{align*}
    v(t,x) 
    &= \mathbb{E}\left[\phi(\tilde X_\tau^{t,x})\de^{-\int_t^{\tau} \Gamma(s, \tilde X_s^{t,x}) \, \dd s}\right] 
    \\
    &= \bbE^{\bbQ^*}\left[\phi(\tilde X_\tau^{t,x})\de^{-\int_t^{\tau} \Gamma(s, \tilde X_s^{t,x}) \, \dd s} \dfrac{\dd \bbQ}{\dd \bbQ^*}\right] 
    \\
    &= \bbE^{\bbQ^*}\left[\phi(\tilde X_\tau^{t,x})\de^{-\int_t^{\tau} \alpha(s)  c(s) \, \dd s - \int_t^\tau \sigma(s, \tilde X_s^{t,x})  c(s) \, \dd B_s}\right] 
    \\
    &= \bbE\left[\phi(X_\tau^{t,x})\de^{-\int_t^{\tau} \alpha(s)  c(s) \, \dd s - \int_t^\tau \sigma(s, X_s^{t,x})  c(s) \, \dd W_s}\right] \, .
\end{align*}

Using the Ito's product formula we have that
\begin{equation*}
    \alpha(s)  c(s) \, \dd s + \sigma(s, X_s^{t,x})  c(s) \, \dd W_s = \dd(X_s^{t,x}  c(s)) + X_s^{t,x} \, \dd s \, ,
\end{equation*}
whence
\begin{equation*}
    \int_t^\tau \alpha(s)  c(s) \, \dd s + \int_t^\tau \sigma(s, X_s^{t,x})  c(s) \, \dd W_s = \underbrace{X_\tau^{t,x}  c(\tau)}_{=0} - x  c(t) + \int_t^\tau X_s^{t,x} \, \dd s \, .
\end{equation*}
Therefore,
\begin{align*}
    v(t,x) 
    &= \bbE\left[\phi(X_\tau^{t,x})\de^{-\int_t^{\tau} \alpha(s)  c(s) \, \dd s - \int_t^\tau \sigma(s, X_s^{t,x})  c(s) \, \dd W_s}\right]
    \\
    &= \bbE\left[\phi(X_\tau^{t,x})\de^{-\int_t^\tau X_s^{t,x} \, \dd s} \de^{x  c(t)}\right] \, ,
\end{align*}
and hence
\begin{equation*}
    f(t,x) = \de^{-x  c(t)}  v(t,x) = \bbE\left[\phi(X_\tau^{t,x})\de^{-\int_t^\tau X_s^{t,x} \, \dd s}\right] \, . \qedhere
\end{equation*}
\end{proof}
\bigskip
{
It is worth emphasizing that thanks to the probabilistic representation of solutions to the family of PDEs.~\eqref{eq:genericPDE}, we are able to connect the results of this section to the existence and uniqueness of a pricing function for interest-rate derivatives, with suitably integrable payoffs, as seen in Section~\ref{PDEformulation}. Indeed, to find the pricing function of an interest rate derivative with contract function $H$, satisfying Assumption~\ref{hp:PDE}.\ref{hp:PDE:H} and Assumption~\ref{hp:FeynmanKac}.\ref{hp:FeynmanKac:final}, we can solve a sequence of PDEs of the form~\eqref{eq:genericPDE} according to the following scheme:
\begin{itemize}
    \item Solve PDE~\eqref{eq:genericPDE} with $\phi = H$ and $[\tau_0, \tau) = [r_K, T)$, where $r_K \in \cR$ is the last relevant date (see Section~\ref{PDEformulation}), to get the solution $f^K$;
    \item Solve PDE~\eqref{eq:genericPDE} on $[\tau_0, \tau) = [r_{K-1}, r_K)$ and with
    \begin{equation*}
        \phi(x) = 
        \begin{dcases}
            \de^{-x} f^K(r_K, x) \, , &\text{if } r_K \in \cT \setminus \cS\, , \\
            \int_\R f^K(r_K, x+z) \, Q_{m(K)}(\dd z) \, , &\text{if } r_K \in \cS \setminus \cT \, , \\
            \int_\R \de^{-(x+z)} f^K(r_K, x+z) \, Q_{m(K)}(\dd z) \, , &\text{if } r_K \in \cS \cap \cT \, ,
        \end{dcases}
    \end{equation*}
    to get the solution $f^{K-1}$.
    \item Proceed backwards in this way solving PDE~\eqref{eq:genericPDE} in each interval $[\tau_0, \tau) = [r_{k-1}, r_k)$, $k \in \{1, \dots, K-1\}$ and with $\phi$ similar as above (use $f^k$ and $r^k$ whenever $f^K$ and $r^K$ appear), to determine the sequence of solutions $f^{K-2}, \dots, f^0$.
    \item Piece together solutions $f^0, \dots, f^K$ in each interval $[0, r_1)$, \dots, $[r_{K-1}, r_K)$, $[r_K, T]$, to get the pricing function $f$ for contingent claim $H$, which verifies the risk-neutral valuation formula~\eqref{eq:pricefunction}.
\end{itemize}%
}

We conclude this section with a remark. In the proof of Lemma \ref{LemmaExistence} and Theorem \ref{FKformula}, we choose, for simplicity, $c_\tau = 0$. Clearly, the choice of a different terminal condition for the ODE satisfied by $c$ yields a different transformation of the original function $f$, which however satisfies the same PDE written above for $v$, where the only thing changing is the actual expression of the function $c$. In particular, this provides us with several equivalent ways of transforming the original function $f$. 
For example, in the Vasicek model where $\alpha$, $\beta$ and $\sigma$ are assumed to be constant, it is convenient to choose $c_\tau= - \frac{1}{\beta}$ so that $c$ is constant, and consider the transformed function $v(t,x)=f(t,x)e^{-x/\beta}$. 
Therefore, the transformed PDE becomes
\begin{equation}\label{eq:Vas_cambio}
\partial_t v(t,x) + \dfrac{\sigma^2}{2} \partial_{x x}^2 v(t,x) + \left(\alpha(t) + \beta x + \dfrac{\sigma^2}{\beta}\right) \partial_{x} v(t,x) + \left( \dfrac{\sigma^2}{2\beta^2} + \dfrac{\alpha(t)}{\beta}\right)v(t,x)=0.    
\end{equation}
Equation~\eqref{eq:Vas_cambio} will be exploited in Appendix~\ref{Localization}.

\section{Affine models} \label{affinesection}

In this section we apply our results to the case of an affine term structure, where the dynamics of the RFR are described by the following equation
\begin{equation}
	\dd\rho_t= \left (\alpha(t) + \beta(t) \rho_t \right) \dd t +  \sqrt{\gamma(t) + \delta(t) \rho_t \, } \, \dd W_t+ \dd J_t,
\label{shortrateaff}\end{equation}
where $\alpha, \beta, \gamma, \delta: \mathbb{R^+} \to \mathbb{R}$, with $ {\gamma(t)}, \delta (t) \ge 0$ for all $t$,  are assumed to be sufficiently regular to guarantee the existence and uniqueness of a strong solution  {between jump times}, as well as the non-negativity of the diffusion coefficient. 

We remark that the conditions outlined in the previous section are sufficient, but not necessary, to guarantee the existence of a strong solution.
For example, the CIR process does not satisfy the Lipschitz continuity condition for the diffusion coefficient. Nevertheless, it admits a well-defined strong solution between jump times, provided that the Feller condition is satisfied.     
In the CIR model, the process is typically used to model quantities that are required to remain non-negative, such as interest rates  and volatility. The Feller condition ensures that the process remains strictly positive, but it does not guarantee that it stays bounded away from zero. Therefore, if jumps are allowed to take even slightly negative values, it may happen (with a probability that is possibly  very small)  that, following a jump, the interest rate process may fall below zero. In such a case, the initial condition of the subsequent stochastic differential equation becomes ill-posed, as the square-root diffusion term is not defined for negative values. This implies that, in the CIR model, or more generally, in affine models with square-root diffusion terms, introducing stochastic discontinuities requires restricting jumps to non-negative values in order to preserve the \mbox{well-posedness} of the process.
It is clear that this assumption is not particularly realistic. In general, one can introduce additional mechanisms, such as reflecting barriers or absorbing boundaries, to prevent the process from becoming negative. In this case, the model remains feasible even in the presence of negative jumps, but it no longer belongs to the class of affine processes.  
{Very recently  \cite{fontanaPS} proposed an extension of the CIR model that incorporates stochastic discontinuities, which preserves affinity and non-negativity while allowing upward and downward jumps: in their model, the distribution of jump sizes depends explicitly on the pre-jump state of the process, satisfies some suitable conditions, and jumps sizes may exhibit autocorrelation.  
To keep the analysis simple and to mantain the independence of jumps}, the results presented in this  section implicitly assume that, whenever the function $\delta(t)$ is not identically zero, the jump sizes are non-negative.

Affine term {structures} with stochastic discontinuities have been studied by \cite{KellerRessel2019}  and  \cite{fontanaetal2024:termstruct}. In particular, in \cite{fontanaetal2024:termstruct} the authors compute the conditional characteristic function and employ it to derive closed-form expressions for the prices of derivatives, such as { ZCB}s and forward-looking caplets within an extended \mbox{Hull-White} model with Gaussian-sized jumps. A similar methodology is adopted in \cite{economies12030073},  under the assumption that jump sizes follow a discrete distribution governed by a modified Skellam law.

As an alternative, we illustrate how the classical approach, based on solutions of PDEs (see for instance \cite{bjork:arbitragetheory}, Chap. 21), can be adapted to this framework  and exploited to find solutions in closed-form. The proofs of all results presented in this section are gathered in Appendix \ref{proofs}.
For the sake of simplicity, we assume that  $\cS \cap \cT = \emptyset.$ 
System \eqref{finalSystem} then can be rewritten as: 
 
\begin{small}
\begin{equation}
\label{affinePE}  
\begin{dcases}  {\partial}_t f (t, x) +  \left (\alpha(t) + \beta(t) x \right)  {\partial}_x f (t, x) + \frac{1}{2}  \left (\gamma(t) + \delta(t) x \right) \, {\partial_{xx}^2}  f   (t, x) =  f (t,x) x,  &(t,x) \in [r_{k-1}, r_k) \times \R, \\
		f(T, x) = H(x), & x \in \mathbb{R}, \\
        f(r_k^-, x) = \de^{-x} f(r_k, x), & r_k \in \cT, \ 
        x \in \mathbb{R},\\ 
        f(r_k^-, x) = 	    \mathbb{E}[f(r_k, x  + \xi_j)],    & r_k=s_j \in \cS, x \in \mathbb{R}.
\end{dcases}
\end{equation}
\end{small}

{It is worth recalling that if~\ref{hp:PDE:existuniqweak} in Remark~\ref{rem:weaker_assumpt} is assumed, instead of point~\ref{hp:PDE:existuniq} of Assumption~\ref{hp:PDE}, then the PDEs in the above system and PDE~\eqref{eq:lastPDE} are to be solved in the domains $[r_{k-1}, r_k] \times A$, $k \in \{1, \dots, K\}$, and $[r_K, T]$, where $A \subseteq \R$ is the state space of the RFR $\rho$ and is assumed to be a closed interval with non-empty interior.}

\subsection{Price of a zero-coupon bond} 
Let {$P_T$ denote the pricing function of a ZCB maturing at time $T$ (also called $T$-bond), i.e., $P_T(t, x)$ is the price at time $t$ of a $T$-bond, given that $\rho_t = x$. We recall that this function is defined by~\eqref{eq:pricefunction}, since any $T$-bond can be seen as a contingent claim written on the RFR $\rho$ with contract function $H(x) \equiv 1$.}
Given the affine term structure, the price of a { ZCB} is expected to exhibit an exponential-affine dependence on the short-rate. This is formally established in the following proposition, where we assume for simplicity that $\max(t_N, s_M) < T \eqqcolon t_{N+1} \eqqcolon s_{M+1}$. 

\begin{proposition}\label{propZCB}  The {pricing function  $P_T$ of a ZCB maturing at $T$} is given by 
    \begin{equation} \label{ZCBpriceGeneral} P_T(t,x) = \exp(-a(t,T) - x b(t,T)),
    \end{equation} 
where the  deterministic functions $a$ and $b$ are determined as follows:
\begin{enumerate}[(i)]
    \item $ \displaystyle \label{affineb} \displaystyle b(t,T) =  \sum_{n = 0}^{N} b_n(t) \ind_{[t_{n}, t_{n+1}) } (t) $,  
where $b_n$ denotes the  solution, on the interval $[t_{n}, t_{n+1})$, of the Riccati equation 
\begin{equation} \label{Riccati}
 b_n'(t) + \beta (t) b_n(t) - \frac{1}{2} \delta(t)  b_n^2 (t)  + 1  = 0    
\end{equation}
with terminal condition $b_n(t_{n+1}^-) = b_{n+1}(t_{n+1}) + 1$ for $n=0, \dots, N-1$, and final value $b_{N}(t_{N+1}) = b(T,T)=0$;
 
\bigskip 
 
\item \begin{minipage}{16.5cm}
\begin{equation} \label{affinea}
    \displaystyle \displaystyle \hspace{-3.2cm}a(t,T) =  \int_t^T \, \left[ \alpha (u) b(u,T) - \frac{1}{2} \gamma(u) b^2(u,T) \right] \dd u  - \sum_{j=1}^M \log \mathbb{E} \left[\de^{-\xi_j  b(s_j, T) }   \right] \ind_{[0, \, s_{j}) } (t). \hspace{0.cm} 
\end{equation}
\end{minipage}

\end{enumerate}

\end{proposition}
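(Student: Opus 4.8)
The plan is to use the classical exponential-affine ansatz and verify it against the system~\eqref{affinePE}. First I would posit
\begin{equation*}
    P_T(t,x) = \exp\bigl(-a(t,T) - x\,b(t,T)\bigr),
\end{equation*}
for deterministic functions $a(\cdot,T)$ and $b(\cdot,T)$ that are smooth on each interval between relevant dates, and substitute it into the pricing PDE $\cL f(t,x) = x f(t,x)$ of~\eqref{affinePE}. Using $\partial_x f = -b f$, $\partial^2_{xx} f = b^2 f$, and $\partial_t f = (-\partial_t a - x\,\partial_t b) f$, and dividing by the strictly positive factor $f$, the PDE collapses to
\begin{equation*}
    \Bigl(-\partial_t a - \alpha(t) b + \tfrac12 \gamma(t) b^2\Bigr) - x\Bigl(\partial_t b + \beta(t) b - \tfrac12 \delta(t) b^2 + 1\Bigr) = 0 .
\end{equation*}
Since this holds for every $x \in \R$, the coefficients of $x^0$ and $x^1$ must vanish separately: the $x^1$-coefficient is precisely the Riccati equation~\eqref{Riccati} for $b$, while the $x^0$-coefficient is the linear ODE $\partial_t a = -\alpha(t) b + \tfrac12 \gamma(t) b^2$, whose integration reproduces the integral term in~\eqref{affinea}.

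Next I would encode the terminal payoff and the two transmission conditions of~\eqref{affinePE} as conditions on $a$ and $b$. The terminal value $P_T(T,x) = 1$ forces $a(T,T) = b(T,T) = 0$, which, since $T = t_{N+1}$, is the final value $b_N(t_{N+1}) = 0$ stated in part~(i). At a roll-over date $r_k = t_n \in \cT$, comparing exponents in $f(t_n^-,x) = e^{-x} f(t_n,x)$ gives
\begin{equation*}
    a(t_n^-,T) = a(t_n,T), \qquad b(t_n^-,T) = b(t_n,T) + 1,
\end{equation*}
so $a$ is continuous and $b$ jumps up by one across each roll-over date; this is exactly the recursive terminal condition $b_n(t_{n+1}^-) = b_{n+1}(t_{n+1}) + 1$. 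At an expected jump date $r_k = s_j \in \cS$, using the independence of $\xi_j$ from the diffusive dynamics, the condition $f(s_j^-,x) = \mathbb{E}\bigl[f(s_j, x + \xi_j)\bigr]$ factorises and, comparing exponents, yields
\begin{equation*}
    b(s_j^-,T) = b(s_j,T), \qquad a(s_j^-,T) = a(s_j,T) - \log \mathbb{E}\bigl[e^{-\xi_j b(s_j,T)}\bigr] .
\end{equation*}
Hence $b$ is continuous at each $s_j$ (so the Riccati equation is solved straight across the expected jump dates, as in part~(i)), while $a$ acquires the jump $-\log\mathbb{E}[e^{-\xi_j b(s_j,T)}]$. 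Integrating the ODE for $a$ backward from $T$ with $a(T,T)=0$ and inserting these jumps produces~\eqref{affinea} verbatim, the indicator $\ind_{[0,s_j)}(t)$ recording precisely the jumps of $a$ occurring after time $t$.

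Finally I would confirm that the $a$, $b$ so obtained give the bond price. Since the pricing PDE derived in Section~\ref{PDEformulation} already shows that the true price function solves~\eqref{affinePE}, and the candidate $\exp(-a(t,T) - x b(t,T))$ solves the same PDE on every interval $[r_{k-1},r_k)$ together with all transmission and terminal conditions, it suffices to identify the two by uniqueness. The hard part is exactly this identification in the genuinely affine case $\delta \not\equiv 0$ (e.g.\ CIR): there $\sigma^2(t,x) = \gamma(t) + \delta(t) x$ is neither bounded nor bounded away from zero, so Lemma~\ref{LemmaExistence} and Theorem~\ref{FKformula} do not apply off the shelf. I would therefore perform a direct verification, applying Itô's formula to the discounted candidate $\exp\bigl(-\int_0^t \rho_u\,\eta(\dd u)\bigr)\exp\bigl(-a(t,T) - \rho_t b(t,T)\bigr)$ between relevant dates, checking through the transmission conditions that the jumps at roll-over and expected-jump dates leave it continuous in the discounted sense, so that it is a local martingale; the delicate point is then securing enough integrability (exploiting $\delta \geq 0$ and the non-negativity of the jumps assumed in this setting) to upgrade it to a true martingale, whence conditioning recovers the risk-neutral formula~\eqref{eq:pricefunction} with $H \equiv 1$. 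I would also note that~\eqref{Riccati}, run backward from $b_N(T) = 0$ with $\delta \geq 0$, admits a non-exploding solution on each $[t_n, t_{n+1})$, so that $b$ is well defined.
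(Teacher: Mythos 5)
Your proposal follows essentially the same route as the paper's own proof: posit the exponential-affine ansatz, substitute into the system \eqref{affinePE}, match the coefficients of $x^0$ and $x^1$ to obtain the Riccati equation \eqref{Riccati} for $b$ and the linear ODE for $a$, read the transmission conditions as continuity of $b$ (with the jump $-\log\mathbb{E}[e^{-\xi_j b(s_j,T)}]$ in $a$) at dates in $\cS$ and the unit jump of $b$ (with continuity of $a$) at dates in $\cT$, and assemble \eqref{affinea} by backward induction from $a(T,T)=b(T,T)=0$. Your final identification step — verifying by a direct It\^o/martingale argument that the candidate equals the risk-neutral price when $\delta \not\equiv 0$, where Lemma \ref{LemmaExistence} and Theorem \ref{th:FK} do not apply — is an extra precaution beyond what the paper records (it simply invokes the standard affine procedure of \cite{bjork:arbitragetheory}), not a deviation from its method.
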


\bigskip 
Note that the coefficient $b = -\partial_x P_T/P_T$ which represents the duration (sensitivity) of the { ZCB} with respect to the instantaneous short-rate, is not affected by jumps in the dynamics of the interest rate,  but possibly only by discontinuities in the num\'eraire.

The following result establishes a relationship between the price of a { ZCB} in the current framework (with stochastic discontinuities and jumps in the num\'eraire)  and its counterpart in the continuous (\mbox{no-jump}) setting. The proof is omitted since it is based on standard computation.

\begin{corollary} \label{Pcont} Let $P^c_T(t,x) =\exp(-A(t,T) - x B(t,T))$ be the {pricing function} of a { ZCB}\footnote{Cf.~\citep[Proposition~22.2]{bjork:arbitragetheory}.} maturing at time $T$, assuming absence of jumps in both the dynamics of the short-rate and in the num\'eraire. Let $a(t,T)$, $b(t,T)$ be the functions defined in \eqref{ZCBpriceGeneral}.
Then, the following relations hold: 
 
\begin{enumerate}[(i)]
\item\label{Pcont:b} $\displaystyle b(t,T) =  B(t,T) + \sum_{n=1}^N B_n(t)\ind_{[0, \, t_n) } (t),$ where $B_n(t)$  denotes the solution, on the interval $[0, t_n)$, of  the Riccati equation
\begin{equation} \label{affineB}  B_n'(t) + \left[\beta (t) -\delta(t) {\mathfrak b}_{n}(t)\right] B_n(t) - \frac{1}{2} \delta(t) B_n^2 (t) = 0
\end{equation}
with terminal condition $B_n(t_{n}^-) = 1$, where 
${\mathfrak b}_n(t) = B(t,T) + \sum_{k = n+1}^N B_k(t)$ for $n=1, \ldots, N-1$,  ${\mathfrak b}_{N}(t) = B(t,T)$; 

\bigskip 
\item\label{Pcont:a}  $\displaystyle a(t,T) =  A(t,T) + \sum_{n=1}^N A_n(t) \ind_{[0, \, t_n) } (t) + \sum_{n =1}^{N-1} \sum_{l=n+1}^N C_{nl}(t) \ind_{[0, \, t_n) } (t) -  \sum_{j =1}^M \log \mathbb{E}[\de^{-\xi_j  b(s_j,T)}] \ind_{[0, \, s_j) } (t), $ 
where 
\begin{eqnarray*} & & A_n(t) = \int_t^{t_n} \left[ (\alpha (u) - \gamma(u) B(u,T)) \,   B_n (u) - \frac{1}{2}   \gamma(u) B^2_n(u) \right] \dd u,\\  
& &   C_{nl}(t) = - \int_t^{t_n}  \gamma(u) B_n(u) B_l(u) \, \dd u. 
\end{eqnarray*}
\end{enumerate}
\end{corollary} 

\bigskip 
\begin{remark} The terms $ \log \mathbb{E}[\de^{-\xi_j  b(s_j,T)}]$ depend solely on the distribution of $\xi_j$, and can be explicitly computed once this distribution is specified. In Section \ref{sec:simulation}, we evaluate these expressions in the two cases where $\xi_j$
follows either a normal distribution or a discrete distribution with two possible outcomes.
\end{remark}

\bigskip 
 
As an example we consider the Vasicek model. In \eqref{affinePE}, let $\alpha(t) = \alpha$, $\beta(t) = \beta$, $\gamma(t) = \sigma^2$ and   $\delta (t) = 0$, with $\alpha \in \mathbb{R}, \beta \in \mathbb{R^-}, \sigma \in \mathbb{R^+}.$ 
In this case, the model reduces to a Vasicek short-rate model with discontinuities. From standard results (see for instance \cite{bjork:arbitragetheory}, Proposition 22.3) we derive$$B(t,T) =\frac{\de^{\beta(T-t)}-1}{\beta}, \qquad   A(t,T)= \frac{\alpha}{\beta} \left[B(T-t) - (T-t)\right] -  \frac{\sigma^2}{2\beta^2}  \left[\frac{\beta}{2} B^2(T-t)  - B(T-t) + (T-t)\right].$$
Equation~\eqref{affineB} in Corollary~\ref{Pcont}~\ref{Pcont:b} becomes
$ B'_n(t) + \beta B_n(t) = 0
$, which, together with the terminal condition $B_n(t_n^-) = 1$ yields $B_n(t) = e^{\beta(t_n -t)}.$   
So we obtain
\begin{eqnarray*}b(t,T) &=& \frac{\de^{\beta(T-t)}-1}{\beta} + \sum_{n=1}^{N} \de^{\beta(t_n -t)} \ind_{[0, \, t_n) } (t).  
\end{eqnarray*}

\bigskip  

\noindent Applying Corollary~\ref{Pcont}~\ref{Pcont:a}, we  find $a(t,T)$  as follows: 

\begin{eqnarray*} 
a(t,T) & =& A(t,T) +  \left(\alpha + \frac{\sigma^2}{\beta} \right) \sum_{n=1}^N  B(t,t_n ) \ind_{[0, \, t_n) } (t)  \\ &-& \frac{\sigma^2}{2}   \, \sum_{n=1}^N   \left(\frac{\de^{\beta(T+t_n -2t)} - \de^{\beta(T-t_n)}}{\beta^2}+\frac{\de^{2\beta(t_n -t)} - 1}{2\beta}\right) \ind_{[0, \, t_n) } (t) \\& - &    \sigma^2 \sum_{n =1}^{N-1} \sum_{l=n+1}^N  \left( \frac{\de^{\beta(t_n + t_l -2t)}-\de^{\beta(t_l -t_n)  }}{2\beta} \right) \ind_{[0, \, t_n) } (t)  -\sum_{j =1}^M \log \mathbb{E}[\de^{-\xi_j  b(s_j,T)}] \ind_{[0, \, s_j) } (t).    
\end{eqnarray*}
   
\bigskip
\subsection{Price of a call option in the extended Vasicek model with gaussian jumps}

In this section, we derive the price of a call option under the assumption that the short-rate follows Vasicek-type dynamics with  stochastic discontinuities,  and normally distributed jump sizes, in order to compare it with the numerical approximations presented in the following sections.  For simplicity, we consider the case of constant coefficients; however, the result can be readily extended to accommodate time-dependent coefficients (Hull-White model). Additionally, we note that a similar approach can be applied to determine the price of a forward-looking caplet, which was calculated using a different method in \cite{fontanaetal2024:termstruct}. 

Let $f(t, \rho_t)$  denote the price at time $t$ of a call option  on a ZCB with bond maturity $S$, option expiration date $T<S$, and  strike price $\widehat{K}$. The payoff at time $T$ is given by 
$$H(\rho_T) = (P_S(T,\rho_T)- \widehat{K})^+.$$
A closed-form expression for $f(t, x)$ is provided in the next proposition, whose proof can be found in Appendix \ref{proofcall}.  Note that for the purposes of our calculation, it makes sense to consider the subintervals determined by the jump times or roll-over dates only up to time $T$. Any jumps occurring between 
$T$ and $S$ are implicitly incorporated into   $P_S$. Therefore we set, as in the previous subsections,  $t_{N+1}=s_{M+1}=r_{K+1}=T$.
\begin{proposition}\label{affinecallprop}   
Assume each jump size $\xi_j$ is normally distributed with mean $m_j$ and variance $\gamma^2_j$ for all $j=1, \ldots, M.$ For all $t \le T$, the {pricing function $f$} of a call option on a ZCB maturing at $S$, with option expiration date $T<S$, and  strike price $\widehat{K}$, is 
\begin{equation}\label{affinecallpricefinal}
	f(t,x) =  P_S(t, x) \Phi(d_1(t,x)) - \widehat{K} P_T(t, x) \Phi(d_2(t,x)), 
\end{equation}	 
 
\noindent where $\Phi$ denotes the cumulative distribution function of the standard Gaussian distribution,  
\begin{eqnarray} \label{affinecallfinal}
 	d_1 (t,x ) = \frac{1}{\sigma_c(t)} \log \left(\frac{P_S(t, x)}{P_T(t,x) \widehat{K} }\right) + \frac{\sigma_c(t)}{2}, \qquad 
	 d_2  (t,x) =  d_1  (t,x) - \sigma_c(t),   \end{eqnarray} 
and $\sigma_c$ is given by 
\begin{equation}  \label{sigmacallfinal}
	\sigma_c (t) = b(T,S)  \left(  \sqrt{\frac{\sigma^2}{2\beta} \left(\de^{2\beta (T-t)}-1 \right)  +  \sum_{j =1}^M \gamma_j^2 \de^{2\beta (T-s_j)}   \ind_{[0, \, s_j) } (t)
}	\right) \, .
     \end{equation}
\end{proposition}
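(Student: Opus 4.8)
The plan is to price the call by a change of numéraire to the $T$-forward measure, thereby reducing the problem to a Black-type log-normal computation. First I would invoke Proposition~\ref{propZCB} to write both bonds in exponential-affine form, $P_T(t,x) = \exp(-a(t,T) - x\,b(t,T))$ and $P_S(t,x) = \exp(-a(t,S) - x\,b(t,S))$, so that the forward price $F_t \coloneqq P_S(t,\rho_t)/P_T(t,\rho_t)$ has
\begin{equation*}
\log F_t = -\bigl[a(t,S) - a(t,T)\bigr] - \bigl[b(t,S) - b(t,T)\bigr]\rho_t \, ,
\end{equation*}
an affine function of $\rho_t$. Since $b(T,T) = 0$ and $a(T,T) = 0$ (the jump sum in~\eqref{affinea} vanishing because $s_j < T$), at expiry $P_T(T,\cdot) = 1$ and hence $F_T = P_S(T,\rho_T)$, so the payoff reads $(F_T - \widehat K)^+$.

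Next I would introduce the $T$-forward measure $\bbQ^T$, whose density process corresponds to taking the traded $T$-bond $P_T(\cdot,\rho)$ as numéraire in place of $S^0$; here one uses that $P_T(t,\rho_t)/S^0_t = \bbE[1/S^0_T \mid \cF_t]$ is a genuine $\bbQ$-martingale. By the standard change-of-numéraire technique (see, e.g., \cite{bjork:arbitragetheory}), the price becomes $f(t,x) = P_T(t,x)\,\bbE^{\bbQ^T}[(F_T - \widehat K)^+ \mid \rho_t = x]$, and $F$ is a $\bbQ^T$-martingale, being the ratio of the traded $S$-bond to the numéraire. The key structural fact I would then establish is that, conditionally on $\rho_t = x$, the variable $\rho_T$ is Gaussian under $\bbQ^T$ with the \emph{same} variance as under $\bbQ$: between jump times $\rho$ is an Ornstein–Uhlenbeck process, the $\xi_j$ are independent and normal, and the conditional Radon–Nikodym density reduces to $\exp(-\int_t^T \rho_s\,\eta(\dd s))$, which is the exponential of a functional linear in the driving Brownian motion and in the $\xi_j$. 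Such an exponential tilt of a jointly Gaussian law shifts means but leaves the covariance unchanged. Consequently $\log F_T = -a(T,S) - b(T,S)\rho_T$ is Gaussian under $\bbQ^T$, with conditional variance $\sigma_c^2(t) = b(T,S)^2\,\mathrm{Var}(\rho_T \mid \rho_t = x)$.

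The main computational step is then the variance of $\rho_T$. Writing the representation
\begin{equation*}
\rho_T = \de^{\beta(T-t)}x + (\text{deterministic}) + \sigma\!\int_t^T \de^{\beta(T-u)}\,\dd W_u + \sum_{j}\de^{\beta(T-s_j)}\xi_j\,\ind_{\{t<s_j\le T\}} \, ,
\end{equation*}
and using the mutual independence of the stochastic integral and the jumps, the variance splits into the diffusive part $\tfrac{\sigma^2}{2\beta}(\de^{2\beta(T-t)}-1)$ and the jump part $\sum_j \gamma_j^2\,\de^{2\beta(T-s_j)}\ind_{[0,s_j)}(t)$, producing exactly~\eqref{sigmacallfinal}. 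Finally, since $F$ is a $\bbQ^T$-martingale the mean of $\log F_T$ is pinned down by $\bbE^{\bbQ^T}[F_T \mid \rho_t = x] = F_t$; applying the Black formula for a log-normal variable yields $\bbE^{\bbQ^T}[(F_T - \widehat K)^+ \mid \rho_t = x] = F_t\,\Phi(d_1) - \widehat K\,\Phi(d_2)$ with $d_1,d_2$ as in~\eqref{affinecallfinal}. Multiplying by $P_T(t,x)$ and using $P_T(t,x)F_t = P_S(t,x)$ gives~\eqref{affinecallpricefinal}.

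I expect the main obstacle to be the rigorous justification of the forward-measure change in the presence of discontinuities both in $\rho$ and in $S^0$: one must confirm that $\bbQ^T$ is well defined (i.e.\ the relevant discounted bond is a true, not merely local, martingale) and verify carefully that under $\bbQ^T$ the jump sizes remain normal with unchanged variances $\gamma_j^2$, only their means being tilted. Everything else is the routine Gaussian bookkeeping underlying Black's formula.
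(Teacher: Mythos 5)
Your proof is correct, but it takes a genuinely different route from the paper's. The paper proceeds by verification within its own PDE framework: it substitutes the Black-type ansatz \eqref{affinecallpricefinal} into the system \eqref{affinePE}, uses the fact that $P_T$ and $P_S$ themselves solve the pricing PDE to cancel all terms except those involving $\sigma_c$, and so reduces the problem to the scalar ODE \eqref{sigmaED}, $\sigma_c'(t) + \tfrac{\sigma^2}{2}\,(b(t,T)-b(t,S))^2/\sigma_c(t) = 0$, on each interval between relevant dates; $\sigma_c$ is then pieced together across those dates from the boundary conditions — at $r_k \in \cT$ the relations $P_{S/T}(r_k^-,x) = \de^{-x}P_{S/T}(r_k,x)$ force continuity of $\sigma_c$, while at $r_k = s_j \in \cS$ a direct Gaussian computation (writing $\xi_j = m_j + \gamma_j Z$, applying Fubini--Tonelli, and convolving Gaussian laws) yields $\sigma_c^2(s_j^-) = \sigma_c^2(s_j) + \gamma_j^2 \de^{2\beta(T-s_j)} b^2(T,S)$, and recursion gives \eqref{sigmacallfinal}. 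You instead compute the price directly under the $T$-forward measure: the Esscher/Cameron--Martin observation that exponentially tilting a jointly Gaussian vector by a linear functional shifts means but not covariances, combined with the variation-of-constants representation of $\rho_T$ and the $\bbQ^T$-martingale property of the forward bond price $F$, immediately delivers the variance \eqref{sigmacallfinal} and Black's formula. Your route is shorter and structurally more transparent: it explains \emph{why} the jumps enter only through the additive variance contributions $\gamma_j^2\de^{2\beta(T-s_j)}$ while the means $m_j$ are absorbed into the bond prices, and it is essentially the martingale methodology of \cite{fontanaetal2024:termstruct} that the paper deliberately sets aside in order to showcase the PDE machinery. What the paper's verification buys is self-containedness — the formula is certified as the solution of \eqref{affinePE}, consistent with Lemma \ref{LemmaExistence} and Theorem \ref{th:FK}, with no change of measure to justify — whereas your argument rests on the two points you yourself flagged: that the discounted $T$-bond is a true $\bbQ$-martingale so that $\bbQ^T$ is well defined, and that abstract Bayes yields the $\bbQ^T$-martingale property of $F$ despite the stochastic discontinuities of both $\rho$ and the numéraire. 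Both hold here, since in this Gaussian-with-independent-Gaussian-jumps setting all relevant exponential moments are finite, so the gaps you identified are genuinely routine rather than substantive.
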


\bigskip 
\begin{remark}  If the jumps are not normally distributed, obtaining a closed-form expression for the option price becomes challenging, as the RFR no longer follows a normal distribution once a non-normal jump is introduced. In such cases, numerical approximation becomes essential for computing the price. 
\end{remark}
 
\bigskip 

\section{Numerical results}\label{numerical_section}

In this section, we test our theoretical results in several numerical simulations. The aim is twofold: firstly, we propose a semi-analytic method and a finite difference method to solve the pricing PDE and test their accuracy for claims with closed-form solution; in doing so, as a second outcome, the usefulness, ease of use, and versatility of the PDE approach and of the numerical approximation methods will be highlighted in comparison to the use of closed-form formulas.

The numerical results are obtained mainly for the Vasicek model with stochastic discontinuities, because in this case some closed-form formulas are available and we derived the Green's function explicitly. However, we emphasize the potentiality of the PDE approach in more general cases, namely for more general contingent claims and interest rate models.

Also in this section, for the sake of simplicity, we assume $\cS \cap \cT = \emptyset$, that is, there are no common jump times between the RFR and the numéraire.
\bigskip

\subsection{Distribution of the risk-free rate jumps}\label{sec:simulation}

We considered two different distributions of the risk-free rate jumps: (i) Gaussian $\cN(m_j,\gamma_j^2)$ 
and (ii) a discrete distribution where $\xi_j=m_j$ with probability $p_j$ and $\xi_j=-m_j$ with probability $1-p_j$.  

Figure \ref{Trajectory} shows two possible trajectories of the Hull-White spot interest rates with stochastic discontinuities, showing that our model allows reproducing the main stylized facts of the overnight rates. 
The first panel is obtained by setting $\delta(t)=0$, $\sigma(t)=\sqrt{\gamma(t)} =0.2-0.1/(1+\exp(-2000(t-0.4)))$, $\alpha(t) =0.0075/(1+\exp(-2000(t-0.4)))$, $\beta(t) =-0.3-99.7/(1+\exp(-2000(t-0.4)))$ in (\ref{shortrateaff});
the jumps $\xi_j$ are normally distributed with $m_j=0.05$ and $\gamma_j=0.1$ and occur at $s_j=0.3, 0.5, 0.6, 0.8, 0.9$.
The second panel is obtained by setting 
$\sigma(t) =0.05+0.05/(1+\exp(-2000(t-0.4)))$, $\alpha(t) =0.1-0.0925/(1+\exp(-2000(t-0.4)))$, $\beta(t) =-0.3-99.7/(1+\exp(-2000(t-0.4)))$
and considering discretely distributed jumps with $m_j=0.04$ and $p_j=0.9$ occurring at $s_j=0.3, 0.5, 0.6, 0.8, 0.9$.
In both cases, due to the values of the SDE coefficients, the first jump produces a spike (small diffusive component and very high mean reversion speed) while the following jumps correspond to structural changes of the spot rates to a new level (slower mean reversion).
\begin{figure}[H]
\centering
    \includegraphics[width=0.95\textwidth]{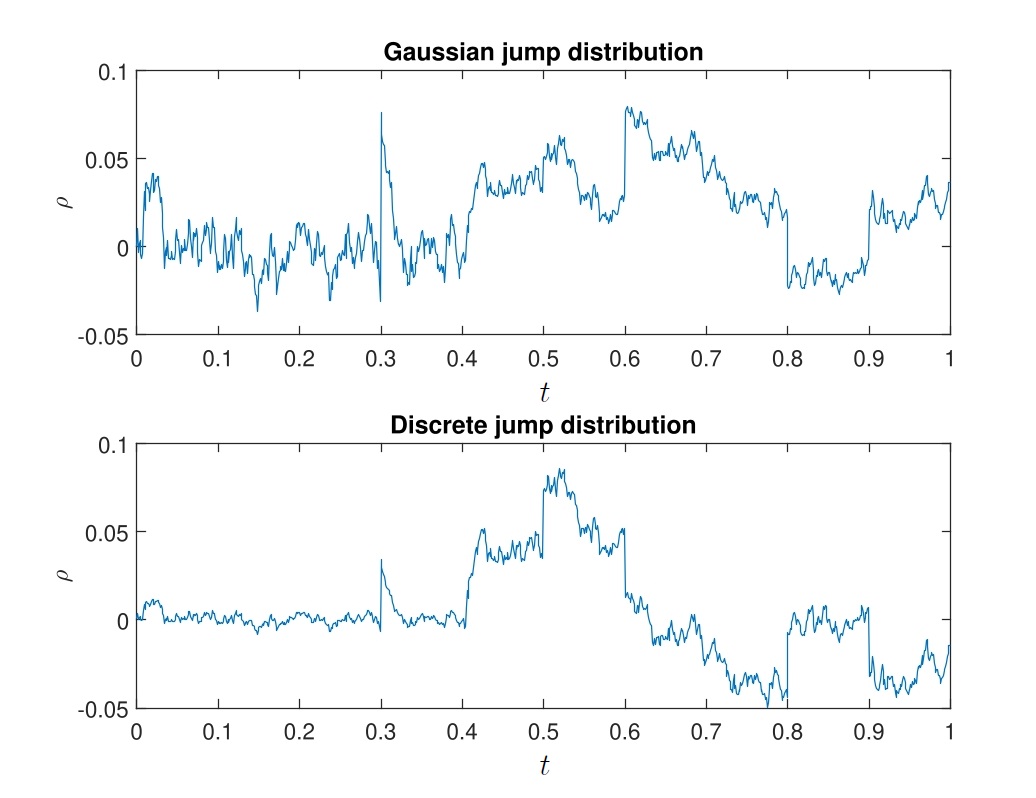}
	\caption{Hull-White spot interest rates with stochastic discontinuities at $s_j=0.3, 0.5, 0.6, 0.8, 0.9$.}\label{Trajectory}
\end{figure}

We compute the term
$$\log \mathbb{E} \left[\de^{-\xi_j  b(s_j, T) }   \right]$$
in the closed-form formula (\ref{affinea}). We get the following
\begin{enumerate}
    \item[{(i)}] $\displaystyle -m_jb(s_j,T)+\frac{\gamma_j^2}{2}b^2(s_j,T)$ for Gaussian jumps;
    \item[{(ii)}] $\log (\de^{-m_j b(s_j,T)}p_j+\de^{m_j b(s_j,T)}(1-p_j))$ for discrete jumps. 
\end{enumerate}

\subsection{Semi-analytic method}

We now describe the semi-analytic approach. We have proved in Section \ref{PDEformulation} that the price of any contingent claim at any time $t$ and for any interest rate $x$ can be obtained by solving a set of linear backward parabolic PDEs on each period $[r_k,r_{k+1}]$, for $k=0,1,\ldots,K$, with $r_{K+1}=T$.
Furthermore, applying Theorem~\ref{th:FK} in each interval $[r_k,r_{k+1})$, the solution to each of these PDEs uniquely exists and can be expressed by a Feynman-Kač formula 
\begin{equation}
    f(t,x)=\bbE\left[g(\rho_{r_{k+1}}^{t,x})\de^{-\int_t^{r_{k+1}} \rho_s^{t,x} \dd s}\right] =\int_{\mathbb{R}} G(t,r_{k+1};x,\xi) g(\xi) \, \dd\xi,
\label{solG}\end{equation}
where $g(\xi)=f(r_{k+1}^-,\xi)$ is the terminal condition and $G$ is the so-called \textit{fundamental solution} or \textit{Green's function} associated to the parabolic operator
$$\cL f(t, x) - x f(t, x).$$
For the reader's convenience, the analytic expression of the Green's function for the Vasicek model is provided in Appendix \ref{AppGreen}. 

Based on these considerations, we can build our semi-analytic method that, starting from maturity $T$, proceeds backward and computes the contingent claim price at the beginning of each period $[r_k,r_{k+1})$ by equation (\ref{solG}), using as final condition at $r_{k+1}$
\begin{equation}
g(\xi)=f(r_{k+1}^-,\xi)=\left\{
\begin{array}{ll}
  H(\xi),   &  \textrm{for} \ r_{k+1} = T,\\
  \int_{\mathbb{R}}f(r_{k+1},\xi+z) \, {Q}_{m(k+1)}(\dd z), & \textrm{for} \ r_{k+1} \ \in {\cal S},\\
  \de^{-\xi} f(r_{k+1},\xi),   & \textrm{for} \ r_{k+1} \ \in {\cal T}.
\end{array} \right.
\label{jump_condition}\end{equation}
In the following box, we summarize the steps of our algorithm.\\
\begin{algorithm}
\caption{\emph{Semi-analytic algorithm}}\label{AlgGreen}
\vspace{-0.2cm}{\begin{algorithmic}
    \STATE Starting from the payoff condition in $T$, {\bf compute} the solution in $[r_{K},T) \times \mathbb{R}$ using (\ref{solG}).
\FOR {$k=K,K-1,\ldots 2,1$}
\STATE {\bf Compute} the jump condition \eqref{jump_condition} at $\{r_{k}^-\} \times \mathbb{R}$;
\STATE {\bf Compute} the solution at $\{r_{k-1}\} \times \mathbb{R}$ using \eqref{solG}.
\ENDFOR
\STATE The desired contingent claim value is the solution at $r_0=0$.
\end{algorithmic}}
\end{algorithm}

\bigskip
In principle, this algorithm can be implemented to mimic the analytical procedure: for instance, in the \textit{Matlab} language, by creating a function handle that stores an association to the Green's function, one can pass it to a quadrature procedure that numerically integrates the function using global adaptive quadrature with default error tolerances. 
However, such a procedure suffers from the curse of dimensionality (and, as a consequence, of the computational times) as the number of jumps increases because of the recursive call to the Green's function.

A possible way to reduce the computational complexity of the semi-analytic procedure is to localize the problem to a bounded domain $[\bar{A},\bar{\bar{A}}]$ and compute formula (\ref{solG}) by means of the trapezoidal rule on grid values of the spot rates. In Appendix \ref{Localization} we detail a strategy for choosing $\{\bar{A},\bar{\bar{A}}\}$, which can be useful also in the finite difference approach.
The theoretical properties of the parabolic operator ensure that the errors introduced by the domain truncation tend to zero as $\bar{A}\to -\infty$, $\bar{\bar{A}}\to  +\infty$ and do not affect the results in an inner region of {observation}\footnote{{The interval $[x_{\min},x_{\max}]$ in the numerical examples below is chosen to highlight the solution behavior. It corresponds to the region where we want to achieve small approximation errors. Nevertheless, in some figures, such as Figure \ref{FigureZCB}, we consider an even larger spatial domain in order to illustrate the behavior of the pricing function more clearly.}}, say $[x_{\min},x_{\max}]$. As an example, we consider the extended Vasicek model, where the PDE between jumps is
\begin{equation}\label{VasPDE}
    {\partial}_t f (t, x) +  \left (\alpha + \beta x \right)  {\partial}_x f (t, x) + \frac{1}{2}  \sigma^2 \, {\partial_{xx}^2}  f   (t, x) -  f (t,x) x = 0.
\end{equation}
Figure \ref{SemiAnalytic} shows the price at $t=0$ of a { ZCB} expiring at $T=1$, with $\alpha = 0.075$, $\beta = -0.3$, $\sigma =0.1$, when one jump $\xi_1 \sim \cN(0.09,0.25)$ in the interest rate occurs at time $t=0.5$. In the first panel, the blue line represents the semi-analytical solution while the red line represents the ZCB price as given by formula (\ref{ZCBpriceGeneral}); the second and third panels show the relative error between the two.
It is evident that the error is relevant (and nevertheless small) only at the truncation points $\bar{A}=-4,\bar{\bar{A}}=4$, due to the truncation of the integral in (\ref{solG}) and in the jump condition. Moreover, it is very small (less than $10^{-11}$) in the inner region of interest rates  $[-0.5, 0.5]$.
Nevertheless, the main drawback of the semi-analytical method is that it requires knowledge of the Green's function, which is not always available.

\begin{center}
	\includegraphics[width=1.05\textwidth]{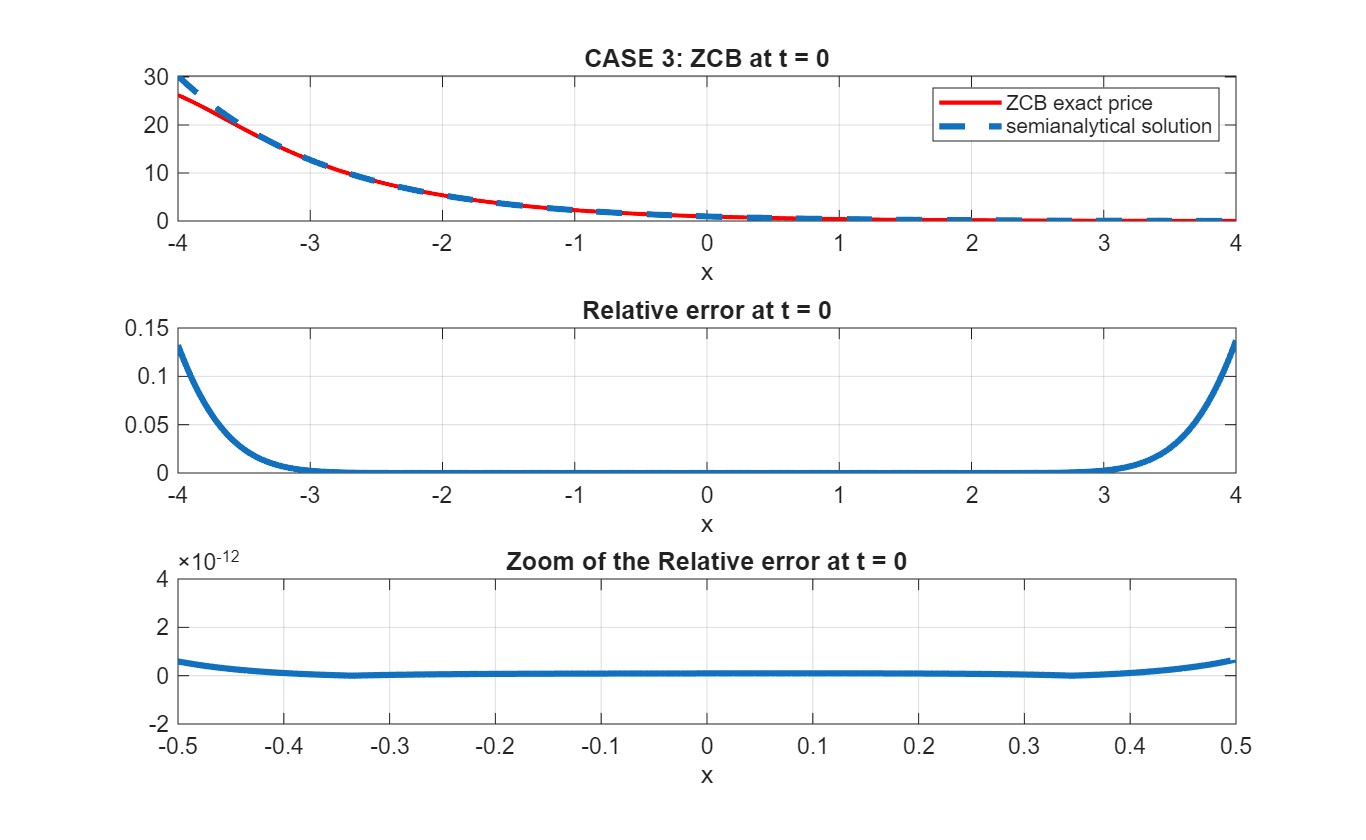}\captionof{figure}{ZCB price at $t=0$ and relative error {(on the whole localized domain $[\bar{A},\bar{\bar{A}}]$ and zoomed on the observation region $[-0.5,0.5]$)} of the semi-analytic method in the Vasicek model, when one jump on the interest rate occurs at time $t=0.5$, with Gaussian distribution $\cN(0.09,0.25)$. Parameter values: $\alpha = 0.075$, $\beta = -0.3$, $\sigma =0.1$, $T=1$.}\label{SemiAnalytic}
\end{center}

\subsection{Finite difference method}\label{FDM}

To illustrate the finite difference approach, without being reductive, we again consider the extended Vasicek model.
In order to numerically solve the backward PDE \eqref{VasPDE} in the unbounded domain $[r_{k-1},r_k) \times \mathbb{R}$, we first localize it to a bounded domain $[r_{k-1},r_k) \times [\bar{A},\bar{\bar{A}}]$. This leads us to define some artificial boundary conditions at $x=\bar A$ and $x=\bar{\bar{A}}$. 
For instance, we may impose on the boundary the final condition value on each time interval, that is, $f(t,\bar A)=f(r_{k}^-,\bar A)$ and $f(t,\bar{\bar{A}})=f(r_{k}^-,\bar{\bar{A}})$, {or any other kind of Neumann boundary condition.}

Another critical point to consider is the computation of the final condition in the presence of a jump on the  { RFR} at time $r_k \in {\cal S}$. Computing the integral term at a point $x$ requires the knowledge of $f(r_k,\cdot)$ on the whole support of ${Q}_{m(k)}$. Here the issue is more delicate, because the integral affects the numerical solution on the entire domain and not only at the artificial boundary, due to the non-local nature of this final condition.
We proceed as follows. 

(i) In the case of an absolutely continuous jump distribution, we decompose the integral as
$$f(r_k^-,x)=\int_{\mathbb{R}} f(r_k,x+z) \, {Q}_{m(k)}(\dd z)=\int_{\mathbb{R}} f(r_k,x+z)\varphi_{m(k)}(z) \, \dd z=$$
$$=\int_{-\infty}^{\bar A} f(r_k,\xi)\varphi_{m(k)}(\xi-x) \, \dd\xi+\int_{\bar A}^{\bar{\bar{A}}} f(r_k,\xi)\varphi_{m(k)}(\xi-x) \, \dd\xi+\int_{\bar{\bar{A}}}^{+\infty} f(r_k,\xi)\varphi_{m(k)}(\xi-x)\, \dd\xi,$$
where $\varphi_{m(k)}$ is the probability density function associated to ${Q}_{m(k)}$. The integral over the computational domain $[\bar A,\bar{\bar{A}}]$ is computed using the trapezoidal rule from the nodal values of the numerical solution at $r_k$, while the two generalized integrals are neglected, taking into account the fast decaying tail behavior of the probability density $\varphi_{m(k)}$ of the jumps.
Obviously, the choice of $\bar A$, $\bar{\bar{A}}$ should also ensure that, for the points $x$ within the region of {observation} $[x_{\text{min}}, x_{\text{max}}]$, the support of the density function $\varphi_{m(k)}(\xi-x)$ is mostly contained within \([\bar{A}, \bar{\bar{A}}]\). A possible choice for $\bar{A}$ and $\bar{\bar{A}}$ is discussed in Appendix \ref{Localization} for the Vasicek model.

(ii) In the case of discrete distribution of spot rates, the discretization step $\Delta x$ in the finite difference scheme should be chosen such that $m_j=c_j \Delta x$ for some integer $c_j$, $j=1,\ldots,M$. This is possible if we assume, for instance, that the jumps on the spot rate are multiples of a fixed amount as a consequence of central banks decisions.
The computation of the final condition at node $x_i$ in the presence of a discrete-distributed jump on the  { RFR} at time $r_k \in {\cal S}$ is performed by setting
$$f(r_k^-,x_i)=f(r_k,x_{i+c_k})p_j+f(r_k,x_{i-c_k})(1-p_j).$$
When either $x_{i+c_k}$ or $x_{i-c_k}$ falls outside the computational domain, we take the value at the boundary.

All this said, {the discretization of \eqref{VasPDE} by $\theta$-method in time and centered finite differences in space over the domain $[r_{k-1},r_k) \times (\bar A,\bar{\bar{A}})$} reads as follows
\begin{equation}\label{FDsyst}
\begin{array}{c}
    \displaystyle\frac{V_i^{j+1}-V_i^j}{\Delta t_k} + 
\frac{1}{2}\sigma^2\theta \frac{V_{i-1}^{j+1}-2V_i^{j+1}+V_{i+1}^{j+1}}{\Delta x^2} + \frac{1}{2}\sigma^2(1-\theta) \frac{V_{i-1}^{j}-2V_i^{j}+V_{i+1}^{j}}{\Delta x^2} +\\
\displaystyle\theta (\alpha+\beta x_i) \frac{V_{i+1}^{j+1}-V_{i-1}^{j+1}}{2\Delta x} + (1-\theta) (\alpha+\beta x_i) \frac{V_{i+1}^{j}-V_{i-1}^{j}}{2\Delta x} -
\theta x_i V_i^{j+1} - (1-\theta) x_i V_i^j=0, 
\end{array}
\end{equation}
for $i=1,2,\ldots,N-1$, $j=M_k-1,M_k-2,\ldots,1,0$ and $\Delta t_k= (r_k-r_{k-1})/M_{k}$, $\Delta x= (\bar{\bar{A}}-\bar A)/N$.
We note that, since this linear algebraic system of equations is solved backward in time, the choice $\theta=1$ yields the \textit{explicit Euler method}, while $\theta=0$ yields the \textit{implicit Euler method} and $\theta=1/2$ yields the \textit{Crank-Nicolson method}.

We come back to the issue of the artificial boundary conditions imposed at the ends $\bar A,\bar{\bar{A}}$ of the short-rate interval. 
Following \cite{TavellaRandall} (p. 123), 
{since our aim is to develop a numerical method that can accommodate different contingent claim valuation problems, we do not impose Neumann or Dirichlet boundary conditions. Instead, we enforce the PDE itself as a boundary condition and employ non-centered finite difference approximations for all the derivatives located at $x_0=\bar A$ and $x_N=\bar{\bar{A}}$. This leads to the addition of the following equations
$$\frac{V_0^{j+1}-V_0^j}{\Delta t_k} + 
\frac{1}{2}\sigma^2\theta \frac{V_{0}^{j+1}-2V_1^{j+1}+V_{2}^{j+1}}{\Delta x^2} + \frac{1}{2}\sigma^2(1-\theta) \frac{V_{0}^{j}-2V_1^{j}+V_{2}^{j}}{\Delta x^2} +$$
$$\theta (\alpha+\beta x_0) \frac{V_{1}^{j+1}-V_{0}^{j+1}}{\Delta x} + (1-\theta) (\alpha+\beta x_0) \frac{V_{1}^{j}-V_{0}^{j}}{\Delta x} -
\theta x_0 V_0^{j+1} - (1-\theta) x_0 V_0^j=0,$$ 
and
$$\frac{V_N^{j+1}-V_N^j}{\Delta t_k} + 
\frac{1}{2}\sigma^2\theta \frac{V_{N-2}^{j+1}-2V_{N-1}^{j+1}+V_{N}^{j+1}}{\Delta x^2} + \frac{1}{2}\sigma^2(1-\theta) \frac{V_{N-2}^{j}-2V_{N-1}^{j}+V_{N}^{j}}{\Delta x^2} +$$
$$\theta (\alpha+\beta x_N) \frac{V_{N}^{j+1}-V_{N-1}^{j+1}}{\Delta x} + (1-\theta) (\alpha+\beta x_N) \frac{V_{N}^{j}-V_{N-1}^{j}}{\Delta x} -
\theta x_N V_N^{j+1} - (1-\theta) x_N V_N^j=0.$$
We observe that the present scheme discretizes both the drift and volatility terms using \mbox{first-order} \mbox{one-sided} finite difference operators. Nevertheless, \mbox{higher-order} discretizations, such as \mbox{second-order} difference schemes, may also be considered\footnote{{In Section~\ref{App:ExUniq}, we proved that the solution may exhibit exponential growth. Obviously, alternative boundary conditions could be imposed, such as, for instance, assuming a linear dependence of the solution on the short-rate. By enlarging the computational domain, boundary approximation errors do not affect the numerical solution in the inner region of interest.}}.}

\subsection{Zero-coupon bond}\label{ZCBsimul}

In this section, we validate our PDE formulation for pricing { ZCBs}   and analyze the impact of stochastic discontinuities on their valuation. We focus on the Vasicek model, as it offers closed-form solutions and an analytical expression for the Green's function, making it well-suited for theoretical and numerical comparison.
Figure \ref{FigureZCB} shows the effect of jumps on the ZCB price in the following scenarios:
\begin{itemize}
    \item CASE 1: classical Vasicek model (Parameter values: $\alpha = 0.075$, $\beta = -0.3$, $\sigma =0.1$, $T =1$);
    \item CASE 2: Vasicek model + one jump on the num\'eraire occurring at time $t=0.8$;
    \item CASE 3: Vasicek model + one Gaussian jump $\cN(m,\gamma^2)$ on the interest rate at time $t=0.5$, with $m=0.09$ and $\gamma=0.5$;
    \item CASE 4: Vasicek model + one jump on the num\'eraire at time $t=0.8$ + one Gaussian jump $\cN(m,\gamma^2)$ on the interest rate at time $t=0.5$, with $m=0.09$ and $\gamma=0.5$;
\end{itemize}
\begin{center}\vspace{-0.35cm}
	\includegraphics[width=0.78\textwidth]{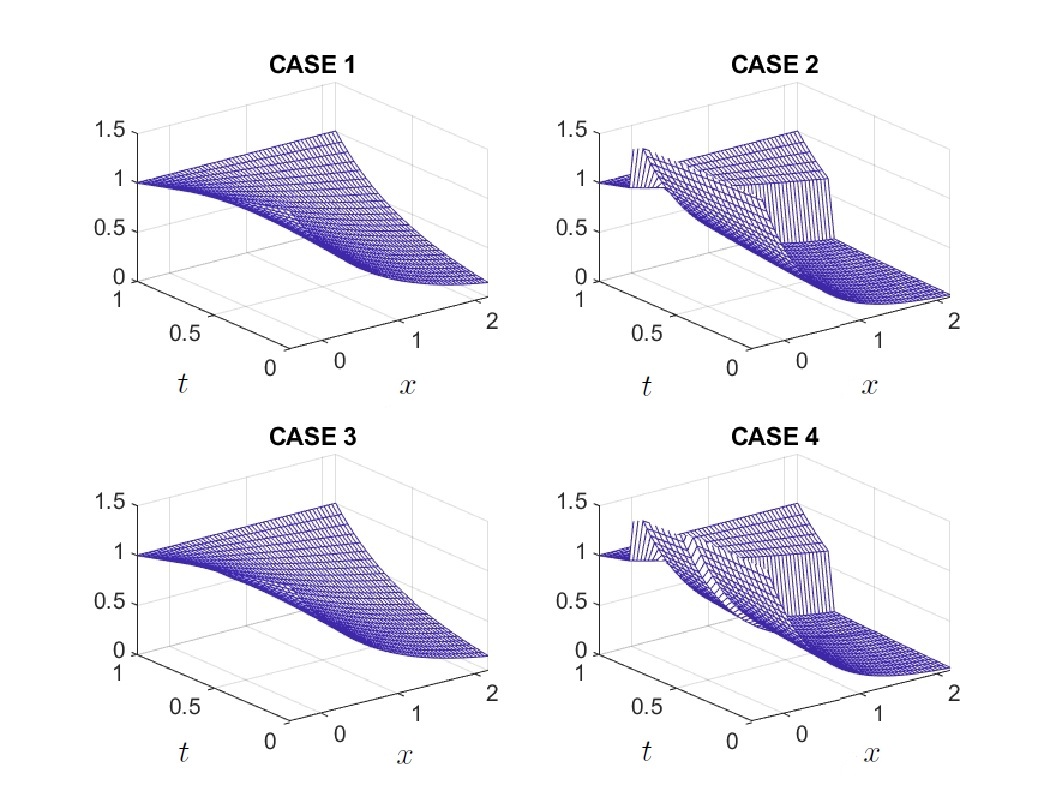}\vspace{-0.5cm}\captionof{figure}{ZCB price. Parameter values: $\alpha = 0.075$, $\beta = -0.3$, $\sigma =0.1$, $T =1$, $\mu=a/b$.}\label{FigureZCB}
\end{center}
We observe that, while the effect of the jumps on the num\'eraire is evident, the jump in the RFR results in only a minor discontinuity in the ZCB price in CASE 3, becoming more pronounced in CASE 4.

Suppose that we are interested in pricing ZCBs when the spot rate ranges within the interval $[-0.5,1]$. The finite difference approximation to the closed-form formula (\ref{ZCBpriceGeneral}) is obtained by the Crank-Nicolson scheme with $\Delta x= 0.005$ and {$\Delta t= 0.002$}. The computational domain selected by the procedure described in Appendix \ref{Localization} is $[-1.6,2.065]$ in CASE 1 and CASE 2 and a bit larger, namely $[-5.1204,5.6196]$ in CASE 3 and CASE 4.

{To illustrate the effect of the imposed boundary conditions and their impact on the quality of the numerical solution in the inner region $[-0.5,1]$, we plot in Figure \ref{Errori} the absolute error between the exact and numerical solutions over the entire localized spatial domain for Cases 2 and 3. As previously stated, the errors at the boundary are rapidly damped and do not affect the quality of the approximate solution in the inner region.}
\begin{center}
	\includegraphics[scale=0.35]{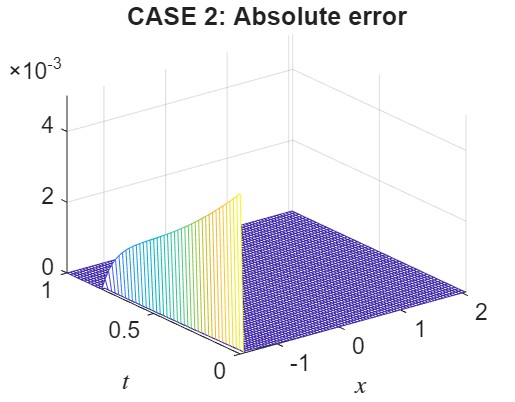}
    \includegraphics[scale=0.35]{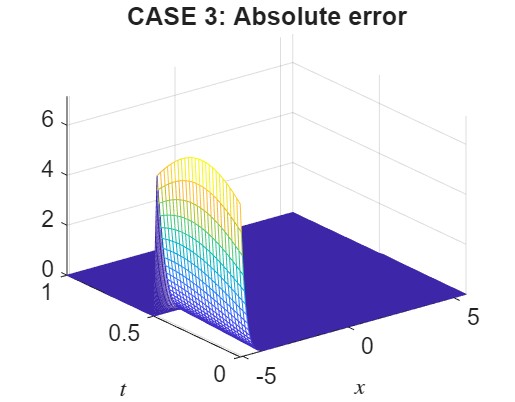}
    \captionof{figure}{{Absolute error between the exact and numerical solutions over the entire localized spatial domain for Cases 2 and 3. Parameter values: $\alpha = 0.075$, $\beta = -0.3$, $\sigma =0.1$, $T =1$.}}\label{Errori}
\end{center}
In all cases, the correctness of the PDE formulation and the precision of the discretization scheme ({up to an overall order of $10^{-6}$ or $10^{-7}$ in the observation region $[-0.5,1]$}) are evident in Figures \ref{ZCB_CASO2}, \ref{ZCB_CASO3}, and \ref{ZCB_CASO4}, where the exact and numerical solutions are plotted together with their absolute and relative errors. {We observe that the exact and numerical solutions are visually indistinguishable to the human eye.}
\begin{center}\vspace{-0.1cm}
    \includegraphics[width=0.9\textwidth]{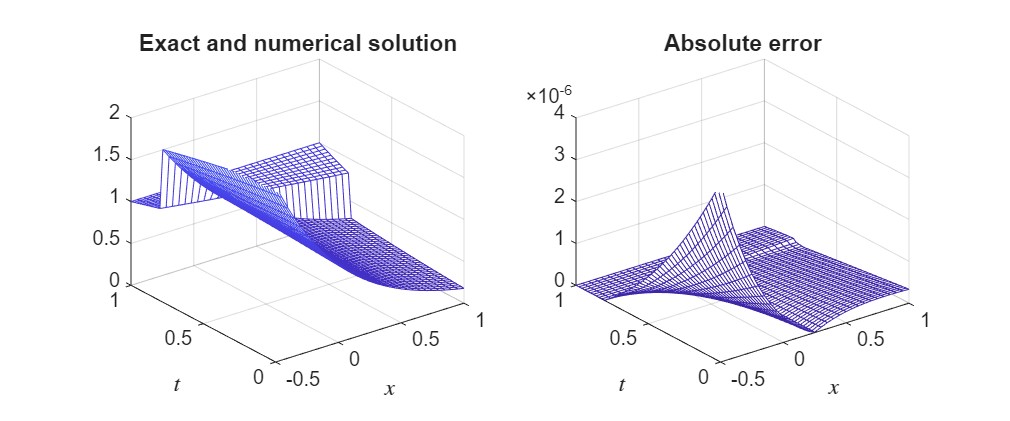}
    \vspace{-0.2cm}
    \captionof{figure}{ZCB Gaussian CASE 2. Parameter values: $\alpha = 0.075$, $\beta = -0.3$, $\sigma =0.1$, $T =1$.}\vspace{-0.1cm}\label{ZCB_CASO2}
\end{center}

\begin{center}
    \includegraphics[width=0.9\textwidth]{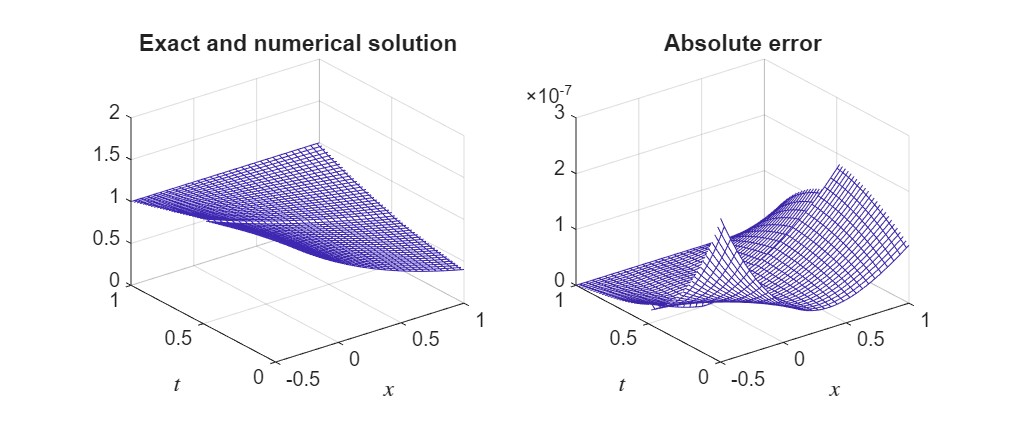}
    \vspace{-0.2cm}
    \captionof{figure}{ZCB Gaussian CASE 3. Parameter values: $\alpha = 0.075$, $\beta = -0.3$, $\sigma =0.1$, $T =1$.}\vspace{-0.1cm}\label{ZCB_CASO3}
\end{center}
\begin{minipage}[h]{17.cm}
    \begin{center}
    \includegraphics[width=0.9\textwidth]{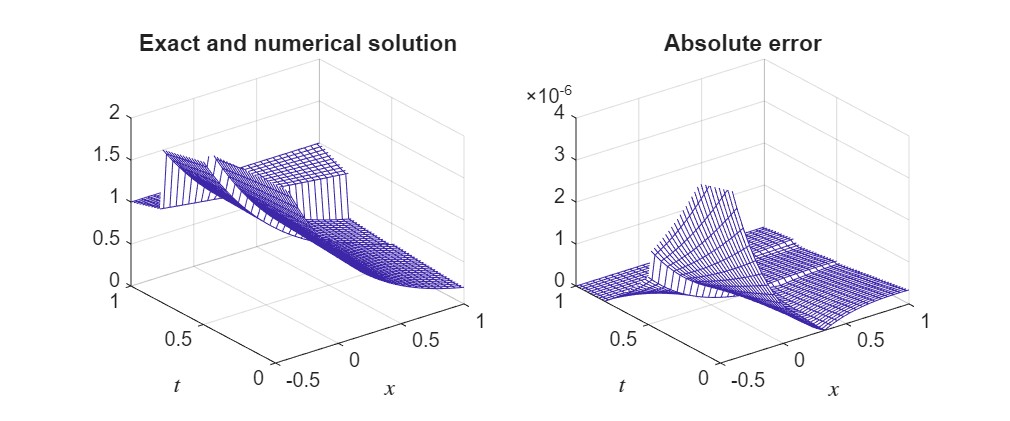}
    \vspace{-0.2cm}
    \captionof{figure}{ZCB Gaussian CASE 4. Parameter values: $\alpha = 0.075$, $\beta = -0.3$, $\sigma =0.1$, $T =1$.}\label{ZCB_CASO4}
\end{center}
\end{minipage}\\

Table \ref{tab:ZCBgaussian} shows the maximum (in time) average absolute error
$$
\textrm{Abs. Err.}:=\max_{\scriptsize\begin{array}{l}
    j=0,\ldots M_k-1\\
    k=0.\ldots, K
\end{array}}\mathop{\textrm{mean}}_{ x_i\in[-0.5,1]}|V_i^j-P_T(t_j,x_i)|
$$
and relative error w.r.t. the exact solution in \eqref{ZCBpriceGeneral} of the finite difference approximation and of the semi-analytic solution for the ZCB in the inner region $[-0.5,1]$ of {observation} in CASE 2 and CASE 4 with Gaussian jump distribution, for different choices of the discretization step $\Delta x$ (and {$\Delta t= 0.002$}). 
The error reduction in the finite difference approximation shows {second-order convergence (in space) of the numerical method} to the closed-form solution, while it is evident that the errors in the semi-analytic method are very small and remain relatively stable, with only slight variations attributable to floating-point arithmetic computations.
Results for CASE 3 are analogous to CASE 4 and thus not reported.
\begin{minipage}[h]{17.5cm}
\vspace{0.5cm}
\begin{center}
\begin{small}
    \begin{tabular}{l|c|c|c|c|c|c|c|c}
    \hline
      &  \multicolumn{4}{c|}{\textbf{CASE 2}} & \multicolumn{4}{c}{\textbf{CASE 4}}\\
    \hline \hline
     & \multicolumn{2}{c|}{\textbf{Finite Diff. Method}} & \multicolumn{2}{c|}{\textbf{Semi-analytic method}} & \multicolumn{2}{c|}{\textbf{Finite Diff. Method}} & \multicolumn{2}{c}{\textbf{Semi-analytic method}}\\
    \hline
     $\Delta x$ & \textbf{Abs. Err.}  & \textbf{Rel. Err.} & \textbf{Abs. Err.} & \textbf{Rel. Err.} & \textbf{Abs. Err.}  & \textbf{Rel. Err.} & \textbf{Abs. Err.} & \textbf{Rel. Err.}\\
     \hline
1e-2 & 3.45e-6 & 4.02e-6 & 9.70e-15 & 1.16e-14 & 4.25e-6 & 4.08e-6 & 1.19e-13 & 1.56e-13\\
5e-3 & 8.59e-7 & 9.78e-7 & 7.34e-15 & 1.03e-14 & 1.07e-6 & 9.93e-7 & 9.62e-14 & 1.22e-13\\
2.5e-3 & 2.06e-7 & 2.15e-7 & 7.80e-15 & 1.06e-14 &  2.63e-7 & 2.22e-7 & 1.17e-13 & 1.47e-13\\
1.25e-3 & 4.57e-8 & 3.71e-8 & 8.24e-15 & 1.12e-14 & 6.98e-8 & 5.68e-8 & 1.29e-13 &  1.62e-13
    \end{tabular}
    \captionof{table}{ZCB in CASE 2 and CASE 4 with Gaussian jump distribution. Average absolute and relative error of the finite difference approximation and of the semi-analytic solution for the {ZCB} in the inner region $[-0.5,1]$.}
    \label{tab:ZCBgaussian}
\end{small}
\end{center}\vspace{0.25cm}
\end{minipage}\\
Similar results are obtained in the presence of a discrete jump distribution for the spot interest rate. We replicate scenarios similar to Cases 3 and 4, featuring a discretely distributed jump in the RFR, as outlined in Section~\ref{sec:simulation}.
Figure \ref{ZCB_Bernoulli4} shows the finite difference approximation to the closed-form formula (\ref{ZCBpriceGeneral}) in the computational domain $[-1.60,2.065]$ with $\Delta x= 0.005$ and {$\Delta t= 0.002$} for CASE 4. 
\begin{center}
\includegraphics[width=0.9\textwidth]{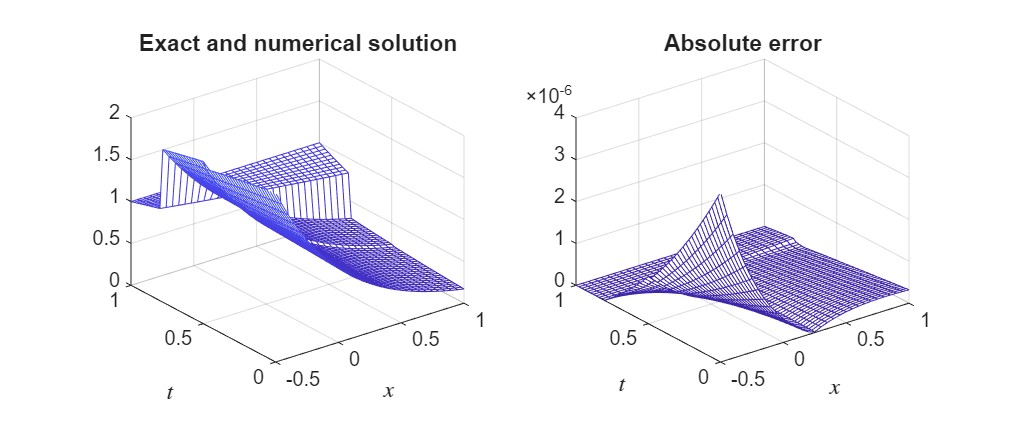}
\captionof{figure}{ZCB CASE 4 with discrete jumps. Parameter values: $\alpha = 0.075$, $\beta = -0.3$, $\sigma =0.1$, $T =1$, $m=0.09$, $p=0.7$.}\label{ZCB_Bernoulli4}
\end{center}

Table \ref{tab:ZCBdiscrete} shows the maximum (in time) average absolute and relative error of the finite difference approximation and of the semi-analytic solution for the ZCB in the inner region $[-0.5,1]$ in CASE 4 with discrete jump distribution. The same comments on the convergence of the numerical method and on the error stability of the semi-analytic method hold.
These numerical results testify to the validity of our PDE-based model for pricing a {ZCB}.
\begin{table}[]
    \centering
    \begin{tabular}{c|c|c|c|c}
    \hline
      &  \multicolumn{4}{c}{\textbf{CASE 4}}\\
    \hline \hline
     & \multicolumn{2}{c|}{\textbf{Finite Diff. Method}} & \multicolumn{2}{c}{\textbf{Semi-analytic method}}\\
    \hline
     $\Delta x$ & \textbf{Abs. Err.}  & \textbf{Rel. Err.} & \textbf{Abs. Err.} & \textbf{Rel. Err.}\\
     \hline
1e-2 & 3.27e-06 & 4.02e-06 & 9.84e-15 & 1.15e-14\\
5e-3 & 8.14e-07 & 9.75e-07 & 7.46e-15 & 1.02e-14\\
2.5e-3 & 1.95e-07 &  2.12e-07 & 7.94e-15 & 1.05e-14\\
1.25e-3 & 4.35e-08 & 3.83e-08 & 8.38e-15 &   1.11e-13
    \end{tabular}
    \caption{ZCB in CASE 4 with discrete jump distribution. Average absolute and relative error of the finite difference approximation and of the semi-analytic solution for the {ZCB} in the inner region $[-0.5,1]$.}
    \label{tab:ZCBdiscrete}
\end{table}

{We conclude this section with a test assessing the performance of our numerical procedure as the number of jumps $K$ increases, which has important implications for use in future calibration. 
We consider separately the cases in which only the number of jumps in the num\'eraire increases and those in which only the number of jumps in the interest rate increases, in order to highlight the different behaviors in terms of computational speed and accuracy.
The results are reported in Table \ref{tab:PiuSalti}.}
{We observe that, in CASE 3 with discrete jumps, both the discretization errors and the CPU time remain very stable as $K$ increases. In CASE 3 with Gaussian jumps, the errors slightly deteriorate due to the approximate evaluation of the jump condition at each time level $t_j$.}
 {Different considerations apply to CASE 2. In this case, the deterioration in the quality of the approximation is due to the jump condition, which shifts the ZCB price to a higher level at each jump time $t_j$.  Second-order convergence is still maintained; however, achieving higher accuracy in the region of observation $[-0.5,1]$ requires a reduction in $\Delta x$. This is due to the scaling condition associated with jumps in the num\'eraire: the scaling factor $\de^{-x}$ can be large on the left-hand side of the computational domain $[-1.6, 2.065]$, thereby amplifying any approximation errors affecting the numerical solution at $t_{j}^+$. 
} 
{This effect would be reduced if we were interested only in computing ZCB prices for positive spot rates. Indeed, in this case the absolute error for four jumps remains as low as $9.9\cdot 10^{-7}$.}
\begin{table}[]
    \centering
    \begin{tabular}{c|c|c|c|c|c|c|c|c|c}
    \hline
      &  \multicolumn{3}{|c|}{\textbf{CASE 3 Gaussian}} & \multicolumn{3}{|c|}{\textbf{CASE 3 Discrete}} & \multicolumn{3}{|c}{\textbf{CASE 2}}\\
          \hline \hline
     $K$ & \textbf{Abs. Err.}  & \textbf{$\%$ Err.} & \textbf{CPU} & \textbf{Abs. Err.} & \textbf{$\%$ Err.} & \textbf{CPU} & \textbf{Abs. Err.} & \textbf{$\%$ Err.} & \textbf{CPU}\\
     \hline
1 & 7.49e-8 & 8.89e-8 & 2.16 &   5.83e-8 & 6.28e-8 & 0.09 & 8.59e-7 &   9.78e-7 & 0.11  \\
2 & 1.02e-7 & 1.33e-7 & 2.86 &   5.99e-8 & 6.75e-8 & 0.19 & 3.59e-6 &   3.71e-6 & 0.12  \\
3 & 1.30e-7 & 1.78e-7 & 2.95 &   6.17e-8 & 7.36e-8 & 0.22 & 9.47e-6 &   7.95e-6 & 0.17  \\
4 & 1.86e-7 & 2.48e-7 & 3.83 &   6.26e-8 & 7.96e-8 & 0.28 & 1.88e-5 &   1.19e-5 & 0.17  \\
    \end{tabular}
    \caption{ZCB for increasing number of jumps $K$. Absolute and relative error at $t=0$ and CPU time (in seconds) for the finite difference approximation in the inner region $[-0.5,1]$. Discretization steps: $\Delta x= 0.005$ and $\Delta t= 0.002$.}
    \label{tab:PiuSalti}
\end{table}

\subsection{Call option on $P_T(t, x)$}

The derivation of the closed-form price of a call option on a ZCB under the assumption that the short-rate follows Vasicek-type dynamics with  Gaussian stochastic discontinuities is detailed in Proposition \ref{affinecallprop}. From this point on, we focus our attention on the accuracy of the numerical approximation method, as the semi-analytic method yields results with an accuracy of the order of $1\de-10$ or even smaller. Therefore, the semi-analytic method perfectly replicates the closed-form solution, when available, and can serve as a valid and easier to implement substitute for the closed-form price when the Green's function associated with the PDE in (\ref{eq:lastPDE}) is available.

We simulate the same scenarios as in Section \ref{ZCBsimul}, with the same parameter values. The strike price is $K=0.5$, while the maturity of the ZCB is at $S=1.5$.
Again, the finite difference approximation to the closed-form formula (\ref{affinecallpricefinal}) is obtained by the Crank-Nicolson scheme in the same computational domain as in Section \ref{ZCBsimul}, with $\Delta x= 0.005$ and {$\Delta t= 0.002$}. 
The correctness of the PDE formulation and the precision of the discretization scheme are evident in Figures 
\ref{CallCASO2Gauss}, \ref{CallCASO3Gauss} and \ref{CallCASO4Gauss}, where the exact and numerical solutions are plotted together with their absolute and relative errors.

\begin{center}
    \includegraphics[scale=0.4]{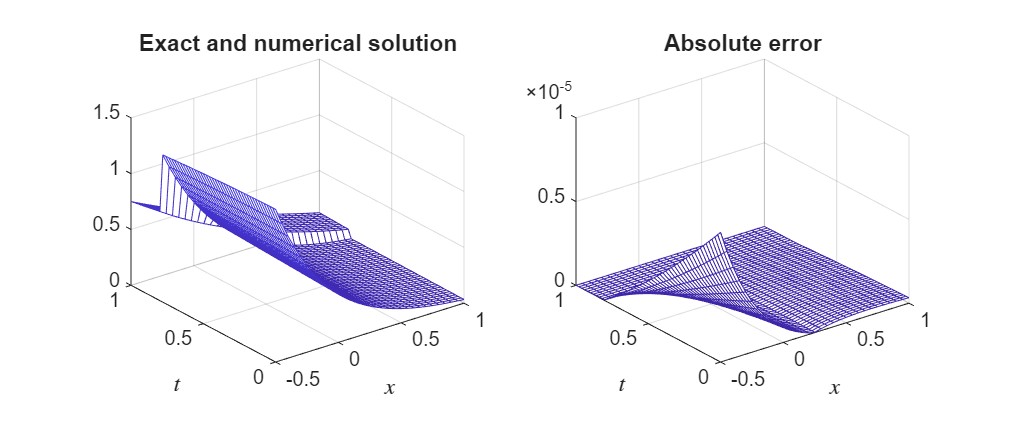}
    \captionof{figure}{Call option, CASE 2. Parameter values: $\alpha = 0.075$, $\beta = -0.3$, $\sigma =0.1$, $K=0.5$, $T =1$, $S=1.5$.}\label{CallCASO2Gauss}
\end{center}

\begin{center}
	\includegraphics[scale=0.4]{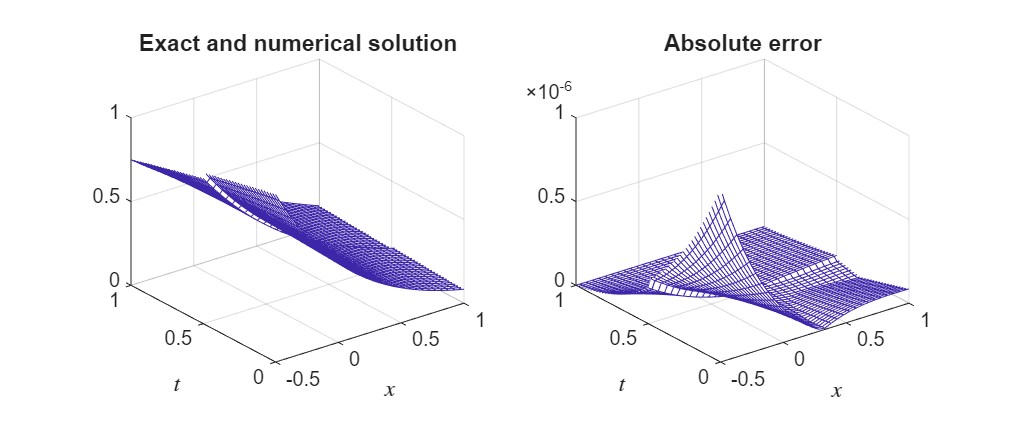}
    \captionof{figure}{Call option, CASE 3 with Gaussian jump. Parameter values: $\alpha = 0.075$, $\beta = -0.3$, $\sigma =0.1$, $m=0.09$, $\gamma=0.5$, $K=0.5$, $T =1$, $S=1.5$.}\label{CallCASO3Gauss}
\end{center}

\begin{center}
	\includegraphics[scale=0.4]{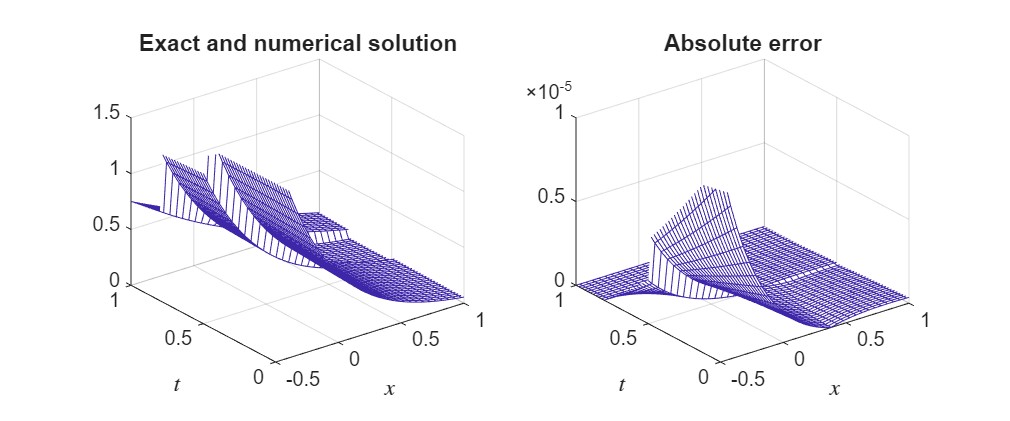}
    \captionof{figure}{Call option, CASE 4 with Gaussian jump. Parameter values: $\alpha = 0.075$, $\beta = -0.3$, $\sigma =0.1$, $m=0.09$, $\gamma=0.5$, $K=0.5$, $T =1$, $S=1.5$.}\label{CallCASO4Gauss}
\end{center}

The corresponding maximum (in time) average absolute and relative errors of the finite difference approximation in the inner region $[-0.5,1]$ for decreasing $\Delta x$ and {$\Delta t= 0.002$} are shown in Table \ref{tab:CallOption}. In all the examined cases, the numerical method converges to the exact price of the option.
\begin{table}[ht]
    \small\centering
    \begin{tabular}{c|c|c|c|c|c|c|c|c}
    \hline
      &  \multicolumn{2}{c|}{\textbf{CASE 1}} & \multicolumn{2}{c|}{\textbf{CASE 2}} & \multicolumn{2}{c|}{\textbf{CASE 3}} & \multicolumn{2}{c}{\textbf{CASE 4}}\\
    \hline \hline
     $\Delta x$ & \textbf{Abs. Err.}  & \textbf{Rel. Err.} & \textbf{Abs. Err.} & \textbf{Rel. Err.} & \textbf{Abs. Err.}  & \textbf{Rel. Err.} & \textbf{Abs. Err.} & \textbf{Rel. Err.}\\
     \hline
1e-2 & 9.11e-7 & 2.68e-6 & 5.40e-6 & 1.45e-5 & 9.46e-7 & 2.41e-6 & 8.54e-6 & 1.27e-5 \\
5e-3 & 2.26e-7 & 6.56e-7 & 1.36e-6 & 3.62e-6 & 2.29e-7 & 5.17e-7 & 2.15e-6 & 3.16e-6 \\
2.5e-3 & 5.37e-8 &  1.49e-7 & 3.38e-7 & 9.03e-7 & 5.88e-8 & 1.33e-7 & 5.42e-7 & 7.90e-7 \\
1.25e-3 & 1.10e-8 & 2.29e-8 & 8.28e-8 &   2.19e-7 & 1.54e-8& 3.04e-8 & 1.37e-7 & 1.97e-7 
    \end{tabular}
    \caption{Call option on ZCB with Gaussian jumps. Average absolute and relative error of the finite difference approximation in the inner region $[-0.5,1]$.}
    \label{tab:CallOption}
\end{table}

Finally, the case of a call option on ZCB with discrete jumps is considered.
The parameter values are the same as those mentioned above.
In this case, the price of the derivative is no longer available in \mbox{closed-form.} Therefore, we assume the semi-analytic solution as a benchmark to validate the finite difference one.
The convergence of the finite difference approximation is evident in Table \ref{tab:CallOptionDiscrete}, where the average absolute and relative errors computed in the inner region $[-0.5,1]$ with respect to the semi-analytic solution at $t = 0$ are shown. 
\begin{table}[ht]
    \centering
    \begin{tabular}{c|c|c|c|c}
    \hline
      &  \multicolumn{2}{c|}{\textbf{CASE 3}} & \multicolumn{2}{c}{\textbf{CASE 4}}\\
    \hline \hline
     $\Delta x$ & \textbf{Abs. Err.}  & \textbf{Rel. Err.} & \textbf{Abs. Err.} & \textbf{Rel. Err.}\\
     \hline
1e-2 & 8.77e-07 & 2.74e-06 & 5.04e-06 & 1.47e-05 \\
5e-3 & 2.18e-07 & 6.72e-07 & 1.27e-06 & 3.68e-06 \\
2.5e-3 & 5.19e-08 & 1.54e-07 & 3.16e-07  & 9.18e-07 \\
1.25e-3 & 1.08e-08 & 2.54e-08 & 7.75e-08 & 2.23e-07 
    \end{tabular}
    \caption{Call option on ZCB with discrete jumps in Cases 3 and 4. Average absolute and relative error of the finite difference approximation in the inner region $[-0.5,1]$.}
    \label{tab:CallOptionDiscrete}
\end{table}

\section{Conclusions}
This paper develops a PDE approach for pricing interest rate derivatives in the presence of stochastic discontinuities. Within an arbitrage-free framework, we consider a general short-rate model for  { RFR}s that incorporates discontinuities at fixed dates with random sizes, as well as a num\'eraire exhibiting discontinuities at roll-over dates.
First, we show that under suitable assumptions, the price of a \mbox{European-style} derivative can be represented as a function of time and the current RFR, which solves an associated PDE. We also establish a Feynman-Kač representation for the solution.
Next, we specialize the framework to the affine term structure case and derive a closed-form solution for the price of a { ZCB}. For more general derivatives, such as options on { ZCB}s, closed-form solutions are available only under additional assumptions, e.g., in the Hull-White model with Gaussian jumps.
As an alternative, we propose two numerical methods for solving the PDE: (i) a semi-analytic method based on the explicit construction of the Green's function, and (ii) a finite difference scheme. We validate both approaches by comparing the numerical results to closed-form solutions in the Vasicek model with stochastic discontinuities, when available, and then we provide an example of their effectiveness in a case with unknown solution.
Our findings highlight the flexibility and effectiveness of the PDE framework, especially in situations where closed-form formulas are either unavailable or cumbersome to derive. This approach proves particularly promising for pricing more complex contingent claims and for applications in more general short-rate models.
Future research will focus on calibrating the model to market data and extending the methodology to richer term structure models and multi-factor settings. 

\section*{Acknowledgements}{We acknowledge financial support from the Italian Group for Scientific Calculus (INdAM - GNCS Project CUP\textunderscore E53C24001950001) and from the Bank of Italy. We thank Davide Addona, Luca Donati, Claudio Fontana, Gianluca Fusai, Luca Lorenzi, Daniele Marazzina, and Thorsten Schmidt for useful discussions and comments on the paper. We also thank the two anonymous referees and the Associate Editor for their careful review and constructive comments, which helped improve the quality of the manuscript.}

 \bibliographystyle{plainnat}
 \bibliography{Bibliography}
 
 \vskip3cm 

\appendix
\section{Proofs of Section \ref{affinesection}} \label{proofs}

\renewcommand{\theequation}{a.\arabic{equation}}

\subsection{Proof of  Proposition \ref{propZCB}.}

\noindent We write
$ \displaystyle P_T(t,x) = \sum_{k=0}^{K}  \exp(-a_k(t) - x \hat b_k(t)) \ind_{\{t\in [r_{k},r_{k+1 })\}}
$
or equivalently
$$a(t,T) = \sum_{k=0}^{K} a_k(t) \ind_{\{t\in [r_{k},r_{k+1})\}}   \qquad \qquad  b(t,T) = \sum_{k=0}^{K} \hat b_k(t) \ind_{\{t\in [r_{k},r_{k+1})\}}.$$ 
Next, by applying the standard procedure for affine term structure models (see, for instance, \cite{bjork:arbitragetheory}, Chapter 22), we show that equation \eqref{affinePE} holds if and only if the following system is satisfied on each interval $[r_{k}, r_{k+1})$,  subject to the appropriate terminal conditions. 
 \begin{equation}\label{eq:bk}
\left\{ \begin{array}{ll}  \hat b_k'(t) + \beta (t) \hat b_k(t) - \frac{1}{2} \delta(t)  \hat b_k^2 (t)  + 1  = 0 \\
a'_k(t)  = - \alpha \hat  b_k(t) + \frac{1}{2} \gamma(t) \hat b_k^2 (t). 
\end{array} \right.
\end{equation}

\bigskip 

\noindent
The boundary conditions can be rewritten as follows. At time $T$, the terminal condition $P_T(T, x) \equiv 1$ implies
\begin{eqnarray} \label{generalterminalcond}
\left\{\begin{array}{l}
          b(T,T)=0 \\ a(T,T) = 0.
\end{array} \right.
\end{eqnarray}

\bigskip 

\noindent At the intermediate times, if $r_k \in \cT$, the condition 
$P_T(r_{k}^-,x) = \de^{-x} P_T(r_{k}, x)$ yields  
\begin{eqnarray} 
\label{tkcond}
\left\{\begin{array}{l}
          \hat b_{k-1}(r_{k}^-)= \hat b_{k}(r_{k}) + 1 \\   a_{k-1}(r_{k}^- )= a_{k}(r_{k}).
\end{array} \right.
\end{eqnarray}

\bigskip

\noindent For $r_k \in \cS$, $r_k=s_j$ for some $j \in \{1, \ldots, M\}$, the boundary condition  
$$ P_T(r_{k}^-, x) =   \mathbb{E}[\de^{-  a_{k}(r_k) - (x + \xi_j)  \hat b_{k}(r_k )}] =  \de^{ -a_{k}(r_k) + \log \mathbb{E}[\de^{-\xi_j \hat b_{k}(r_k)}]  - x  \hat b_{k}(r_k)} $$
implies 
\begin{eqnarray} \label{skcond}
\left\{\begin{array}{l}
          \hat b_{k-1}(r_k^-)=\hat b_{k}(r_k) \\ a_{k-1}(r_k^- ) = a_{k}(r_k) - \log \mathbb{E}[\de^{-\xi_j  \hat b_{k}(r_k)}]. \end{array} \right. 
  \end{eqnarray} 

\bigskip
\noindent The first equation in system  \eqref{skcond} implies that $b$ is continuous at $r_k \in \cS$, therefore it must have the form $b(t,T) =  \sum_{n = 0}^{N} b_n(t) \ind_{\{t\in [t_{n}, \, t_{n+1})\}} $. 
The second equation in system \eqref{tkcond} implies that $a$ is continuous at $r_k \in \cT$. Exploiting a backward inductive argument, we observe that on the interval $[r_{K}, T)$, 
$$ a(t,T) = a_{K}(t) = \int_t^T \left(\alpha(u) \hat b_{K+1}(u) - \frac{1}{2} \gamma^2(u)  \hat b^2_{K+1} (u) \right) \dd u.
$$
We next move to the interval  $[r_{K-1}, r_{K})$. If $r_K =t_N $, 
\begin{eqnarray*} a(t,T)& = & a_{K-1}(t) = a_{K} (t_N) + \int_t^{t_N} \left(\alpha(u) \hat b_{K-1}(u) - \frac{1}{2} \gamma^2(u)  \hat b^2_{K-1} (u) \right) \dd u 
\\ & = &  \int_{t_N}^T\left(\alpha(u) \hat b_{K-1}(u) - \frac{1}{2} \gamma^2(u)  \hat b_{K-1} (u) \right) \dd u +  \int_t^{r_K} \left(\alpha(u) \hat b_{K-1}(u) - \frac{1}{2} \gamma^2(u)  \hat b_{K-1} (u) \right) \dd u \\&=&
\int_t^T \left(\alpha(u) b(u,T)  - \frac{1}{2} \gamma^2(u)  b^2(u, T) \right) \dd u.
\end{eqnarray*}
If $r_K =s_M $, with a similar argument we obtain  
\begin{eqnarray*} a(t,T)& = &  -\log \mathbb{E}[\de^{-\xi_M  b(s_M, T)} ]  + \int_t^T \left(\alpha(u) b(u,T)  - \frac{1}{2} \gamma^2(u)  b^2(u, T) \right) \dd u.
\end{eqnarray*}
A recursive argument yields the claim. 
  $\square $
 
\vskip1cm 

\subsection{Proof of Proposition \ref{affinecallprop}} \label{proofcall}  

The results in the classical (no-jump) case suggest that on each subinterval $[r_{k-1}, r_{k})$ the solution should take the form \eqref{affinecallpricefinal}, since under the assumption that jumps are normally distributed, the short-rate has a normal distribution as well. 
To verify that \eqref{affinePE} is satisfied, 
 we begin by computing the necessary derivatives and substituting them into the  { PDE}.
For clarity and brevity, we suppress the dependence on $x$ and $t$ when it is unambiguous.

\begin{eqnarray}
\label{der1} {\partial_t} f &=&   (\partial_t P_S)  \Phi(d_1 ) - \widehat{K} (\partial_t P_T)\Phi(d_2) +  P_S    \Phi'(d_1 ) (\partial_t d_1) - \widehat{K} P_T   \Phi'(d_2 ) (\partial_t d_2), \\   \label{der2}  
\partial_x f &=&     (\partial_x P_S)  \Phi(d_1)- \widehat{K} (\partial_x P_T)\Phi(d_2 )  + P_S   \Phi'(d_1 ) (\partial_x d_1) - \widehat{K} P_T   \Phi'(d_2 ) (\partial_x d_2),\\    \label{der3} \partial_t d_2& = & \partial_t d_1 -\sigma'_c(t),   \\ \label{der4} 
 \partial_x d_1& = &\partial_x d_2   =  \frac{1}{\sigma_c}  \left[\frac{\partial_x P_S}{P_S }   -  \frac{\partial_x P_T}{P_T }   \right] = \frac{b(t,T) - b(t,S)}{\sigma_c}, \\ \label{der5} \partial^2_{xx} d_1  &=&  \partial^2_{xx} d_2 = 0,  \\ \label{der6} 
 \partial^2_{xx} f &=&    (\partial^2_{xx} P_S) \Phi(d_1 ) - \widehat{K} (\partial^2_{xx} P_T) \Phi(d_2)     +  2   \left [ (\partial_x P_S) \Phi'(d_1 ) (\partial_x d_1) - \widehat{K} (\partial_x P_T)   \Phi'(d_2 ) (\partial_x d_2) \right]  \\ & + &   P_S    \Phi''(d_1 ) \left(\partial_x d_1\right)^2 - \widehat{K} P_T   \Phi''(d_2 ) \left(\partial_x d_2\right)^2. \nonumber 
\end{eqnarray}
Substituting \eqref{der1}, \eqref{der2} and \eqref{der6} into the first equation of system \eqref{affinePE}, and using the fact that this equation holds for  $P_T$ and $P_S$, we obtain the following equation
\begin{eqnarray} \label{caplet1}
& &   P_S   \Phi'(d_1 ) (\partial_t d_1) - \widehat{K} P_T   \Phi'(d_2 ) (\partial_t d_2) + (\alpha + \beta x) \left[P_S   \Phi'(d_1 ) (\partial_x d_1) - \widehat{K} P_T   \Phi'(d_2 ) (\partial_x d_2)\right]    \\ \nonumber & + & \sigma^2 \left[(\partial_x P_S) \Phi'(d_1 ) (\partial_x d_1) - \widehat{K} (\partial_x P_T)   \Phi'(d_2 ) (\partial_x d_2)\right] + \frac{\sigma^2}{2} \left[ P_S    \Phi''(d_1 ) \left(\partial_x d_1\right)^2 - \widehat{K} P_T  \Phi''(d_2 ) \left(\partial_x d_2\right)^2\right ] = 0. 
\end{eqnarray}

\bigskip 
 
\noindent Note that
\begin{eqnarray*}
& & \Phi'(d_1) = \frac{1}{\sqrt{2\pi} } \de^{- \frac{d_1^2}{2}}  \qquad \qquad  \qquad \Phi'(d_2) = \frac{1}{\sqrt{2\pi} } \de^{- \frac{d_1^2}{2}}  \de^{- \frac{\sigma_c^2}{2} +\sigma_c d_1} = \Phi'(d_1) \de^{- \frac{\sigma_c^2}{2} + \sigma_c d_1}    \\ & & 
 \Phi''(d_1) = -d_1  \Phi'(d_1)  \qquad \qquad  \qquad \Phi''(d_2) = -(d_1-\sigma_c) \Phi'(d_1) \de^{- \frac{\sigma_c^2}{2} + \sigma_c d_1} .
\end{eqnarray*}
\bigskip  

\noindent In addition,
$\displaystyle  - \frac{\sigma_c^2}{2} + \sigma_c d_1 = \log\left(\frac{P_S}{P_T \widehat{K}}\right)  
\qquad \mbox {implies} \qquad \quad \left\{\begin{array}{l}    \de^{- \frac{\sigma_c^2}{2} + \sigma_c d_1 }  =  \frac{P_S }{P_T \widehat{K}} \\ P_S =\widehat{K} P_T  \de^{- \frac{\sigma_c^2}{2} + \sigma_c d_1}  = \widehat{K} P_T \frac{\Phi'(d_2)}{\Phi'(d_1)}. \end{array}\right.
$

\bigskip  

\noindent 
Therefore
\begin{eqnarray*}
& & \displaystyle  P_S    \Phi'(d_1 ) (\partial_t d_1) - \widehat{K} P_T   \Phi'(d_2 ) (\partial_t d_2) = \Phi'(d_1) P_S   \sigma'_c(t)  
\\& &	
  \displaystyle   P_S   \Phi'(d_1 ) (\partial_x d_1) - \widehat{K} P_T   \Phi'(d_2 ) (\partial_x d_2)   = 0
	\\& &
	 \displaystyle    (\partial_x P_S) \Phi'(d_1 ) (\partial_x d_1) - \widehat{K} (\partial_x P_T)   \Phi'(d_2 ) (\partial_x d_2)   = P_S \Phi'(d_1 ) \frac{(b(t,T) - b(t,S))^2 }{\sigma_c(t)} 
	\\& &
	 \displaystyle    P_S   \Phi''(d_1 ) \left(\partial_x d_1\right)^2 - \widehat{K} P_T   \Phi''(d_2 ) \left(\partial_x d_2\right)^2  =  -  P_S     \Phi'(d_1 ) \frac{(b(t,T) - b(t,S))^2 }{\sigma_c(t)}  . 
	\end{eqnarray*}

\bigskip 
\noindent Equation \eqref{caplet1} becomes
\begin{eqnarray*} \label{caplet2}
& &   \Phi'(d_1) P_S   \sigma'_c(t)    +  \sigma^2 P_S \Phi'(d_1 ) \frac{(b(t,S) - b(t,T))^2 }{\sigma_c} - \frac{\sigma^2}{2}  P_S     \Phi'(d_1 ) \frac{(b(t,T) - b(t,S))^2 }{\sigma_c(t)}   = 0 ,
\end{eqnarray*}

\medskip \noindent which is satisfied if $\sigma_c$ solves the differential equation  
 \begin{equation} \label{sigmaED}
 \sigma'_c(t) + \frac{\sigma^2}{2}  \frac{(b(t,T) - b(t,S))^2 }{\sigma_c(t)} =0    
 \end{equation}
 on each subinterval $(r_{k-1}, r_k)$, subject to the appropriate boundary condition at $r_k^-$.
 
\noindent Since $b(t,S) - b(t,T) = e^{\beta(T-t)} b(T,S)$,
equation \eqref{sigmaED} simplifies to  
$$2 \sigma_c(t) \sigma'_c(t) = - \sigma^2 b^2(T,S) \de^{2\beta (T-t)},$$
which leads to   
 \begin{equation}  \label{sigmac}
 \sigma^2_c(t) = \sigma^2_c(r_k^-) + \frac{\sigma^2}{2\beta} \, b^2(T,S) \left(\de^{2\beta(T-t)} - \de^{2\beta(T-r_k)}\right).   \end{equation}

 \bigskip 
\noindent The terminal condition $ f(T, x) = H(x)$ implies  $d_1 (T, x)=d_2  (T, x) = + \infty$ if $P_S(T, x) \ge \widehat K$, and  $d_1 (T, x)=d_2  (T, x) = - \infty$ if $P_S(T, x) < \widehat K$, hence $\sigma_c(T) = 0$. Therefore,  on the interval  $[r_K, r_{K+1}) = [r_K, T)$,  we have   
 $$ \sigma^2_c(t) = b^2(T,S) \, {\frac{\sigma^2}{2\beta} \,   \left(\de^{2\beta(T-t)} - 1\right)}.
 $$
Let us now consider a generic interval $[r_{k-1}, r_k)$. 
If $r_k \in \cT$,  the terminal condition in \eqref{affinePE} implies
\begin{equation}\label{caplet_term_t}   
P_S(r_k^-) \Phi(d_1(r_k^-,x)) - \widehat{K} P_T(r_k^-) \Phi(d_2(r_k^-,x)) = \de^{-x }  \left[P_S(r_k) \Phi(d_1(r_k,x)) - \widehat{K} P_T(r_k) \Phi(d_2(r_k,x))\right].
\end{equation}	 
The boundary conditions for the ZCB price $P_{S/T}(r_k^-) = \de^{-x} P_{S/T}(r_k)$ implies that \eqref{caplet_term_t} is satisfied if $d_{1/2}(r_k^-,x) = d_{1/2}(r_k,x)$. It also implies that $P_{S}(r_k^-)/P_{T}(r_k^-) = P_{S}(r_k)/P_{T}(r_k)$, therefore it must be $ \sigma_c(r_k^-) =  \sigma_c(r_k)$,  $\sigma_c$ is continuous at $r_k \in \cT$, and, consequently,  
\begin{eqnarray*}
\sigma^2_c(t) = \sigma^2_c(r_k) +  {\sigma^2}  b^2(T,S) \int_t^{r_k} \de^{2\beta(T-t)} \, \dd u   
 = \sigma^2_c(r_{k}) + \frac{\sigma^2}{2\beta} \, b^2(T,S) \left(\de^{2\beta(T-t)} - \de^{2\beta(T-r_{k})}\right).
\end{eqnarray*}  
To find $\sigma_c(r_k^-)$, when $r_k = s_j \in \cS$,  we use the terminal condition in  \eqref{affinePE}, that implies
\begin{eqnarray} \label{capletskT}
 &  P_S(s_j^-, x) \Phi(d_1(s_j^-,x)) = \mathbb{E} \left[ P_S(s_j, x + \xi_j)  \Phi(d_1(s_j, x+\xi_j))  \right]  
\\  
 	&  \label{capletskS} P_T(s_j^-, x) \Phi(d_2(s_j^-,x)) = \mathbb{E} \left[ P_T(s_j, x + \xi_j)  \Phi(d_2(s_j, x+\xi_j))  \right].
	\end{eqnarray}
 Because of \eqref{skcond}, that is the  terminal conditions on $P_{T/S}$,  \eqref{capletskT} and   \eqref{capletskS} become
\begin{eqnarray} \label{d1skT}
&   \mathbb{E} \left[ e^{-\xi_j b(s_j, S)} \right] \Phi(d_1(s_j^-,x)) = \mathbb{E} \left[ e^{-\xi_j b(s_j, S)}  \Phi(d_1(s_j, x+\xi_j))  \right]
\\  
 	&  \label{d2skS}  \mathbb{E} \left[ \de^{-\xi_j b(s_j, T)} \right]  \Phi(d_2(s_j^-,x)) = \mathbb{E} \left[ \de^{-\xi_j b(s_j, T)}  \Phi(d_2(s_j, x+\xi_j))  \right] .
 	\end{eqnarray}

\bigskip 

\noindent Since $\xi_j$ has a Gaussian distribution with mean $m_j$ and variance $\gamma^2_j$, we have   
 \begin{equation}
    \mathbb{E} \left[ \de^{-\xi_j b(s_j, S)} \right] = \de^{-m_j b(s_j, S) + \gamma^2_j b^2(s_j, S)/2} = \de^{-m_j b(s_j, S) + \gamma^2_j b^2(s_j, S)/2} .
 \end{equation}
 
 \medskip \noindent To calculate the right-hand side of \eqref{d1skT}, 
 we   write $\xi_j= m_j + \gamma_j Z$ where $ Z \sim \mathcal{N} (0,1)$ so that 
 \begin{equation}\label{rhs}
  \mathbb{E} \left[ \de^{-\xi_j b(s_j, S)}  \Phi(d_1(s_j, x+\xi_j))  \right]  
 = \de^{- m_j  b(s_j,S) }   \mathbb{E} \left[ \de^{- \gamma_j b(s_j, S) Z } \Phi(d_1(s_j, x+ \xi_j)\right]  .
 \end{equation}
 Therefore \eqref{d1skT} is equivalent to
\begin{equation} \label{d1skT2}
   \Phi(d_1(s_j^-,x)) = \de^{-\gamma^2_j b^2(s_j, S)/2}\mathbb{E} \left[ \de^{-\gamma_j b(s_j, S) Z}  \Phi(d_1(s_j, x+ m_j + \gamma_j Z))  \right]  .
     \end{equation}

\noindent 
Moreover,  
\begin{eqnarray*} d_1(s_j, x+ \xi_j))  = d_1(s_j, x) + \frac{ b(s_j,T) - b(s_j,S)}{\sigma_c(s_j)} \, \xi_j
=  d_1(s_j, x) - \frac{\de^{\beta (T-s_j)} b(T,S)}{\sigma_c(s_j)} \, \xi_j = D^1_j - \Sigma_j Z
\end{eqnarray*}
where we set 
$$D^1_j = d_1(s_j, x) - \frac{\de^{\beta (T-s_j)} b(T,S)}{\sigma_c(s_j)} \, m_j, \qquad \qquad \Sigma_j= \frac{\gamma_j \de^{\beta (T-s_j)} b(T,S)}{\sigma_c(s_j)}.$$
Denoting by $\varphi_{\mu, \sigma^2}$ the density function of $\mathcal{N}(\mu, \sigma^2)$ and applying Fubini-Tonelli, the expectation in the right-hand side of \eqref{d1skT2}  can be calculated as
\begin{eqnarray}
 	\nonumber 	 \mathbb{E} \left[ \de^{-\gamma_j b(s_j, S)  Z } \Phi(d_1(s_j, x+\xi_j))\right] & = & \int_{-\infty}^{+\infty} \de^{-\gamma_j b(s_j, S)  z}  \varphi_{0,1}(z)   \int_{-\infty}^{D^1_j -\Sigma_j z}  \varphi_{0,1} ( y ) \, \dd y \, \dd z \\  & = &  \int_{-\infty}^{+\infty} \varphi_{0,1}(y) \int_{-\infty}^{\frac{D^1_j-y}{\Sigma_j}}  \de^{-\gamma_j b(s_j, S) z}  \varphi_{0,1} (z) \, \dd z \, \dd y .\label{Fubini1}
 \end{eqnarray} 

\medskip\noindent For simplicity, denote $\tilde \gamma = \gamma_j b(s_j, T)$ and observe that $\de^{-\tilde \gamma^2/2 -\tilde \gamma z}  \varphi_{0,1} (z) = \varphi_{-\tilde \gamma,1} (z) $.   Then, we can write the right-hand side of    \eqref{d1skT2}  
  as 
  \begin{eqnarray*}   \int_{-\infty}^{+\infty} \varphi_{0,1}(y) \int_{-\infty}^{\frac{D^1_j-y}{\Sigma_j}}    \varphi_{-\tilde \gamma,1} (z) \, \dd z \, \dd y  &=&  \mathbb{Q} \left(\tilde Z   \le  \frac{D^1_j -Y}{\Sigma_j} \right)  =      \mathbb{Q} \left(\Sigma_j \tilde Z  + Y  \le  D^1_j  \right) ,
 \end{eqnarray*}
 where $\tilde Z \sim \mathcal{N}( -\tilde \gamma ,  1)$, $Y\sim \mathcal{N}( 0, 1)$. Because they are independent,  $\Sigma_j \tilde Z  + Y \sim \mathcal{N}( -\Sigma_j \tilde \gamma , \Sigma_j^2 +1)$, hence 
 $$ \mathbb{Q} \left(\Sigma_j \tilde Z  + Y  \le  D^1_j  \right)   =     \Phi  \left(   \frac{D^1_j  + \Sigma_j \tilde \gamma}{\sqrt{\Sigma_j^2 +1}} \right).
 $$ 
Equation \eqref{d1skT2} is therefore equivalent to
\begin{equation*}
\Phi(d_1(s_j^-,x)) =      \Phi  \left(   \frac{D^1_j  +  \Sigma_j \tilde \gamma}{\sqrt{\Sigma_j^2 +1}} \right), 
\end{equation*}
that is, 
\begin{eqnarray*}
 d_1(s^-_j,x) &=&        \frac{D^1_j  +  \Sigma_j \gamma_j b(s_j, S)}{\sqrt{\Sigma_j^2 +1}}. 
 \end{eqnarray*}
By applying the same procedure, from equation \eqref{d2skS} we obtain 
$$ d_2(s^-_j,x) =         \frac{D^2_j  + \Sigma_j\gamma_j b(s_j, T)}{\sqrt{\Sigma_j^2 +1}},
$$
where $D^2_j=  d_2(s_j, x) - \frac{\de^{\beta (T-s_j)} b(T,S)}{\sigma_c(s_j)} \, m_j = D^1_j - \sigma_c(s_j).$
It follows that
\begin{gather*}
\sigma_c(s^-_j) = d_1(s^-_j,x) - d_2(s^-_j,x) =  
\frac{\sigma_c(s_j)  +  \Sigma_j (\gamma_j b(s_j, S) -\gamma_j b(s_j, T))}{\sqrt{\Sigma_j^2 +1}}.  
\end{gather*}
Since $(\gamma_j b(s_j, S) -\gamma_j b(s_j, T)) = \gamma_j \de^{\beta (T-s_j)} b(T,S) = \sigma_c(s_j) \Sigma_j$
we have that  
\begin{eqnarray*}
\sigma_c(s^-_j)& = &   
\frac{\sigma_c(s_j)  +  \sigma_c(s_j) \Sigma^2_j}{\sqrt{\Sigma_j^2 +1}}  =  \sigma_c(s_j) \sqrt{\Sigma_j^2 +1}
\end{eqnarray*}
and, as a consequence
$$\sigma^2_c(s^-_j) = \sigma^2_c(s_j) (\Sigma^2_j + 1) = \sigma_c^2 (s_j) + \gamma^2_j \de^{2\beta (T-s_j)} b^2(T,S).
$$
A recursive application of this equality yields  
\begin{equation*}  
	\sigma^2_c (t) = b^2(T,S)  \left(   {\frac{\sigma^2}{2\beta} \left(\de^{2\beta (T-t)}-1 \right)  +  \sum_{s_j \in \cS, t < s_j < T}\gamma_j^2 \de^{2\beta (T-s_j)} }  
	\right). \qquad \qquad  \qquad \qquad \qquad \square
     \end{equation*}
 
\bigskip 
 
\renewcommand{\theequation}{b.\arabic{equation}}

\section{Mathematical and numerical insights} 

\subsection{Green's function of the pricing PDE in the Vasicek model}\label{AppGreen}

We prove that the fundamental solution over $\mathbb{R}$ associated with the PDE \eqref{VasPDE} in the Vasicek model is given by
\begin{equation}\label{GF}
G(t,s;x,\xi)= \frac{C_1(t,s;\xi)C_2(t,s;\xi)}{\sqrt{2\pi}\Sigma(t,s)} \exp \left(-\frac{(\xi-\mu(t,s;x))^2}{2\Sigma^2(t,s)}\right)=G(s-t;x,\xi),\qquad s>t,
\end{equation}
where
$$\Sigma^2(t,s)=-\frac{\sigma^2}{2\beta}(1-\de^{2\beta (s-t)}), \ \ \
\mu(t,s;x)=x \de^{\beta (s-t)} -\frac{\alpha}{\beta}(1-\de^{\beta (s-t)})+\frac{\sigma^2}{2\beta^2}(\de^{-\beta (s-t)}+\de^{\beta (s-t)}-2),$$
$$C_1(t,s;\xi)=\exp \left( -\frac{\sigma^2 \de^{-2\beta (s-t)}-4\xi\beta^2\de^{-\beta (s-t)}-4 \sigma^2 \de^{-\beta (s-t)} -4\alpha \beta \de^{-\beta (s-t)}}{4\beta^3}\right),$$
$$C_2(t,s;\xi)=\exp \left( \left(\frac{\sigma^2}{2\beta^2}+\frac{\alpha}{\beta}\right) (s-t) -\frac{3\sigma^2}{4\beta^3}-\frac{\xi}{\beta}-\frac{\alpha}{\beta^2}\right).$$
This expression can be derived by following the procedure proposed in \cite{Lo2013}, which is based on the algebraic theory of Lie groups\footnote{In \cite{Buttler1996} an expression of the fundamental solution is obtained using Hermite polynomials and their properties.}.

\begin{proof}
With cumbersome calculations of derivatives, we can verify that
$G(t,s;x,\xi)$ is solution to the backward Kolmogorov equation in \eqref{VasPDE} for each $(\xi,t)$
\begin{equation}\label{eq:Vas}
    \partial_t G(t,s;x,\xi) + (\alpha + \beta x)  \partial_x G(t,s;x,\xi) + \frac{1}{2} \sigma^2\partial^2_{xx} G(t,s;x,\xi) - x\,G(t,s;x,\xi) =0\,.
\end{equation}
and of the forward adjoint equation for each $(x,s)$, with $s\geq t$,
\begin{equation}\label{EqKolm_adj}
-\partial_t G(t,s;x,\xi) - (\alpha + \beta \xi)  \partial_\xi G(t,s;x,\xi) + \frac{1}{2} \sigma^2\partial^2_{\xi\xi} G(t,s;x,\xi) - (\beta+\xi)G(t,s;x,\xi) =0\,.
\end{equation}
Moreover, for each compactly supported smooth function
$\phi\in\mathcal{C}_0^\infty(\mathbb{R})$, with the usual change of variable
$y=\displaystyle\frac{\xi-\mu(t,s;x)}{\Sigma(t,s)}$, it holds that
$$
\begin{array}{rcl}
  \displaystyle\int_{-\infty}^{\infty} G(t,s;x,\xi) \phi(\xi)\, \dd\xi & = & \displaystyle\int_{-\infty}^{\infty}\frac{C_1(t,s;\xi)C_2(t,s;\xi)}{\sqrt{2\pi}\Sigma(t,s)} \exp \left(-\frac{(\xi-\mu(t,s;x))^2}{2\Sigma^2(t,s)}\right)\phi(\xi) \, \dd\xi.
\end{array}
$$
$$
\begin{array}{rcl}
   & = & \displaystyle\int_{-\infty}^{\infty}\frac{C_1(t,s;y)C_2(t,s;y)}{\sqrt{2\pi}\Sigma(t,s)} \de^ {-\frac{y^2}{2}}\phi(y
\Sigma(t,s)+\mu(t,s;x))\Sigma(t,s)\, \dd y \\
   & = &  \displaystyle\frac{1}{\sqrt{2\pi}} \int_{-\infty}^{\infty}f(t,s;x,y) \de^ {-\frac{y^2}{2}} \, \dd y=:\mathcal{I}(t,s;x),
\end{array}
$$
having defined
$$
f(t,s;x,y):=\big\{C_1(t,s;y)C_2(t,s;y)\phi(y
\Sigma(t,s)+\mu(t,s;x))\big\}
$$
so that
$$
\begin{array}{rcl}
 \displaystyle\frac{1}{\sqrt{2\pi}}\min_{y\in \mathbb{R}}f(t,s;y,x) \int_{-\infty}^{\infty} \de^ {-\frac{y^2}{2}} \, \dd y  & \leq \mathcal{I}(t,s;x) \leq &
 \displaystyle\frac{1}{\sqrt{2\pi}}\max_{y\in \mathbb{R}}f(t,s;y,x) \int_{-\infty}^{\infty} \de^ {-\frac{y^2}{2}} \, \dd y\,.
\end{array}
$$
Moreover, for $(s-t)\rightarrow 0$, the following limits hold
$$
\Sigma(t,s)\rightarrow 0\qquad \mu(t,s;x)\rightarrow x\qquad C_1(t,s;y)C_2(t,s;y)\rightarrow 1\qquad f(t,s;x,y)\rightarrow x.
$$
Hence
$$
\lim_{(s-t)\rightarrow 0}\mathcal{I}(t,s;x)=\phi(x)=\int_{-\infty}^{\infty}  \phi(\xi)\delta_{x}(\xi)\, \dd\xi
$$
and we can conclude that the limit of $G(t,s;x,\xi)$ for $(s-t)\rightarrow
0$, in distributional sense, is $\delta_{x}(\xi)$.
\end{proof}

\subsection{Localization of the computational domain}\label{Localization}

In this section, we focus on a possible strategy for localizing the PDE system (\ref{eq:lastPDE})-(\ref{finalSystem}) to a suitable bounded computational domain, assuming that we know the Green's function associated with the PDE. 
The basic idea is as follows: we fix a region of {observation} \([x_{\text{min}}, x_{\text{max}}]\) and expand it to a wider domain \([\bar{A}, \bar{\bar{A}}]\) such that:
\begin{enumerate}
    \item for points $x$ within the region of {observation} and for any terminal condition $g$, the integral in equation (\ref{solG}) is accurately approximated by an integral over \([\bar{A}, \bar{\bar{A}}]\)
\begin{equation}\label{eq:val_atteso}
 f(t,x)=\int_{\mathbb{R}} G(t,r_{k+1};x,\xi) g(\xi) \, \dd\xi\approx \int_{\bar{A}}^{\bar{\bar{A}}} G(t,r_{k+1};x,\xi) g(\xi) \, \dd\xi;
\end{equation}
\item in the presence of a jump on the  { RFR} at $r_k \in {\cal S}$, for points $x \in [x_{\text{min}}, x_{\text{max}}]$, the support of ${Q}_{m(k)}$ is mostly contained within \([\bar{A}, \bar{\bar{A}}]\) and hence
\begin{equation}\label{eq:int_salti}
f(r_k^-,x)=\int_{\mathbb{R}} f(r_k,x+z) \, {Q}_{m(k)}(dz)=\int_{\mathbb{R}} f(r_k,x+z)\varphi_{m(k)}(z)\, \dd z\approx
\int_{\bar A}^{\bar{\bar{A}}} f(r_k,\xi)\varphi_{m(k)}(\xi-x)\, \dd\xi\,.
\end{equation}
\end{enumerate}
Although a heuristic approach is generally possible, in the specific case of the Vasicek model, we have developed a systematic strategy to determine $\bar{A}$ and $\bar{\bar{A}}$.

The first task, outlined in equation \eqref{eq:val_atteso}, can be addressed by applying the following lemma.
\begin{lemma} The Green's function \eqref{GF} related to the Vasicek PDE \eqref{VasPDE} satisfies
$$
\int_\mathbb{R}G(t,T;x,\xi)\de^{\frac{\xi-x}{\beta}}\, \dd\xi=\de^{\left(\frac{\sigma^2}{2\beta^2}+\frac{\alpha}{\beta}\right)(T-t)}.$$ 
\end{lemma}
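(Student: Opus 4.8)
The plan is to exploit the defining property of the Green's function rather than to evaluate the Gaussian integral directly. Since $G(t,T;x,\xi)$ is the fundamental solution of the Vasicek operator (as just established), for any admissible terminal datum $g$ the function
$$u(t,x) \coloneqq \int_\mathbb{R} G(t,T;x,\xi)\, g(\xi)\, d\xi$$
solves the backward PDE \eqref{VasPDE}, i.e. $\partial_t u + (\alpha + \beta x)\partial_x u + \frac{1}{2}\sigma^2\partial_{xx}^2 u - x u = 0$ on $[t,T)\times\mathbb{R}$, with terminal condition $u(T,\cdot) = g$. I would specialize this to $g(\xi) = e^{\xi/\beta}$, so that the claimed identity becomes equivalent to the assertion $u(t,x) = e^{x/\beta}\,e^{(\frac{\sigma^2}{2\beta^2}+\frac{\alpha}{\beta})(T-t)}$.

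The next step is to verify this explicit candidate. Set $w(t,x) \coloneqq \exp\!\big(\tfrac{x}{\beta} + (\tfrac{\sigma^2}{2\beta^2}+\tfrac{\alpha}{\beta})(T-t)\big)$. Then $\partial_x w = \tfrac{1}{\beta}w$, $\partial_{xx}^2 w = \tfrac{1}{\beta^2}w$ and $\partial_t w = -(\tfrac{\sigma^2}{2\beta^2}+\tfrac{\alpha}{\beta})w$; substituting into the operator gives
$$w\Big[-\tfrac{\sigma^2}{2\beta^2}-\tfrac{\alpha}{\beta} + \tfrac{\alpha+\beta x}{\beta} + \tfrac{\sigma^2}{2\beta^2} - x\Big] = 0,$$
so $w$ solves \eqref{VasPDE}, and clearly $w(T,x) = e^{x/\beta} = g(x)$. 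Equivalently, one may note that the substitution $v = w\, e^{-x/\beta}$ reduces \eqref{VasPDE} to the transformed equation \eqref{eq:Vas_cambio}, whose spatially constant profile $v(t) = e^{(\frac{\sigma^2}{2\beta^2}+\frac{\alpha}{\beta})(T-t)}$ solves the resulting ODE $v'(t) + (\tfrac{\sigma^2}{2\beta^2}+\tfrac{\alpha}{\beta})v(t) = 0$ with $v(T)=1$.

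Both $w$ and $u$ belong to the class of functions of exponential growth for which Lemma \ref{LemmaExistence} guarantees uniqueness: the datum $e^{\xi/\beta}$ satisfies the required exponential growth bound, and the Gaussian decay of $G$ in $\xi$ dominates the linear-exponential factors $C_1,C_2$ together with the test weight $e^{\xi/\beta}$, so the defining integral converges and inherits exponential growth. Hence $u \equiv w$, and dividing by $e^{x/\beta}$ yields
$$\int_\mathbb{R} G(t,T;x,\xi)\, e^{(\xi-x)/\beta}\, d\xi = e^{(\frac{\sigma^2}{2\beta^2}+\frac{\alpha}{\beta})(T-t)}.$$

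I expect the only genuine obstacle to be the justification of this uniqueness/representation step for an \emph{unbounded} datum: one must confirm that $g(\xi)=e^{\xi/\beta}$ lies in the admissible exponential-growth class of Lemma \ref{LemmaExistence} and that $u$ is well-defined and of exponential growth, so that the lemma's uniqueness applies and forces $u=w$. Everything else is a one-line differentiation check, which is precisely why the PDE/uniqueness route is preferable to grinding through the explicit Gaussian quadrature.
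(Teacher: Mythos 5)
Your proof is correct, and it reaches the identity by a genuinely different route from the paper's. The paper works at the \emph{kernel} level: using the same two exponential transformations that underlie your computation (multiplication by $e^{(\xi-x)/\beta}$ and by $e^{-(\frac{\sigma^2}{2\beta^2}+\frac{\alpha}{\beta})(T-t)}$), it observes that $G(t,T;x,\xi)\,e^{\frac{\xi-x}{\beta}}$ is the fundamental solution of the transformed equation \eqref{eq:Vas_cambio}, and that after the additional time-exponential factor one obtains the fundamental solution of a pure drift--diffusion equation (no potential, no zeroth-order term), which by the Kolmogorov theorem is a transition probability density and hence integrates to one in $\xi$. You instead work at the \emph{solution} level: you exhibit the explicit solution $w(t,x)=e^{x/\beta+(\frac{\sigma^2}{2\beta^2}+\frac{\alpha}{\beta})(T-t)}$ of \eqref{VasPDE} with terminal datum $e^{x/\beta}$ (your substitution check is right), and identify it with the Green's-function integral via uniqueness in the exponential-growth class of Lemma~\ref{LemmaExistence}. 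The trade-off is clear: the paper's route needs no discussion of uniqueness classes for unbounded data, but relies on the probabilistic normalization of transition densities; your route replaces that external fact by the paper's own well-posedness result, at the price of the admissibility checks you correctly flag — and those do go through, since the Vasicek coefficients (constant $\alpha$, $\beta$, $\sigma>0$) satisfy Assumption~\ref{hp:FeynmanKac}, the datum $e^{\xi/\beta}$ has exponential growth (note $\beta<0$, so $|e^{\xi/\beta}|\le e^{|\xi|/|\beta|}$), and the Gaussian decay of \eqref{GF} in $\xi$ dominates the linear-in-$\xi$ exponents of $C_1C_2$ and of the weight, so $u$ is well defined, of exponential growth in $x$, and is the Feynman--Ka\v{c}/Green representation \eqref{solG} of the unique solution (Theorem~\ref{th:FK}). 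Your parenthetical observation that the spatially constant profile solves the ODE $v'+(\frac{\sigma^2}{2\beta^2}+\frac{\alpha}{\beta})v=0$ coming from \eqref{eq:Vas_cambio} is precisely the solution-level shadow of the paper's kernel-level argument, which makes the two proofs pleasantly complementary.
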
 
\begin{proof}
From \eqref{solG}
$$
f(t,x)=\int_{\mathbb{R}} G(t,r_{k+1};x,\xi) f(r_{k+1}^-,\xi) \, \dd\xi.
$$
Considering the change of variable suggested at page \pageref{eq:Vas_cambio}, i.e., $v(t,x)=f(t,x)\de^{-x/\beta}$, we get
$$
v(t,x)\de^{x/\beta}=\int_{\mathbb{R}} G(t,r_{k+1};x,\xi) v(r_{k+1}^-,\xi)\de^{\xi/\beta} \, \dd\xi,
$$
from which it follows that
$G(t,T;x,\xi)\de^{\frac{\xi-x}{\beta}}$ is the fundamental solution of equation 
\eqref{eq:Vas_cambio}. Then $w(t,x)= v(t,x)\de^{-\left( \frac{\sigma^2}{2\beta^2} + \frac{\alpha}{\beta} \right) (T-t)}$
satisfies
$$
\partial_t w(t,x) + \dfrac{\sigma^2}{2} \partial_{x x}^2 w(t,x) + \left(\alpha + \beta x + \dfrac{\sigma^2}{\beta}\right) \partial_{x} w(t,x)=0,  
$$
whose fundamental solution $\displaystyle G(t,T;x,\xi)\de^{\frac{\xi-x}{\beta}}
\de^{-\left( \frac{\sigma^2}{2\beta^2} + \frac{\alpha}{\beta} \right)(T-t)}$ is known from Kolmogorov theorem to be a transition probability density function. Hence
\begin{equation*}
\int_\mathbb{R}G(t,T;x,\xi)\de^{\frac{\xi-x}{\beta}}
\de^{-\left( \frac{\sigma^2}{2\beta^2} + \frac{\alpha}{\beta} \right)(T-t)}\, \dd\xi=1\,. \qedhere
\end{equation*}
\end{proof}
\noindent We can therefore use a root finding algorithm and compute $M$ such that
$$\int_{x_{\min}-M}^{x_{\max}+M}G(t,T;x,\xi)\de^{\frac{\xi-x}{\beta}} \, \dd\xi\approx \de^{\left(\frac{\sigma^2}{2\beta^2}+\frac{\alpha}{\beta}\right)(T-t)}$$ holds for every $x\in[x_{\min},x_{\max}]$, at a desired accuracy.

Then, to achieve the second task  \eqref{eq:int_salti}, in the case of normally distributed jumps on the short-rate with density $\varphi_{m(k)}$, we search for $\bar{M}$ such that
$$\int_{x_{\min}-\bar{M}}^{x_{\max}+\bar{M}}\varphi_{m(k)}(\xi-x) \,\dd\xi\approx 1, \qquad \forall x\in[x_{\min},x_{\max}],$$
at a desired accuracy, concluding that $\bar{\bar{A}}=x_{\max}+\max\{\bar{M},M\}$ and $\bar{A}=x_{\min}-\max\{\bar{M},M\}$.

\bigskip
\begin{remark}
    In the case of discretely distributed jumps, it is sufficient to enlarge the region of {observation} to cover the interval $[x_{\min} - 3m_j, x_{\max} + 3m_j]$, which is likely already contained within the computational domain $[\bar{A}, \bar{\bar{A}}]$ determined by condition (\ref{eq:val_atteso}).
\end{remark}

\end{document}